\newtheorem{theorem}{Theorem}[section]
\newtheorem{proposition}[theorem]{Proposition}
\numberwithin{equation}{section}
\algrenewcommand\alglinenumber[1]{\tiny #1:}
\newcommand\restr[2]{{
  \left.\kern-\nulldelimiterspace 
  #1 
  \vphantom{\big|} 
  \right|_{#2} 
  }}
\newcommand{\argmin}[1]{\underset{#1}{\operatorname{argmin}}\;}
\newcommand{\ES}[1]{\textnormal{\tiny {#1}}}
\newcommand{\vvec}{\operatorname{vec}}
\newcommand{\dual}{Z}
\renewcommand{\arraystretch}{1.2}
\definecolor{damian}{RGB}{0,0,250}
\definecolor{xixi}{RGB}{50,200,50}
\definecolor{jarek}{RGB}{255,0,0}
\definecolor{tim}{RGB}{160,0,255}
\newcommand{\T}[1]{^\mathsf{#1}}
\newcommand{\SP}{\textrm{SpINNEr}}
\newcommand{\blind}{0}
\begin{document}

\def\spacingset#1{\renewcommand{\baselinestretch}%
{#1}\small\normalsize} \spacingset{1}

\if0\blind
{
  \title{\bf  A Sparsity Inducing Nuclear-Norm Estimator (SpINNEr) for Matrix-Variate Regression in Brain Connectivity Analysis}
  \author{Damian Brzyski$^a$, Xixi Hu$^b$, Joaquin Goni$^c$,	Beau Ances$^d$,\\
  Timothy W Randolph$^e$, Jaroslaw Harezlak$^f$\\
	\vspace{-3pt}
    $\mbox{}^{a}${\it \small Faculty of Pure and Applied Mathematics, Wroclaw University of Science and Technology, Wroclaw, Poland}\\
		\vspace{-3pt}
		$\mbox{}^{b}${\it \small Department of Statistics,  Indiana University, Bloomington,  IN, USA}\\
		\vspace{-3pt}
		$\mbox{}^{c}${\it \small Purdue University, West Lafayette, IN, USA}\\
		\vspace{-3pt}
		$\mbox{}^{d}${\it \small Washington University School of Medicine, St. Louis, MO, USA}\\
		\vspace{-3pt}
		$\mbox{}^{e}${\it \small Fred Hutchinson Cancer Research Center, Seattle, WA, USA}\\
		\vspace{-3pt}
		$\mbox{}^{f}${\it \small Department of Epidemiology and Biostatistics,  Indiana University, Bloomington, IN, USA}\\
   }
  \maketitle
} \fi

\if1\blind
{
  \bigskip
  \bigskip
  \bigskip
  \begin{center}
    {\LARGE\bf Nuclear and L1 Norms Marriage in Sparse and Low-Rank Regularized Matrix Estimation}
\end{center}
  \medskip
} \fi

\vspace{7pt}
\begin{abstract}
Classical scalar-response regression methods treat covariates as a vector and estimate a corresponding vector of regression coefficients. In medical applications, however, regressors are often in a form of multi-dimensional arrays. For example, one may be interested in using MRI imaging to identify which brain regions are associated with a health outcome. Vectorizing the two-dimensional image arrays is an unsatisfactory approach since it destroys the inherent spatial structure of the images and can be computationally challenging. We present an alternative approach---regularized matrix regression---where the matrix of regression coefficients is defined as a solution to the specific optimization problem. The method, called SParsity Inducing Nuclear Norm EstimatoR (SpINNEr), simultaneously imposes two penalty types on the regression coefficient matrix---the nuclear norm and the lasso norm---to encourage a low rank matrix solution that also has entry-wise sparsity. A specific implementation of the alternating direction method of multipliers (ADMM) is used to build a fast and efficient numerical solver. Our simulations show that SpINNEr outperforms others methods in estimation accuracy when the response-related entries (representing the brain's functional connectivity) are arranged in well-connected communities. SpINNEr is applied to investigate associations between HIV-related outcomes and functional connectivity in the human brain.
\end{abstract}

\noindent%
{\it Keywords:} Nuclear plus L1 norm, Low-rank and sparse matrix, Spectral regularization, Penalized matrix regression, Clusters in brain network

\spacingset{1.45}

\section{Introduction}

Regression problems where the response is a scalar and the predictors constitute a multidimensional array arise often in medical applications where a matrix or a high dimensional array of measurements is collected for each subject. For example, it is of clinical interest to understand associations between: (a) alcoholism and the electrical activity of different brain regions over time collected from electroencephalography (EEG) \citep{Li-dimension-2010}; (b) cognitive function and three-dimensional white-matter structure data collected from diffusion tensor imaging (DTI) \citep{Goldsmith-smooth-2014} for patients with multiple sclerosis (MS); and (c) cognitive impairment and brain's metabolic activity data collected from three-dimensional positron emission tomography (PET) imaging \citep{Wang-regularized-2014}.  Our work focuses on the problem of identifying brain network connections that are associated with neurocognitive measures for HIV-infected individuals. The outcome (response) is a continuous variable and the predictors are matrix representations of functional connectivity between the brain's cortical regions.

Biophysical considerations motivate our interest in estimating a matrix of regression coefficients that has the following two properties: (i) it should be relatively sparse, since we aim to identify connections that most strongly predict the outcome; and more importantly, (ii) the response-related connections form clusters, since brain activity networks are known to consist of densely connected regions. These two properties translate to the coefficient matrix having relatively small clusters, or blocks of nonzero entries, which implies that it is low-rank. Hence, we aim to solve the matrix regression problem by estimating a coefficient matrix that is both sparse and low-rank. To further illustrate our approach, consider the three matrices in Figure~\ref{fig_LowRankSparseMatrices}. The one in the left panel is sparse, but full-rank, the one on the right panel is low-rank, but not sparse, while the one in the middle panel is both low-rank and sparse, which is the structure we are interested in. To find such a solution, we propose a regularization method called \textit{SParsity Inducing Nuclear Norm EstimatoR} ($\SP$).
\begin{figure}[ht]
\centering
\begin{subfigure}{.32\textwidth}
  \centering
  \includegraphics[width=1\linewidth]{./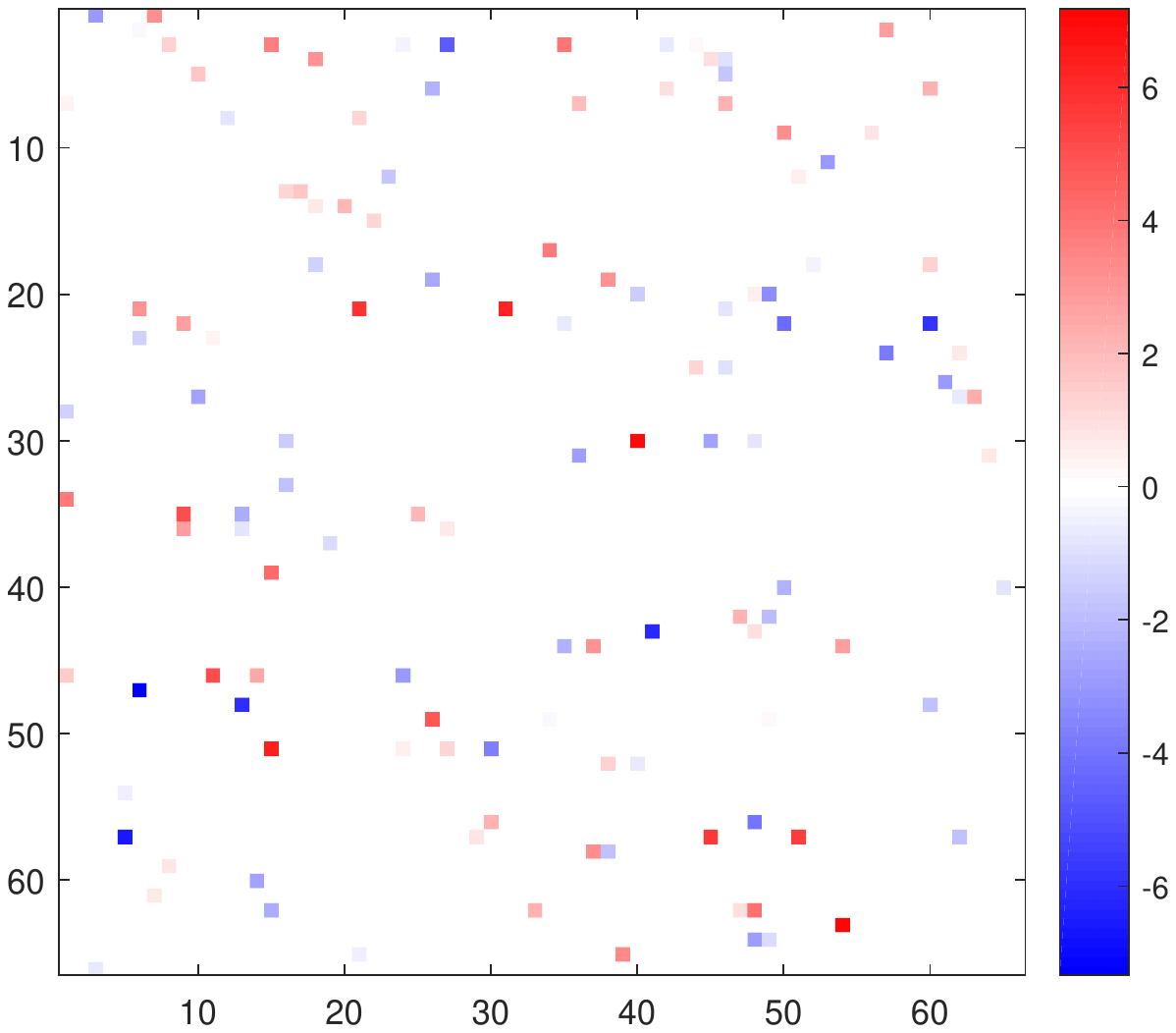}
  \caption{Sparse, full-rank}
\end{subfigure}%
\begin{subfigure}{.32\textwidth}
  \centering
  \includegraphics[width=1\linewidth]{./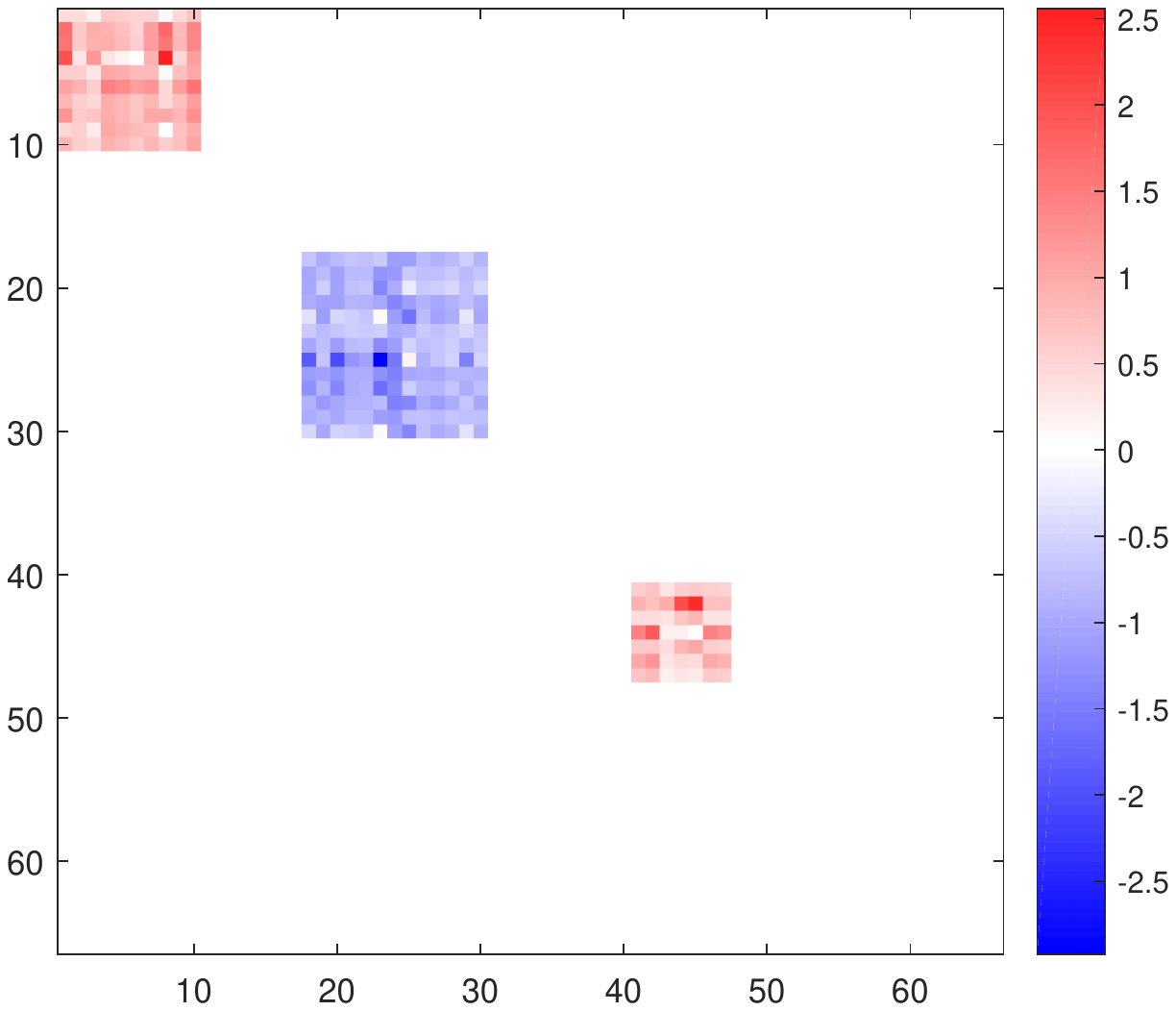}
  \caption{Low-rank and sparse}
\end{subfigure}
\begin{subfigure}{.32\textwidth}
  \centering
  \includegraphics[width=1\linewidth]{./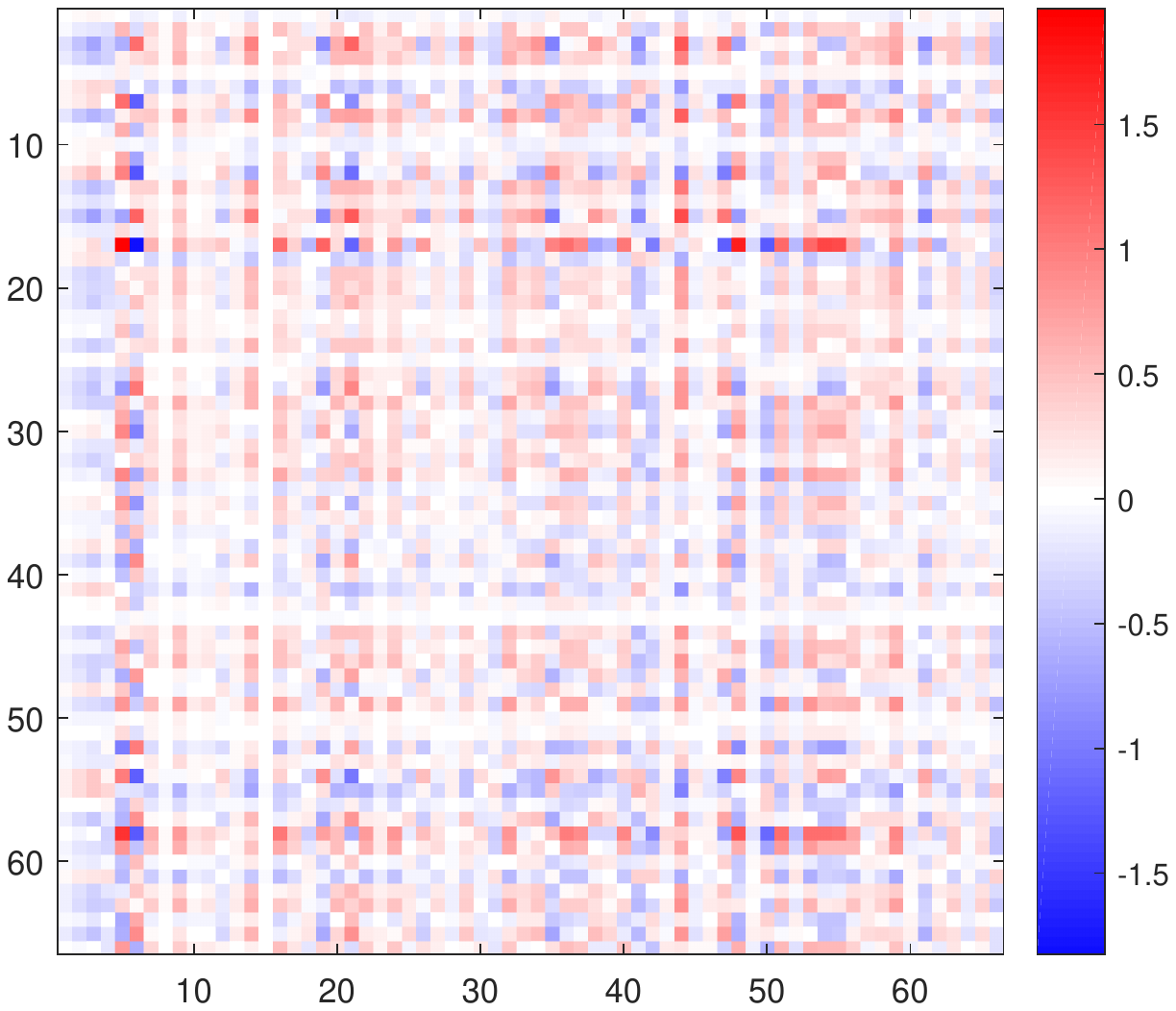}
  \caption{Low-rank, not sparse}
\end{subfigure}
\caption{Illustrative example of (a) a matrix that is sparse, but full-rank; (c) a matrix that is low-rank, but not sparse; and (b) a matrix that is both low-rank and sparse.}
\label{fig_LowRankSparseMatrices}
\end{figure}

Several regularization methods have been proposed for regression problems where the response is a scalar and the predictors constitute a multidimensional array or tensor. These methods fall mainly along two investigative directions. The first treats the multidimensional array of predictors as functional data. Among the early efforts, \cite{Reiss-functional-2010} extended their functional principal component regression method for one-dimensional signal predictors \citep{Reiss-functional-2007} to two-dimensional image predictors. Their method is based on B-splines with a penalty on the roughness of the coefficient function which encourages local structure but does not impose constraints on rank or sparsity. \cite{Wang-regularized-2014} developed a regularized wavelet-based approach that induces sparsity in the coefficient function. The second line of research treats images as tensors rather than as functional data. \cite{Zhou-tensor-2013} proposed a tensor regression framework that achieves dimension reduction through fixed-rank tensor decomposition \citep{Kolda-tensor-2009}. They obtain the estimates by a regularized maximum likelihood approach. For regression problems with matrix covariates, \cite{Zhou-regularized-2014} proposed spectral regularization where the penalty term is a function of the coefficient matrix singular values. Using $\ell_1$-norm of the singular values as penalty gives rise to a nuclear norm regression which induces a low-rank structure on the coefficient matrix. Our work builds upon \cite{Zhou-regularized-2014} by inducing sparsity of the coefficient matrix in terms of both its rank (low-rank) and the number of its nonzero entries (sparse). Under a Bayesian framework, \cite{Goldsmith-smooth-2014} used a prior distribution on latent binary indicators to induce sparsity and spatial contiguity of relevant image locations, and appealed to Gaussian Markov random field to induce smoothness in the coefficients.

The approaches summarized above are insufficient for finding a coefficient matrix that is both sparse and low-rank. More specifically, for regularization approaches based on functional regression, \cite{Reiss-functional-2010} do not impose conditions on sparsity rank, while \cite{Wang-regularized-2014} do not impose any constraint on rank. For approaches based on tensor decomposition, the method of \cite{Zhou-tensor-2013} can potentially induce sparsity via regularized maximum-likelihood, but the rank of the solution must be pre-specified and fixed prior to model fitting. In other words, the rank is not determined in a data-driven manner. \cite{Zhou-regularized-2014} lifted the fixed-rank constraint by using a nuclear norm---a convex relaxation of rank---as penalty, but the solution may not be sparse. Finally, \cite{Goldsmith-smooth-2014} impose sparsity and spatial smoothness, which implicitly reduces complexity (and possibly rank), but this approach assumes spatially adjacent regions are similarly associated with the response.

In contrast to all of these methods, $\SP$ combines a nuclear-norm penalty with an $\ell_1$-norm penalty that simultaneously imposes low-rank and sparsity on the coefficient matrix. Specifically, the low-rank constraint induces accurate estimation of coefficients inside response-related blocks, while outside of these blocks the sparsity constraint encourages zeros. These blocks, however, are not presumed to consist of only spatially adjacent brain regions and so this sparse-and-low-rank approach is more flexible and physiologically meaningful.

While $\SP$ seeks a singly regression coefficient matrix that is both sparse and low-rank, others have proposed estimating two structures: one low-rank and one sparse. For graphical models, in particular, \cite{Chandrasekaran-latent-2012} proposed a method for estimating a precision matrix that decomposes into the sum of a sparse matrix and a low-rank matrix when there are latent variables. Their estimation is based on regularized maximum likelihood where sparsity is induced by the $\ell_1$-norm and low-rank is induced by the nuclear norm. Building upon \cite{Chandrasekaran-latent-2012}, \cite{Ciccone-robust-2019} imposed the additional constraint that the sample covariance matrix of the observed variables is close to the true covariance in terms of Kullback-Leibler divergence, and proposed a computational solution based on the alternating direction method of multipliers (ADMM) algorithm. While \cite{Chandrasekaran-latent-2012} and \cite{Ciccone-robust-2019} assume observations to be i.i.d., \cite{Foti-sparse-2016} extended the ``sparse plus low-rank'' framework to graphical models with time series data, whereas \cite{Basu-low-rank-2018} considered vector autoregressive models and directly imposed the decomposition on the transition matrix. In the matrix completion literature, the ``sparse plus low-rank'' decomposition has also been exploited for algorithmic concerns, enabling more efficient storage and computation \cite{Mazumder-spectral-2010, Hastie-matrix-2015}. Our work is distinct from these proposals in that we obtain a single matrix that is simultaneously sparse and low-rank by imposing two penalties on the same matrix, rather than separately penalizing two components of a matrix.

The rest of the article is organized as follows. Section~\ref{sec_Problem} further motivates and describes the objective of finding response-related clusters and translate it to a problem of finding a low-rank and sparse coefficient matrix. Section~\ref{sec_Method} formulates the objective as an optimization problem, characterizes properties of its solution, and develops an algorithm for numerical implementation. Simulation experiments are summarized in Section~\ref{sec_SimulationExperiments} and an application to brain imaging data is described in Section~\ref{sec_BrainDataResults}. We conclude with discussion in Section~\ref{sec_Discussion}. Technical derivations of the algorithm are presented in the Appendix.

\section{Clusters recovery problem}
\label{sec_Problem}
\subsection{Statistical model}
Assume we observe a real-valued response, $y_i$, and a $p\times p$ matrix, $A_i$, for each subject, $i=1,\ldots,n$. We additionally assume a vector of $m$ covariates, $X_i$, such that the $n\times m$ matrix, $X$, with rows $X_i$, for each subject, has independent columns (hence $m\leq n$).

Motivated by brain imaging applications, $A_i$ is viewed as an adjacency matrix of connectivity information (structural or functional), each $X_i$ corresponds to a vector of demographic covariates and an intercept (i.e., the first entry of $X_i$ is 1). Additionally, we assume that there exists an (unknown) $p\times p$ matrix $B$ and (unknown) $m\times 1$ vector $\beta$ whose entries represent coefficients to be estimated in the regression equation
\begin{equation}
\label{model}
y_i = \langle A_i, B \rangle + X_i \beta   + \varepsilon_i, \quad \textrm{for } i=1,\ldots, n,
\end{equation}
where $\langle A_i, B \rangle:= \big\langle \vvec(A_i),\, \vvec(B) \big\rangle = \operatorname{tr}\big(A_i\T{T}B\big)$ is the Frobenius inner product and $\varepsilon\sim\mathcal{N}(0,\sigma^2I_n)$.
Unless $n$ is unusually large (greater than $p(p-1)/2$\,), this is an informal statement of the problem which does not have a unique solution for $B$ and $\beta$ without further constraints. The focus of this work is on the rigorous implementation of constraints that lead to a biologically meaningful regression model having a unique solution.

We will use the equivalent graph description of the problem to build the intuition behind the assumed model \eqref{model}. In that interpretation, brain regions are viewed as $p$ nodes in a graph and the connectivity information of $i$th subject for these regions is represented by a $p\times p$ matrix $A_i$ with zeros on the diagonal.  The off-diagonal entries, $A_i(j,l)$, of $A_i$ denote weights of connectivity between regions $j$ and $l$; positive as well as negative weights are acceptable. If $A_i(j,l)$ is positive, its value indicates how strongly regions $j$ and $l$ are connected, while the magnitude of negative entry indicates the level of dissimilarity.  In \eqref{model}, $B$ denotes the (unknown) $p\times p$ matrix of regression coefficients whose $(j,l)$ entry, $B_{j,l}$, represents the association between the response variable and the connectivity across regions $j$ and $l$. We assume $B$ is symmetric and note that its diagonal entries $B_{j,j}$ are not included in the model, since each connectivity matrix $A_i$ has zeros on the diagonal. The main goal, therefore, is to estimate the off-diagonal entries of $B$ in a manner that reveals brain subnetwork structure that is associated with the response.  This structure is revealed by the clusters and hubs defined by the non-zero entries in $\hat{B}$, an estimate of $B$.

\subsection{Response-related clusters}
We are interested in identifying only significant brain-region connectivities and therefore we want to encourage the estimate, $\hat{B}$, to be sparse entry-wise. However, our most important goal is to protect the structure of response-related connectivities (i.e., the non-zero entries of $B$). Indeed, brain networks exhibit a so-called ``rich club" organization, meaning that there are relatively small groups of densely connected nodes \citep{Richclub, HubDet}. Recent studies have demonstrated that such hubs play an important role in information integration between different parts of the network \citep{Richclub}. This structure would be lost if the process of estimating $B$ only focused on sparsity. Thus, we want the estimation process to allow for a potentially cluster-structured form of $B$ so that it may more accurately reflect the association between brain connectivities and a phenotypic outcome. However, estimating $B$ using a two-step process would imply that we must detect individual response-impacting edges \textit{prior to} forming clusters from the selected edges. Clearly some cluster-defining edges and/or hubs may be missed by this preliminary focus on sparsity.

As an illustration, consider a setting where there are many response-related connectivities between a few, say $k$, brain regions. Even if some of these effects are moderate, it is the entire cluster of regions that, as a whole, affects the response. However, if there are $(k-1)k/2$ connections in the cluster and only the strongest effects survive a sparsity-inducing lasso estimate \citep{LassoF} or other entry-wise thresholding techniques, then the ``systems level" information is lost and inferring relevant information about the clusters may become impossible.

In our work, we introduce the notion of \textit{response-related clusters} and focus on their selection rather than on accurate estimation of each individual effect which, in fact, would be impossible due to the limited sample size. Precisely, we define a set of nodes, $S$, to be a \textit{response-related cluster} (RRC) if for any two distinct indices $j, l \in S$ there is a path of edges from $S$ connecting $j$ and $l$, namely the sequence of the elements $i_1, \ldots, i_k \in S$ such as $i_1 = j$, $i_k= l$ and $B_{i_h, i_{h+1}}\neq 0$ for $h=1,\ldots, k-1$. We define $S$ to be a \textit{positive response-related cluster} if for every pair of its elements, $j$ and $l$, it holds that $B_{j,l}\geq 0$ (zeros are acceptable). Accordingly, $S$ is a \textit{negative response-related cluster} if $B_{j,l}\leq 0$ for all its elements. Motivated by the rich-club pattern of brain connectivity, we will assume that relatively few clusters of brain nodes spanning the subnetworks are strongly associated with $y$. If the brain regions are arranged in a cluster-by-cluster ordering, this assumption is simply reflected in a block-diagonal pattern of the matrix of regression coefficients, $B$, with blocks corresponding to RRCs (Figure \ref{fig:signalForm}\,).
\begin{figure}[ht]
\centering
\begin{subfigure}{.34\textwidth}
  \centering
	\includegraphics[width=1\linewidth]{./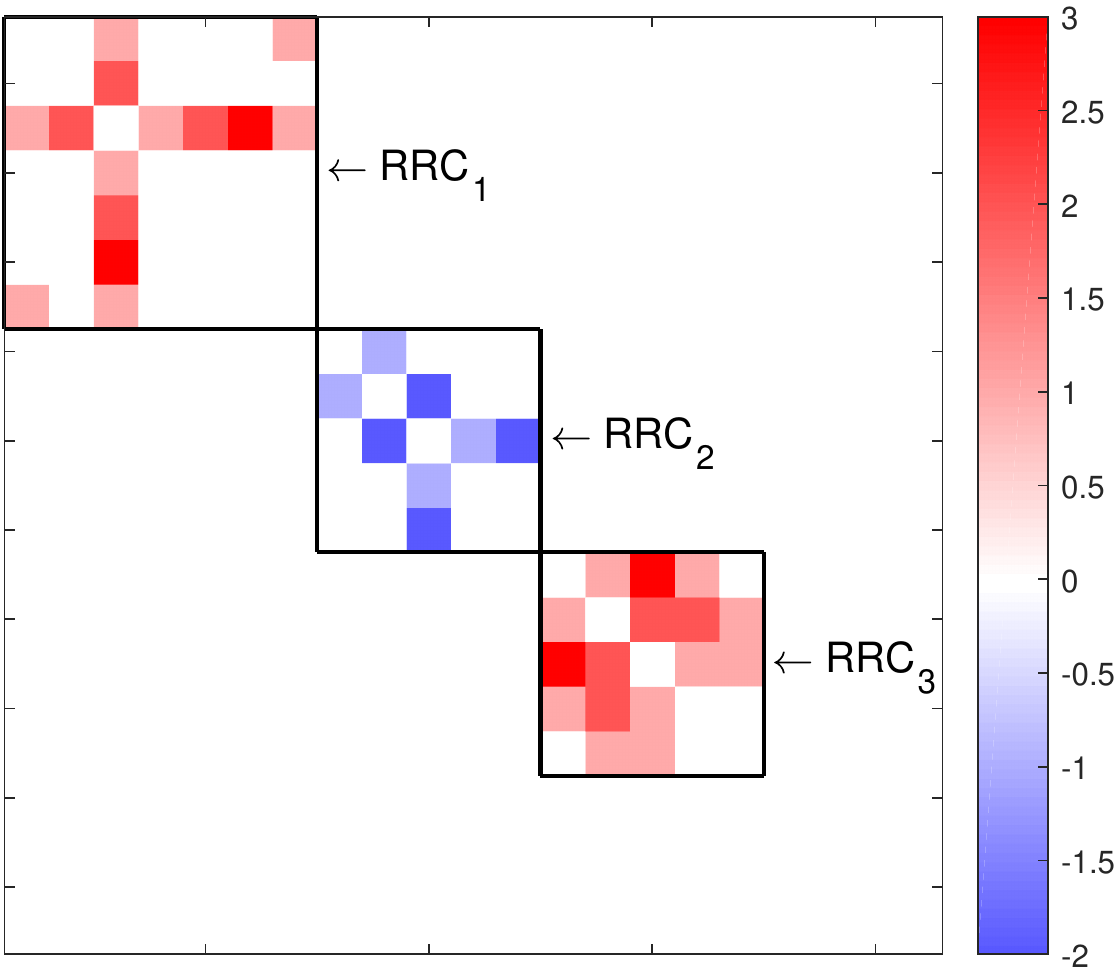}
  \caption{Structure of $B$}
	\label{fig:signalForm}
\end{subfigure}%
\ \
\begin{subfigure}{.195\textwidth}
  \centering
  \includegraphics[width=1\linewidth]{./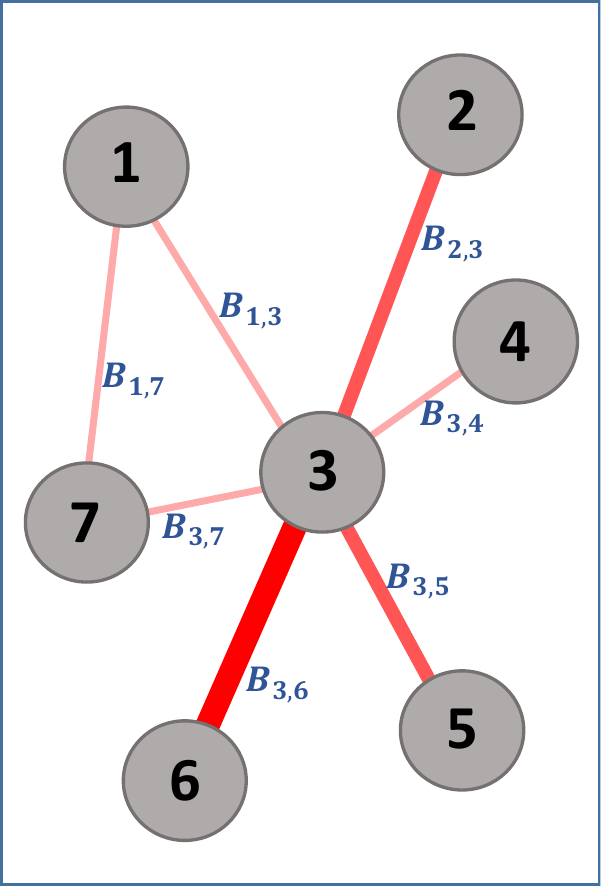}
  \caption{RRC$_1$}
\end{subfigure}
\
\begin{subfigure}{.195\textwidth}
  \centering
  \includegraphics[width=1\linewidth]{./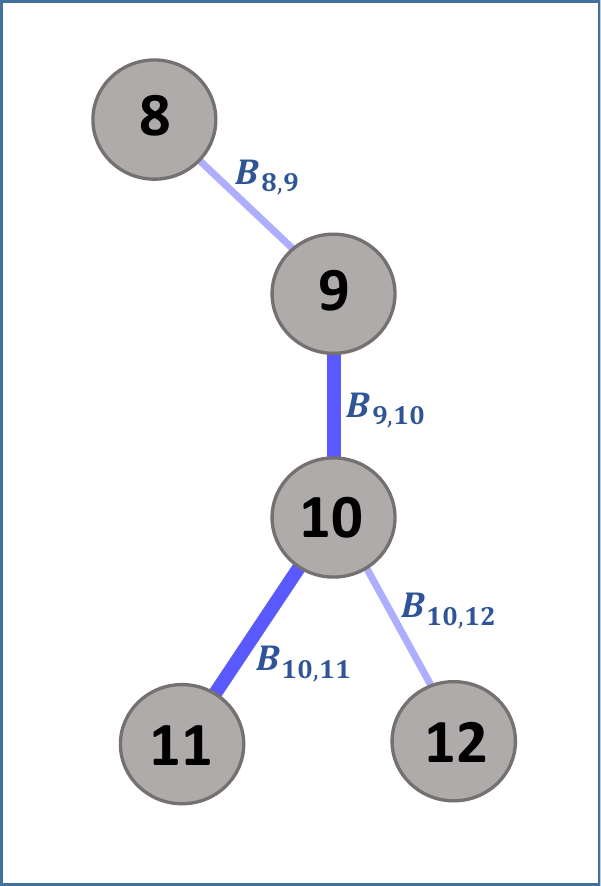}
  \caption{RRC$_2$}
\end{subfigure}
\
\begin{subfigure}{.195\textwidth}
  \centering
  \includegraphics[width=1\linewidth]{./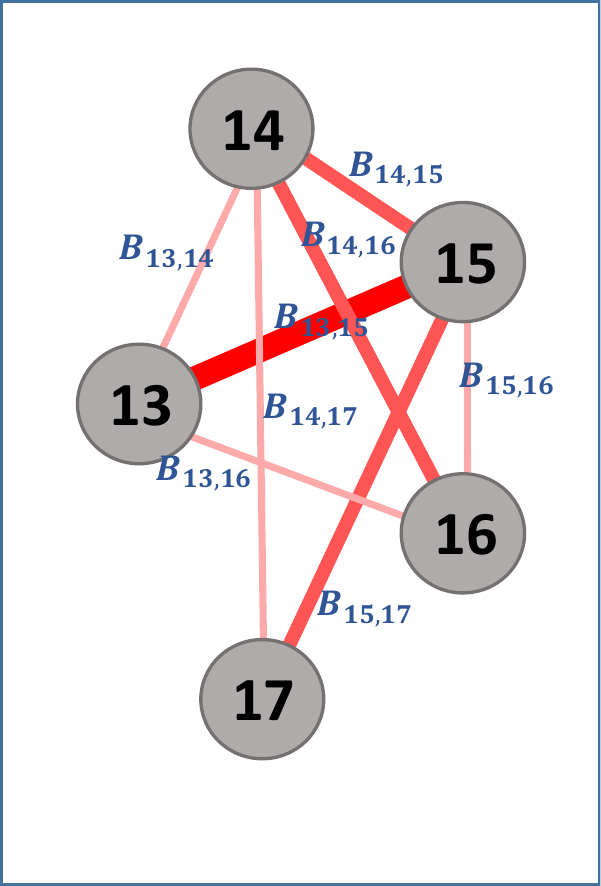}
  \caption{RRC$_3$}
\end{subfigure}
\begin{subfigure}{.32\textwidth}
  \centering
  \includegraphics[width=1\linewidth]{./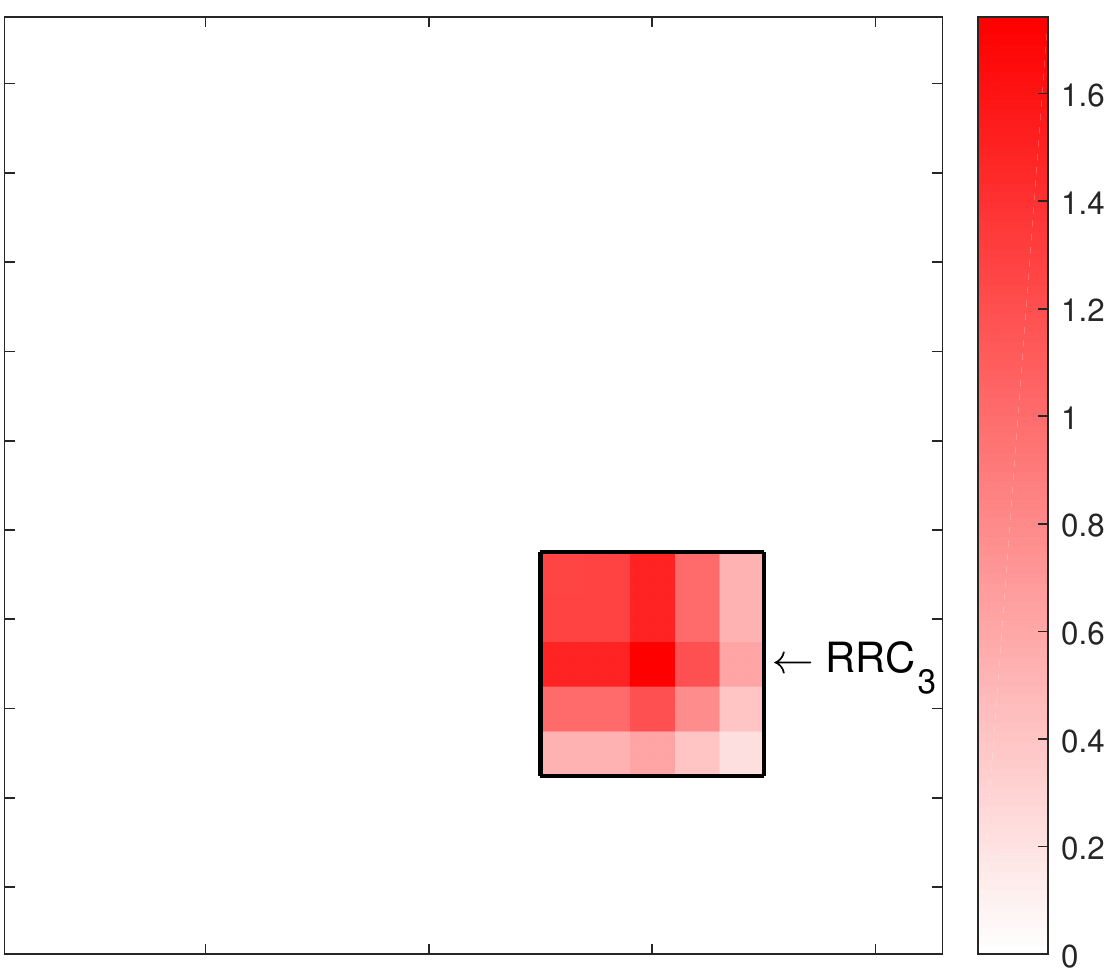}
  \caption{$r_1(B)$}
	\label{fig:1rank}
\end{subfigure}
\begin{subfigure}{.32\textwidth}
  \centering
  \includegraphics[width=1\linewidth]{./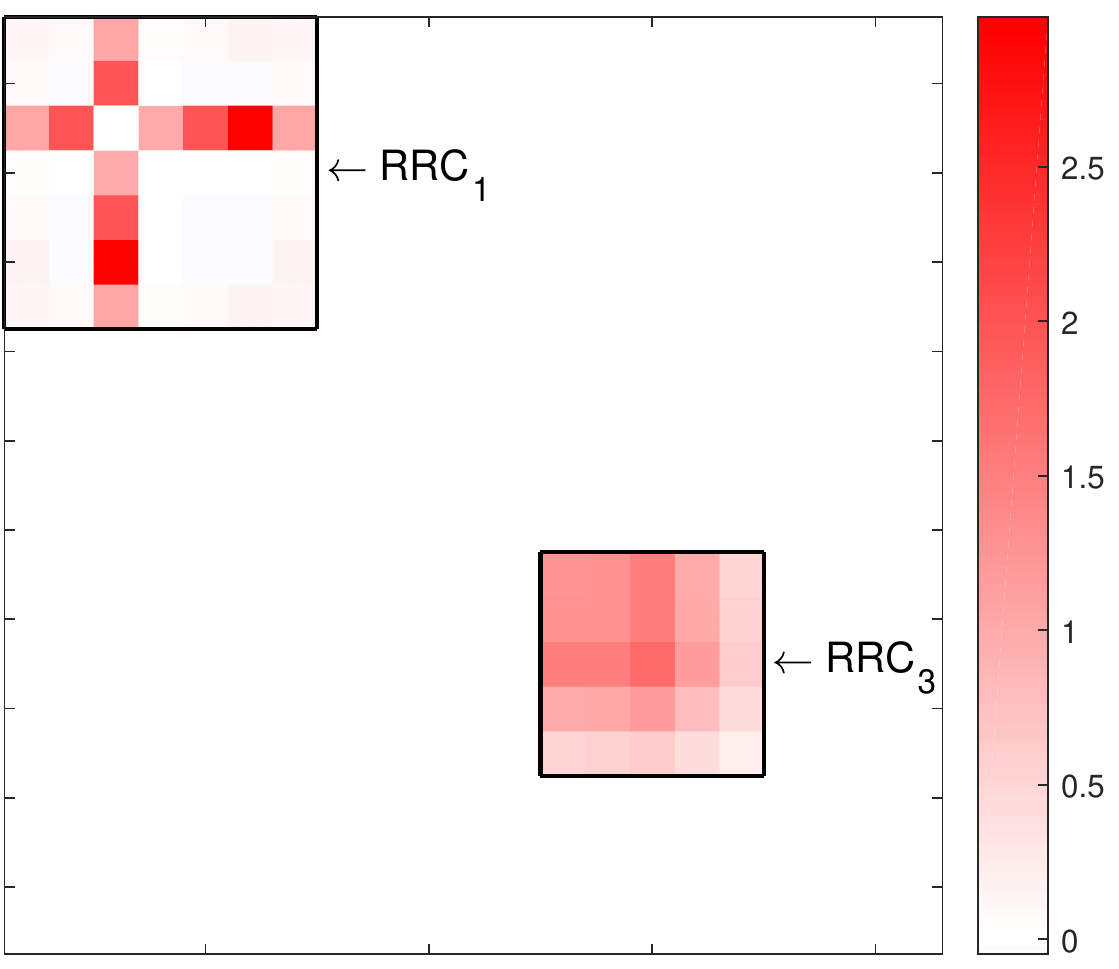}
  \caption{$r_3(B)$}
\end{subfigure}
\begin{subfigure}{.32\textwidth}
  \centering
  \includegraphics[width=1\linewidth]{./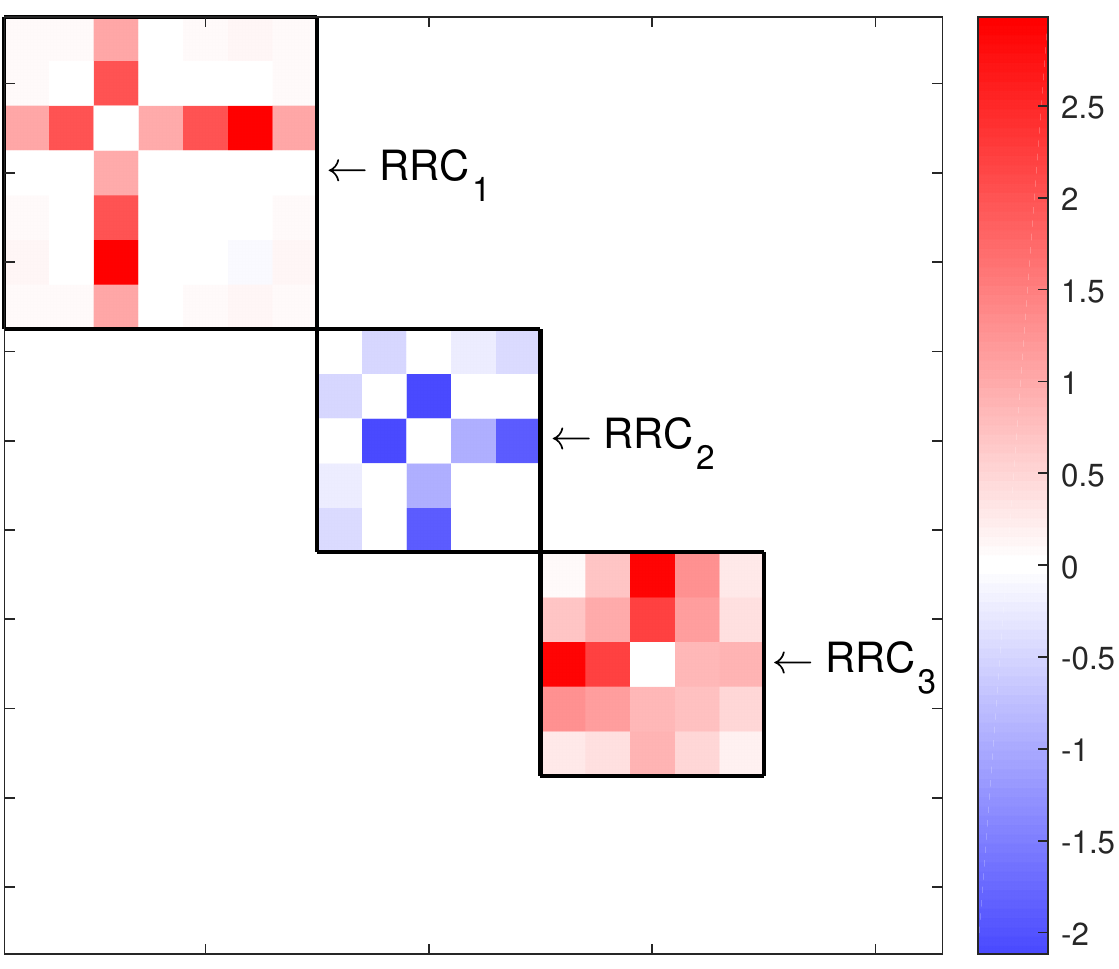}
  \caption{$r_6(B)$}
	\label{fig:6rank}
\end{subfigure}
\caption{The assumed form of $B$ after arranging nodes in the cluster-by-cluster ordering is presented in (a). Three RRC (two positive and one negative) of various connectivity patterns are present. Plots (b)-(d) present the equivalent graph representations of RRCs. Clusters are also easily recognizable in the $k$-rank best (with respect to the Frobenius norm) approximations of $B$, denoted by $r_k(B)$. Only one---the densest---cluster is showed by the 1-rank best approximation (e), while 3-rank and 6-rank approximations, (f) and (g), respectively, can reveal two and three clusters. Panel (g) shows that the low rank approximation of signal may reflect its structure well, although some edges may be lost (as the edge between nodes 1 and 7 in RRC$_1$) or falsely introduced (some edges in RRC$_2$ and RRC$_3$). }
\label{fig:example}
\end{figure}

\subsection{A marriage of sparsity and low rank}


As observed previously, most often we do not have a large enough set of samples to accurately estimate all entries of $B$. More precisely, suppose that we are considering the MLE of $B$ under the model \eqref{model}, without any constraints imposed on the estimates. This leads to the problem of minimizing ${\sum_{i=1}^n\big(y_i -\langle A_i, B \rangle\big)^2}$ with respect to the $p$ by $p$ matrix $B$ (we exclude $X$ and $\beta$ for clarity).
Such a problem does not have a unique solution unless we assume that all $p^2$ vectors $v_{j,k}: = [A_1(j,k),\ldots, A_n(j,k)]\T{T}$ are linearly independent, implying that $n\geq p^2$. If $p = 100$, which is rather a small number of regions compared to brain parcellations widely used in applications, this would necessitate observing data on at least $n=10,000$ subjects.  Of course we can limit the degrees of freedom by assuming (very reasonably) that $B$ is symmetric but this still requires $n>p(p-1)/2=4,950$. We can reduce this number further by assuming there exist relatively few RRCs, so that $B$ is sparse (Figure \ref{fig:signalForm}\,). Let $k$ and $s$ denote the number and the average size of RRCs, respectively. However, even with an oracle telling us precisely the locations of non-zeros and we restrict the estimation to these corresponding entries of $B$) there are still $O(s^2)$ observations required since we have, roughly, $ks^2/2$ entries to estimate (and this is the simplest scenario with all clusters having the same number of nodes).

Again consider a signal with a block pattern as in Figure \ref{fig:signalForm}. For such matrices, the first a few eigenvectors (i.e., those corresponding to the largest eigenvalues) are of a special form. Assuming that they are unique (up to a change in sign), each of them may have its non-zeros located inside exactly one RRC. In our example (presented in Figure \ref{fig:example}), the eigenvector $v_1$, corresponding to the largest eigenvalue $\lambda_1$, has all its non-zeros located inside RRC$_3$. Moreover, as a direct consequence of Perron-Frobenius theorem \citep{FrobTheorem} for irreducible matrices,  all coefficients of $v_1$ located inside RRC$_3$ must be either {\textsl{strictly positive} or \textsl{strictly negative}. That is, the non-zero entries of $v_1$ coincide precisely with the indices of RRC$_3$ and the matrix $\lambda_1v_1v_1\T{T}$ (i.e., the best rank-one approximation of $B$) and corresponds to the entire corresponding block in Figure \ref{fig:1rank}. In fact, the first several, say $\tilde{k} \ll p$, eigenvectors may effectively summarize the structure of signal via the best rank-$\tilde{k}$ approximation, $r_{\tilde{k}}(B): = \sum_{i=1}^{\tilde{k}}\lambda_iv_iv_i\T{T}$, of $B$ (Figure \ref{fig:6rank}).

Returning to the calculations, if we restrict attention to the general structure of $B$ reflected by its first $\tilde{k}$ eigenvectors and assume that each of them has roughly $s$ non-zeros within one of the RRCs, then we obtain $O(s)$, namely $\tilde{k}s$, coefficients to estimate (according to an oracle). This refocus on ``structured sparsity" not only reduces the computational requirements, but also adds to the physiological interpretation. 

In summary, we propose a method called \textit{SParsity Inducing Nuclear Norm EstimatoR} ($\SP$) that constructs a low-rank and sparse estimate of $B$ under the model \eqref{model}. It provides a principled approach to estimating $B$ via: (i) exploiting its dominant eigenvectors to accurately estimate block-structured coefficients and, simultaneously, (ii) imposing sparsity outside of these blocks. 

\section{Methodology}
\label{sec_Method}
\subsection{Penalized optimization}
With the goal of encouraging a regression coefficient (matrix) estimate to be both sparse and low-rank, $\SP$ employs two types of matrix norms which cooperate together as penalties to regularize the estimate: an $\ell_1$ norm imposes entry-wise sparsity and a {\it nuclear norm} achieves a convex relaxation of rank minimization (\cite{Candes2009, Rech2010}).
The nuclear norm (also referred to as the trace norm) of $B$, denoted by $\|B\|_*$, is defined as a sum of the singular values of $B$.

For a pair of prespecified nonnegative tuning parameters $\lambda_N$ and $\lambda_L$, $\SP$ is defined as a solution to the following optimization problem
\begin{equation}
\label{SPINNER}
\big\{\hat{B}^{\ES{S}},\ \hat{\beta}^{\ES{S}} \big\}:=\argmin {B, \beta}\ \left\{\frac12\sum_{i=1}^n\Big(y_i -\langle A_i, B \rangle - X_i\beta \Big)^2+ \lambda_N \big\|B\big\|_* + \lambda_L\big\|\vvec(W\circ B)\big\|_1\right\},
\end{equation}
where $W$ is a $p\times p$ symmetric matrix of nonnegative weights. Here, $W\circ B$ denotes the Hadamard product (entrywise product) of $W$ and $B$, hence $\big\|\vvec(W\circ B)\big\|_1 = \sum_{j,l = 1}^pW_{j,l}|B_{j,l}|$. By default, $W$ is the matrix with zeros on the diagonal and ones on the off-diagonal. Setting all $W_{j,j}$s as zeros protects the diagonal entries of $\hat{B}^{\ES{S}}$ from being shrunk to zero by the $\ell_1$ norm and leads to a more accurate recovery of a low-rank approximation of $B$ via the nuclear norm. Penalizing $\hat{B}^{\ES{S}}_{j,j}$s by $\ell_1$ norm in a situation when all $A_i$s have zeros on their diagonals would not be justified, since the diagonal $B_{j,j}$s (the nodes' effects) are not included in model \eqref{model}; i.e., there is no information about them in $y$. 
We note that with the default $W$,  $B\mapsto \big\|\vvec(W\circ B)\big\|_1$ does not define norm, however it is a convex function (and a seminorm) so \eqref{SPINNER} is a convex optimization problem.

$\SP$ is well-defined in the sense that a solution to \eqref{SPINNER} exists for any tuning parameters $\lambda_N, \lambda_L$  (see the subsection \ref{subs:basic}).  If $\lambda_N = 0$, \eqref{SPINNER} has infinitely many pairs $\big\{\hat{B}^{\ES{S}},\ \hat{\beta}^{\ES{S}} \big\}$ minimizing \eqref{SPINNER} with a default selection of weights, since the diagonal entries of $\hat{B}^{\ES{S}}$ do not impact the objective function. The off-diagonal elements of $\hat{B}^{\ES{S}}$ (which in that case reduces to a lasso estimate) will, however, be unique with probability one if the predictor variables are assumed to be drawn from a continuous probability distribution \citep[see,][]{tibshirani2012lasso}. In a situation of non-uniqueness, the name ``$\SP$'' will refer to the set of all solutions to \eqref{SPINNER}.

\subsection{Simplifying the optimization problem}
\label{simpOpt}

The problem in \eqref{SPINNER} is defined as an optimization with respect to both $B$ and $\beta$, but this can be reformulated so that, in practice, we need only solve a minimization problem with respect to $B$.  To see this, define the vector $w_B$ as $(w_B)_i: = y_i -\langle A_i, B \rangle$ and note that
\begin{equation*}
\hat{\beta}_{B}  := \argmin{\beta} \ \frac12\sum_{i=1}^n\Big( y_i-\langle A_i, B \rangle - X_i\beta \Big)^2  \\
                 = \argmin{\beta} \big\| w_b - X\beta\big\|^2 = \big(X\T{T}X\big)^{-1}X\T{T}w_B.
\end{equation*}
Since the penalty terms involving $B$ can be treated as additive constants, this solves \eqref{SPINNER} with respect to $\beta$.  Therefore, we can substitute $\hat{\beta}_{B}$ into \eqref{SPINNER} and transform the problem into one involving only $B$. For this, denote the projection onto the orthogonal complement of the range of $X$ as $H: = \mathbf{I}_n - X\big(X\T{T}X\big)^{-1}X\T{T}$. Also, denote by $\mathcal{A}$ the $n$-row matrix of stacked vectors from $\{\vvec(A_i)\}_{i=1}^n$. If we transform $y$ and $\mathcal{A}$ as $\widetilde{y}: = Hy$ and $\widetilde{\mathcal{A}}: = H\mathcal{A}$, then upon substitution of $\hat{\beta}_{B}$ into \eqref{SPINNER}, we can rewrite the model-fit term as
\begin{equation*}
\begin{split}
&\sum_{i=1}^n\Big(y_i -\langle A_i, B \rangle - X_i\hat{\beta}_B \Big)^2 =\sum_{i=1}^n\Big(y_i -\vvec(A_i)\T{T}\vvec(B) - X_i\hat{\beta}_B \Big)^2
 =\big\|y - \mathcal{A}\vvec(B) - X\hat{\beta}_B\big\|_2^2= \\
 &\Big\|y - \mathcal{A}\vvec(B) - X\big(X\T{T}X\big)^{-1}X\T{T}\big(y - \mathcal{A}\vvec(B)\big)\Big\|_2^2
 =\big\|Hy - H\mathcal{A}\vvec(B)\big\|_2^2 =\big\|\widetilde{y} - \widetilde{\mathcal{A}}\vvec(B)\big\|_2^2.
\end{split}
\end{equation*}
Hence, \eqref{SPINNER} can be equivalently represented as
\begin{equation}
\label{SPINNER3}
\left\{
\begin{array}{l}
\hat{B}^{\ES{S}}:=\argmin {B}\ \bigg\{ \big\|\widetilde{y} - \widetilde{\mathcal{A}}\vvec(B)\big\|_2^2  + \lambda_N \big\|B\big\|_* + \lambda_L\big\|\vvec(W\circ B)\big\|_1\bigg\}\\
\hat{\beta}^{\ES{S}}: = \big(X\T{T}X\big)^{-1}X\T{T}\big[y - \mathcal{A}\vvec(\hat{B}^{\ES{S}})\big]
\end{array}
\right..
\end{equation}

\subsection{Basic properties}
\label{subs:basic}
In view of the reformulation of $\SP$ in \eqref{SPINNER3} we will, without loss of generality, exclude $X$ and $\beta$ from consideration and focus on the minimization problem with the objective function
\begin{equation}
\label{objective}
F(B):\ =\ \underbrace{\frac12\sum_{i=1}^n\Big(y_i -\langle A_i, B \rangle \Big)^2}_{f(B)}\ +\ \underbrace{\lambda_N \big\|B\big\|_*}_{g(B)}\ +\ \underbrace{\lambda_L\big\|\vvec(W\circ B)\big\|_1}_{h(B)}.
\end{equation}
The following proposition clarifies that a $\SP$ estimate is well-defined.
\begin{proposition}
\label{prop0}
For any pair of regularization parameters $\lambda_N\geq 0$ and $\lambda_L\geq 0$ there exists at least one solution to \eqref{objective}. The claim is still valid, if the function $g(B) + h(B)$ in \eqref{objective} is replaced by any nonnegative convex function $\tilde{g}$.
\end{proposition}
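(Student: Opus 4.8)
The plan is to establish existence by a Weierstrass (extreme–value) argument, after first quotienting out the directions along which $F$ is constant. Observe that $F=f+g+h$ is a finite convex function on $\mathbb{R}^{p\times p}$, hence continuous, and that $f,g,h\ge 0$, so $F^{\star}:=\inf_{B}F(B)$ is finite. The penalty $g+h=\lambda_N\|\cdot\|_*+\lambda_L\|\vvec(W\circ\cdot)\|_1$ is a seminorm --- a nonnegative combination of the nuclear norm and of the seminorm $B\mapsto\|\vvec(W\circ B)\|_1$ --- so its null set $\mathcal{K}:=\{D:(g+h)(D)=0\}$ is a linear subspace and $(g+h)(D)>0$ for every $D\notin\mathcal{K}$. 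Let $\mathcal{N}:=\{D:\langle A_i,D\rangle=0\ \text{for all }i\}=\mmat(\ker\mathcal{A})$ be the subspace along which the quadratic $f$ is constant. The main obstacle is that $F$ itself need not be coercive: when $\lambda_N=0$ and $W$ has a zero diagonal, for instance, $F$ is constant on the entire space of diagonal matrices, so the sublevel set $\{B:F(B)\le F(0)\}$ is unbounded and the naive compactness argument fails.

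To get around this I would set $\mathcal{C}:=\mathcal{N}\cap\mathcal{K}$, a linear subspace along which $F$ is constant, so that $\inf_{B}F(B)=\inf_{B\in\mathcal{C}^{\perp}}F(B)$ and it suffices to show the latter infimum is attained. For this, verify that $F|_{\mathcal{C}^{\perp}}$ is coercive: given $D\in\mathcal{C}^{\perp}\setminus\{0\}$ we have $D\notin\mathcal{C}$, hence $D\notin\mathcal{N}$ or $D\notin\mathcal{K}$; in the first case $t\mapsto f(B+tD)$ is a quadratic with strictly positive leading coefficient $\tfrac12\|\mathcal{A}\vvec(D)\|^2$ and so tends to $+\infty$, while in the second case $(g+h)(B+tD)\ge t\,(g+h)(D)-(g+h)(B)\to+\infty$ by the triangle inequality for the seminorm; either way $F(B+tD)\to+\infty$ as $t\to+\infty$. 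Hence $F|_{\mathcal{C}^{\perp}}$ is a continuous convex function that tends to $+\infty$ along every ray, so its sublevel sets are compact and it attains its minimum, which is then a global minimizer of $F$. This argument is uniform over all $\lambda_N,\lambda_L\ge 0$: when $\lambda_N>0$ the penalty is a genuine norm and $\mathcal{K}=\{0\}$, while when $\lambda_N=\lambda_L=0$ one has $g+h\equiv 0$ and $\mathcal{C}=\mathcal{N}$.

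For the extension with an arbitrary nonnegative convex $\tilde g$ in place of $g+h$, the proof uses only that $\tilde g$ is nonnegative, finite and convex (hence continuous) and that there is a subspace along which $\tilde g$ is constant while $\tilde g$ grows to $+\infty$ along every ray transverse to it --- equivalently, that the recession function of $\tilde g$ vanishes exactly on a subspace on which $\tilde g$ is itself constant. Seminorms have this property automatically, so the identical argument produces a minimizer of $f+\tilde g$ for any seminorm $\tilde g$, and more generally whenever $\tilde g$ is coercive modulo a subspace in this sense. I expect this recession-direction bookkeeping to be the only genuinely load-bearing step; it also isolates one caveat worth noting --- without such a growth qualification the bare hypothesis ``$\tilde g$ nonnegative and convex'' need not by itself force $\inf(f+\tilde g)$ to be attained --- so the general statement is cleanest read with this qualification, which the stated penalty $g+h$ satisfies automatically.
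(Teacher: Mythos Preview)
Your proof is correct and follows the same underlying idea as the paper's: both establish that every recession direction of $F$ is a direction along which $F$ is constant, and conclude existence of a minimizer from this. The paper invokes Theorem~27.1(b) of Rockafellar as a black box, whereas you give a self-contained Weierstrass argument by restricting to the orthogonal complement $\mathcal{C}^\perp$ of the constancy subspace $\mathcal{C}=\mathcal{N}\cap\mathcal{K}$ and verifying coercivity there directly; this is slightly more elementary but otherwise equivalent.

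Your caveat about the extension to arbitrary nonnegative convex $\tilde g$ is well placed and in fact sharper than the paper's own treatment. The paper's proof, after deducing $\langle A_i,C\rangle=0$ for every recession direction $C$, uses the seminorm structure of $g+h$ to conclude that $C$ is a direction of \emph{constancy}; for a general nonnegative convex $\tilde g$ one only obtains that $C$ is a recession direction of $\tilde g$, which is weaker. A concrete obstruction in the paper's own setting: since every $A_i$ has zero diagonal, $f$ does not depend on $B_{11}$, so taking $\tilde g(B)=e^{-B_{11}}$ yields $f+\tilde g$ with infimum $\inf f$ that is never attained. Thus the general statement needs exactly the qualification you identify (recession directions of $\tilde g$ coincide with its directions of constancy), which the seminorm $g+h$ satisfies automatically.
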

\begin{proof}
We use the concept of \textit{directions of recession} \citep{Rockafellar1970}. In our situation, a matrix $C$ belongs to the set of directions of recession of $F$ if $F(B + \lambda C) \leq F(B)$ for any matrix $B$ and any scalar $\lambda \geq 0$.
In particular, for $B=0$, the direction of recession of $F$ must satisfy $F(\lambda C) - F(0) \leq 0$ for any $\lambda \geq 0$. Therefore,
\begin{equation}
\label{dirrec}
\frac12\sum_{i=1}^n\Big[\big(y_i -\lambda\langle A_i, C \rangle \big)^2 - y_i^2\Big]\ +\ \lambda\cdot\lambda_N \big\|C\big\|_*\ +\ \lambda\cdot\lambda_L\big\|\vvec(W\circ C)\big\|_1 \leq 0.
\end{equation}
Since the last two terms in \eqref{dirrec} are nonnegative, it also holds that $\sum_{i=1}^n\big[\big(y_i -\lambda\langle A_i, C \rangle \big)^2 - y_i^2\big] \leq 0$, hence $\lambda^2 \sum_{i=1}^n \langle A_i, C \rangle^2 - 2\lambda \sum_{i=1}^n y_i\langle A_i, C \rangle \leq 0$, for all $\lambda\geq 0 $. This can happen only when $\sum_{i=1}^n \langle A_i, C \rangle^2 = 0$, implying that $\langle A_i, C \rangle = 0$ for each $i$. Combining this with \eqref{dirrec} gives also $\lambda_N \big\|C\big\|_* = 0$ and $\lambda_L\big\|\vvec(W\circ C)\big\|_1 = 0$. When $\lambda_N>0$ this implies $C=0$,  but any selection of regularization parameters imply that $C$ must be a direction in which the objective function is constant. Therefore, applying Theorem 27.1(b) from \cite{Rockafellar1970}, $F$ attains its minimum.
\end{proof}

\noindent Since $B$ is assumed to be symmetric, it is natural to expect estimates of $B$ to have the same property, although, as yet, we have not enforced this condition on $\hat{B}^{\ES{S}}$. Fortunately, as shown next, we can always obtain a symmetric minimizer of $F$ in \eqref{objective}.
\begin{proposition}
\label{prop1}
Suppose that $W$ and all matrices $A_i$s are symmetric. Then, the set of solutions to the minimization problem with an objective function defined in \eqref{objective} contains a symmetric matrix. The claim is still valid if the function $g(B) + h(B)$ in \eqref{objective} is replaced by any nonnegative convex function $\tilde{g}$ such as $\tilde{g}(A\T{T}) = \tilde{g}(A)$ for any $p\times p$ matrix $A$.
\end{proposition}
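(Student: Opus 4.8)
The plan is the standard symmetrization argument: take an arbitrary minimizer $B^\star$ of $F$ (one exists by Proposition~\ref{prop0}), form its symmetric part $\bar B := \tfrac12\big(B^\star + (B^\star)\T{T}\big)$, and show $F(\bar B)\le F(B^\star)$, so that $\bar B$ is itself a minimizer. Since $\bar B$ is symmetric by construction, this delivers the conclusion. It is cleanest to prove the general statement (with $g+h$ replaced by a nonnegative convex $\tilde g$ satisfying $\tilde g(A\T{T})=\tilde g(A)$) and then verify that $g+h$ meets those hypotheses.

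First I would show that the model-fit term $f$ is unchanged under transposing $B$, hence depends only on the symmetric part of $B$. Because each $A_i$ is symmetric, $\langle A_i, B\T{T}\rangle = \operatorname{tr}(A_i\T{T}B\T{T}) = \operatorname{tr}\big((BA_i)\T{T}\big) = \operatorname{tr}(A_iB) = \langle A_i, B\rangle$; consequently $\langle A_i,\bar B\rangle = \tfrac12\langle A_i,B^\star\rangle + \tfrac12\langle A_i,(B^\star)\T{T}\rangle = \langle A_i,B^\star\rangle$ for every $i$, and therefore $f(\bar B)=f(B^\star)$. Next, for the penalty I would invoke convexity of $\tilde g$ together with its transpose invariance: $\tilde g(\bar B)\le \tfrac12\tilde g(B^\star)+\tfrac12\tilde g\big((B^\star)\T{T}\big)=\tilde g(B^\star)$. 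Adding the two estimates gives $F(\bar B)=f(\bar B)+\tilde g(\bar B)\le f(B^\star)+\tilde g(B^\star)=F(B^\star)=\min F$, so $\bar B$ attains the minimum.

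It then remains to check that $\tilde g=g+h$ qualifies. Nonnegativity and convexity are immediate (and were already noted for the objective \eqref{objective}). For the nuclear-norm term, $g(B\T{T})=\lambda_N\|B\T{T}\|_*=\lambda_N\|B\|_*=g(B)$ because $B$ and $B\T{T}$ have the same singular values. For the weighted $\ell_1$ term, $h(B\T{T})=\lambda_L\sum_{j,l}W_{j,l}|B_{l,j}| = \lambda_L\sum_{j,l}W_{l,j}|B_{j,l}|$ after relabeling indices, and this equals $\lambda_L\sum_{j,l}W_{j,l}|B_{j,l}|=h(B)$ exactly because $W$ is symmetric. Thus $g+h$ is a nonnegative convex function invariant under transposition, and the general argument applies.

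The argument is soft and I do not anticipate a genuine obstacle; the only point requiring care is the transpose invariance of each summand of the penalty — in particular the weighted $\ell_1$ seminorm, which is where the symmetry hypothesis on $W$ (and not merely on the $A_i$) is actually consumed. Everything else is a one-line convexity inequality plus a reindexing.
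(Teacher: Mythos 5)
Your proposal is correct and follows essentially the same symmetrization argument as the paper: both take a minimizer guaranteed by Proposition~\ref{prop0}, pass to its symmetric part, use symmetry of the $A_i$ to preserve the fit term, and use convexity plus transpose invariance of the penalty (with symmetry of $W$ entering exactly where you note) to conclude the symmetric part is also a minimizer. Your explicit verification that $g+h$ is transpose-invariant is a slightly more detailed spelling-out of what the paper states in one line, but the route is identical.
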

\begin{proof}
From Proposition \ref{prop0} we know that there exists a solution $B^* =\argmin{B} F(B)$. Consider its symmetric part, $\widetilde{B}: = \frac12( B^* + {B^*}\T{T})$. By the symmetry of each $A_i$, for $f$ defined in \eqref{objective}
\begin{equation*}
f(\widetilde{B}) = \frac12\sum_{i=1}^n\Big(y_i -\frac12\langle A_i, B^* \rangle - \frac12\langle A_i, {B^*}\T{T} \rangle \Big)^2 =\frac12\sum_{i=1}^n\Big(y_i -\frac12\langle A_i, B^* \rangle - \frac12\langle A_i\T{T}, B^*\rangle \Big)^2 = f(B^*).
\end{equation*}
Now,
\begin{equation*}
\begin{split}
g(\widetilde{B}) + h(\widetilde{B}) &\ = \ \lambda_N \Big\| \frac12B^* + \frac12{B^*}\T{T}\Big\|_*\ +\ \lambda_L\Big\|\frac12\vvec( W\circ B^*) + \frac12\vvec(W\circ{B^*}\T{T})\Big\|_1\\
&\ \leq \ \frac{\lambda_N}2 \big\| B^* \big\|_*\, +\, \frac{\lambda_N}2 \big\|{B^*}\T{T}\big\|_*\ +\ \frac{\lambda_L}2\big\|\vvec( W\circ B^*)\big\|_1\,+\,\frac{\lambda_L}2\big\|\vvec(W\circ{B^*}\T{T})\big\|_1\\
&\ = \ \lambda_N \big\| B^* \big\|_* + \lambda_L\big\|\vvec(W\circ B^*)\big\|_1\ =\ g(B^*) + h(B^*),
\end{split}
\end{equation*}
where the inequality follows from the fact that $g$ and $h$ are convex and the last equality holds since both these functions are invariant under transpose, provided that $W$ is symmetric. Consequently, we get $F(\widetilde{B})\leq F(B^*)$, hence $\widetilde{B}$ must be a solution.
\end{proof}
\noindent In summary, this shows that the symmetric part of any solution to \eqref{objective} is also a solution. In particular, when the solution is unique, it is guaranteed to be a symmetric matrix.

\begin{proposition}
\label{permProp}
Let $\hat{B}$ be a solution to minimization problem with an objective, $F(B)$, defined in \eqref{objective}. We consider the modification of the data relying on the nodes reordering. Precisely, suppose that $\pi:\{1,\ldots, p\} \rightarrow \{1,\ldots, p\}$ is a given permutation with corresponding permutation matrix $P_{\pi}$, i.e. it holds $P_{\pi}v = [v_{\pi(1)},\ldots, v_{\pi(p)}]\T{T}$ for any column vector $v$. We replace the matrices $A_i$'s and $W$ in \eqref{objective} with matrices having rows and columns permuted by $\pi$, namely,\ $A^{\pi}_i: = P_{\pi}A_iP_{\pi}\T{T}$ \ and \ $W^{\pi}: = P_{\pi}WP_{\pi}\T{T}$. Then, $\hat{B}$ with rows and columns permuted by $\pi$, i.e. $\hat{B}^{\pi}: = P_{\pi}\hat{B}P_{\pi}\T{T}$, is a solution to the updated problem.
\end{proposition}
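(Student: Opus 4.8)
The plan is to show that the whole objective is invariant under a consistent relabelling of nodes, so that minimizers of the original and the permuted problems are in bijection via $B\mapsto P_\pi B P_\pi\T{T}$. Write $F^\pi$ for the objective of the updated problem, i.e.
\begin{equation*}
F^\pi(B)\ :=\ \tfrac12\sum_{i=1}^n\big(y_i - \langle A_i^\pi, B\rangle\big)^2\ +\ \lambda_N\|B\|_*\ +\ \lambda_L\big\|\vvec(W^\pi\circ B)\big\|_1 .
\end{equation*}
It suffices to prove the identity $F^\pi\big(P_\pi B P_\pi\T{T}\big) = F(B)$ for every $p\times p$ matrix $B$. Granting this, if $\hat B$ minimizes $F$ then for any $B$ we have $F^\pi\big(P_\pi B P_\pi\T{T}\big) = F(B) \ge F(\hat B) = F^\pi\big(P_\pi\hat B P_\pi\T{T}\big)$, and since $B\mapsto P_\pi B P_\pi\T{T}$ is a bijection on the space of $p\times p$ matrices (with inverse $C\mapsto P_\pi\T{T} C P_\pi$), this shows $\hat B^\pi = P_\pi\hat B P_\pi\T{T}$ minimizes $F^\pi$, which is the claim.

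The identity is then checked term by term, using only that a permutation matrix is orthogonal, $P_\pi\T{T}P_\pi = P_\pi P_\pi\T{T} = \mathbf I_p$. For the data-fit term, cyclic invariance of the trace gives $\langle A_i^\pi, P_\pi B P_\pi\T{T}\rangle = \operatorname{tr}\!\big(P_\pi A_i\T{T}P_\pi\T{T}\,P_\pi B P_\pi\T{T}\big) = \operatorname{tr}\!\big(A_i\T{T}B\big) = \langle A_i, B\rangle$, so $\sum_i\big(y_i-\langle A_i^\pi, P_\pi B P_\pi\T{T}\rangle\big)^2 = \sum_i\big(y_i-\langle A_i, B\rangle\big)^2$. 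For the nuclear-norm term, left- and right-multiplication by orthogonal matrices leaves the singular values unchanged, hence $\|P_\pi B P_\pi\T{T}\|_* = \|B\|_*$. For the weighted $\ell_1$ term, with the convention $(P_\pi v)_j = v_{\pi(j)}$ one has $(P_\pi B P_\pi\T{T})_{j,l} = B_{\pi(j),\pi(l)}$ and likewise $W^\pi_{j,l} = W_{\pi(j),\pi(l)}$, so
\begin{equation*}
\big\|\vvec\big(W^\pi\circ (P_\pi B P_\pi\T{T})\big)\big\|_1 = \sum_{j,l=1}^p W_{\pi(j),\pi(l)}\,\big|B_{\pi(j),\pi(l)}\big| = \sum_{j',l'=1}^p W_{j',l'}\,\big|B_{j',l'}\big| = \big\|\vvec(W\circ B)\big\|_1 ,
\end{equation*}
the middle step being merely a reindexing of the double sum by the bijection $\pi$. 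Adding the three identities yields $F^\pi\big(P_\pi B P_\pi\T{T}\big) = F(B)$, completing the proof.

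I do not expect a genuine obstacle here: the statement is an equivariance property and each ingredient reduces to a one-line computation. The only point demanding a little care is the index bookkeeping in the weighted $\ell_1$ term, namely verifying $(P_\pi B P_\pi\T{T})_{j,l} = B_{\pi(j),\pi(l)}$ (and not the inverse permutation); but since the \emph{same} permutation is applied to $W$, the resulting sum is reindex-invariant regardless, so this subtlety does not affect the conclusion. I would also note in passing that the argument uses neither the symmetry of the $A_i$ and $W$ nor the default choice of $W$, so it holds for the general weighted problem.
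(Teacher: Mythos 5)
Your proposal is correct and takes essentially the same approach as the paper: the paper phrases it as a proof by contradiction, but it rests on exactly the same three term-by-term invariances (trace/Frobenius inner product, nuclear norm under orthogonal conjugation, and reindexing of the weighted $\ell_1$ sum) that you use to establish $F^\pi(P_\pi B P_\pi\T{T}) = F(B)$ directly.
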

\begin{proof}
Suppose that $\hat{B}^{\pi}$ is not a solution, hence there exists matrix $C$ such as 
\begin{align}
\label{21012020}
\begin{split}
\frac12\sum\limits_{i=1}^n\big(y_i -\langle A^{\pi}_i,\, C \rangle \big)^2\ +\ \lambda_N & \big\|C\big\|_*\ +\ \lambda_L\big\|\vvec(W^{\pi}\circ C)\big\|_1 < \\[-1ex]
&\frac12\sum\limits_{i=1}^n\big(y_i -\langle A^{\pi}_i,\, \hat{B}^{\pi} \rangle \big)^2\ +\ \lambda_N \big\|\hat{B}^{\pi}\big\|_*\ +\ \lambda_L\big\|\vvec(W^{\pi}\circ \hat{B}^{\pi})\big\|_1.
\end{split}
\end{align}
We have $\langle A^{\pi}_i, C \rangle = \langle P_{\pi}A_iP_{\pi}\T{T}, C \rangle = \operatorname{tr}\big(P_{\pi}A_iP_{\pi}\T{T}C\big) = \operatorname{tr}\big(A_iP_{\pi}\T{T}CP_{\pi}\big) = \langle A_i, P_{\pi}\T{T}CP_{\pi} \rangle = \langle A_i, \widetilde{C} \rangle$, for $\widetilde{C}:=P_{\pi}\T{T}CP_{\pi}$. Moreover, $\big\|\vvec(W^{\pi}\circ C)\big\|_1 = \big\|\vvec(P_{\pi}WP_{\pi}\T{T}\circ C)\big\|_1 = \big\|\vvec(P_{\pi}WP_{\pi}\T{T}\circ P_{\pi}\widetilde{C}P_{\pi}\T{T})\big\|_1 = \big\|\vvec\big(P_{\pi}(W\circ \widetilde{C})P_{\pi}\T{T}\big)\big\|_1 = \sum\limits_{j,l}\big|W_{\pi(j),\pi(l)}\widetilde{C}_{\pi(j),\pi(l)}\big|= \sum\limits_{j,l}\big|W_{j,l}\widetilde{C}_{j,l}\big|=\big\|\vvec(W\circ \widetilde{C})\big\|_1$, where the third equation follows from the exchangeability of Hadamard product and permutation imposed on rows or columns of matrices (provided that the same permutation is used for two matrices). Since $C$ and $\widetilde{C}$ share the same singular values, it also holds $\|C\|_* = \|\widetilde{C}\|_*$. Consequently, the left-hand side of \eqref{21012020} can be simply expressed as $F(\widetilde{C})$.

On the other hand we have $\langle A^{\pi}_i,\, \hat{B}^{\pi} \rangle = \operatorname{tr}\big( P_{\pi}A_iP_{\pi}\T{T} P_{\pi}\hat{B}P_{\pi}\T{T}\big)= \operatorname{tr}\big(A_i\hat{B}\big) = \langle A_i,\, \hat{B} \rangle$, since $P_{\pi}\T{T} P_{\pi} = \mathbf{I}$. As above, we can get rid of the permutation symbols inside the nuclear and $\ell_1$ norms, yielding $\|\hat{B}^{\pi}\|_* = \|\hat{B}\|_*$ and $\big\|\vvec(W^{\pi}\circ \hat{B}^{\pi})\big\|_1 = \big\|\vvec(W\circ \hat{B})\big\|_1$. Therefore, the right-hand side of \eqref{21012020} becomes $F(\hat{B})$ and the inequality yields $F(\widetilde{C}) < F(\hat{B})$ which contradicts the optimality of $\hat{B}$ and proves the claim.
\end{proof}
\noindent The above statement implies that $\SP$ is invariant under the order of nodes in a sense that the rearrangement of the nodes simply corresponds to the rearrangement of rows and columns of an estimate. Consequently, there is no need for fitting the model again. More importantly, the \textit{optimal order} of nodes, i.e. the permutation which reveals the assumed clumps structure (see, Section \ref{sec_BrainDataResults}), can be found at the end of the procedure based on the $\SP$ estimate achieved for \textit{any} arrangement of nodes, e.g. corresponding to the alphabetical order of node labels. 

\subsection{Numerical implementation}\label{sec:numSol}
To build the numerical solver for the problem \eqref{objective}, we employed the Alternating Direction Method of Multipliers (ADMM) \citep{Gabay1976ADA}. The algorithm relies on introducing $p\times p$ matrices $C$ and $D$ as new variables and considering the constrained version of the problem (equivalent to \eqref{objective}) with a separable objective function:
\begin{equation}
\arraycolsep=1.4pt\def\arraystretch{0.8}
\argmin{B, C, D}\  \big\{f(B)\,+\,g(C)\,+\,h(D)\big\} \qquad \textrm{such that }\ \left\{\begin{array}{l}
D-B=0\\
D-C=0
\end{array}
\right..
\end{equation}
The augmented Lagrangian with the scalars $\delta_1>0$, $\delta_2>0$ and dual variable $\dual: = \left [\begin{BMAT}(c)[0.5pt,0pt,0.7cm]{c}{cc}\dual_1 \\ \dual_2 \end{BMAT} \right ]\in \mathbb{R}^{2p\times p}$ is
\begin{equation*}
L_{\delta}(B,C,D; \dual) =f(B)+g(C)+h(D)+\langle\dual_1,D-B\rangle+\langle\dual_2,D-C\rangle+\frac{\delta_1}2\big\|D-B\big\|_F^2+\frac{\delta_2}2\big\|D-C\big\|_F^2.
\end{equation*}

ADMM builds the update of the current guess, i.e., the matrices $B^{[k+1]}$, $C^{[k+1]}$ and $D^{[k+1]}$, by  minimizing $L_{\delta}(B,C,D; \dual)$ with respect to each of the primal optimization variables separately while treating all remaining variables as fixed. Dual variables are updated in the last step of this iterative procedure. Since $\langle\dual_1,D-B\rangle+\frac{\delta_1}2\|D-B\|_F^2 =\frac{\delta_1}2\|D+\frac{\dual_1}{\delta_1}-B\|_F^2 + const_1$ and $\langle\dual_2,D-C\rangle+\frac{\delta_2}2\|D-C\|_F^2 =\frac{\delta_2}2\|D+\frac{\dual_2}{\delta_2}-C\|_F^2 + const_2$, where $const_1$ does not depend on $B$ and $const_2$ does not depend on $C$, ADMM updates for the considered problem takes the final form
\begin{align}
&B^{[k+1]}:=\ \argmin {B}\bigg\{\,2f(B)\ +\ \delta^{[k]}_1\Big\|\,D^{[k]}  + \frac{\dual_1^{[k]}}{\delta^{[k]}_1}- B\,\Big\|_F^2\,\bigg\}\label{Update1}\\
&C^{[k+1]}:=\ \argmin {C}\bigg\{\,2g(C)\ +\ \delta^{[k]}_2\Big\|\,D^{[k]}  + \frac{\dual_2^{[k]}}{\delta^{[k]}_2}- C\,\Big\|_F^2\,\bigg\}\label{Update2}\\
&D^{[k+1]}:=\ \argmin {D}\bigg\{\,2h(D)\ +\ \delta^{[k]}_1\Big\|\,D + \frac{\dual_1^{[k]}}{\delta^{[k]}_1} - B^{[k+1]}\,\Big\|_F^2\ +\ \delta^{[k]}_2\Big\|D + \frac{\dual_2^{[k]}}{\delta^{[k]}_2} -C^{[k+1]}\Big\|_F^2\,\bigg\}\label{Update3}\\
&\left\{
\begin{array}{l}
\dual_1^{[k+1]}: =\ \dual_1^{[k]} + \delta^{[k]}_1\big(D^{[k+1]}-B^{[k+1]}\big)\\
\dual_2^{[k+1]}: =\ \dual_2^{[k]} + \delta^{[k]}_2\big(D^{[k+1]}-C^{[k+1]}\big)
\end{array}
\right..
\end{align}

All of the subproblems \eqref{Update1}, \eqref{Update2} and \eqref{Update3} have analytical solutions and can be computed very efficiently (see Section \ref{subs:subproblems} in the Appendix). Here, the positive numbers $\delta^{[k]}_1$ and $\delta^{[k]}_2$ are treated as the step sizes. The convergence of ADMM is guaranteed under very general assumptions when these parameters are held constant. However, their selection should be performed with caution since they strongly impact the practical performance of ADMM \citep{Xu2017AdaptiveRA}. Our \texttt{MATLAB} implementation uses the procedure based on the concept of \textit{residual balancing} \citep{Wohlberg2017ADMMPP, Xu:2017} in order to automatically modify the step sizes in consecutive iterations and provide fast convergence. The stopping criteria are defined as the simultaneous fulfilment of the conditions
$$\max\left\{\frac{\|C^{[k+1]}-B^{[k+1]}\|_F}{\|B^{[k+1]}\|_F}, \frac{\|D^{[k+1]}-B^{[k+1]}\|_F}{\|B^{[k+1]}\|_F}\right\}< \epsilon_P,\qquad \frac{\|D^{[k+1]}-D^{[k]}\|_F}{\|D^{[k]}\|_F}< \epsilon_D,$$
which we use as a measure stating that the primal and dual residuals are sufficiently small \citep{Wohlberg2017ADMMPP}. The default settings are $\epsilon_P: = 10^{-6}$ and  $\epsilon_D: = 10^{-6}$.

\section{Simulation Experiments}
\label{sec_SimulationExperiments}
We now investigate the performance of the proposed method, $\SP$, and compare it with nuclear-norm regression \citep{Zhou-regularized-2014}, lasso \citep{LassoF}, elastic net \citep{elasticNet} and ridge \citep{ridge}. Without loss of generality we focus the simulations on the model where there are no additional covariates, i.e. ${y_i = \langle A_i, B \rangle + \varepsilon_i}$, for $i=1,\ldots, n$ and
with $\varepsilon\sim\mathcal{N}(0,\sigma^2I_n)$.

\subsection{Considered scenarios}
We consider three scenarios. In the first two scenarios, the observed matrices $\{A_i\}_{i=1}^n$ are synthetic and the ``true" signal is defined by a pre-specified $B$: Scenario 1 considers the effects of signal strength (determined by $B$) and Scenario 2 studies power and the effects of sample size, $n$. In Scenario 3, the $A_i$'s are from real brain connectivity maps. Specifically:
\begin{enumerate}[label=Scenario \arabic*, ref=Scenario \arabic*, leftmargin=1.7cm]
\item\label{S1} For each $A_i$, its upper triangular entries are first sampled independently from $\mathcal{N}(0,1)$. Then these entries are standardized element-wise across $i$ to have mean 0 and standard deviation 1. The lower triangular entries are obtained by symmetry and the diagonal entries are set at 0. $B$ is block-diagonal $\{\mathbbm{1}_{8\times8}, -s\times\mathbbm{1}_{8\times8}, s\times\mathbbm{1}_{8\times8}, \mathbf{0}_{(p-24)\times(p-24)}\}$, where $p = 60$, $\mathbbm{1}_{8\times8}$ denotes an $8\times8$ matrix of ones, and $s = 2^k$ with $k\in \{-3,-2,\ldots,5\}$. The noise level is $\sigma=0.1$ and the number of observations is $n = 150$.
\item\label{S2} In this scenario, the $A_i$'s are obtained in the same manner as in \ref{S1} except that we fix $s = 1$ and vary $n \in \{50, 100, \ldots, 300\}$.
\item\label{S3} The $A_i$'s are real functional connectivity matrices of 100 unrelated individuals from the Human Connectome Project (HCP) \citep{conn}. Data were preprocessed in FreeSurfer \citep{FreeS}. We removed the subcortical areas what resulted in a final number of $p = 148$ brain regions. As before, entries in $A_i$'s are standardized element-wise across $i$ before $y$ is generated. $B$ is block-diagonal $\{\mathbf{0}_{56\times56}, \mathbbm{1}_{6\times6}, \mathbf{0}_{6\times6}, -s\times\mathbbm{1}_{5\times5}, \mathbf{0}_{49\times49}, s\times\mathbbm{1}_{8\times8}, \mathbf{0}_{18\times18}\}$, $s = 2^k$ with $k\in \{-3,-2,\ldots,5\}$, and $\sigma = 0.1$.
\end{enumerate}
For each setting in \ref{S1} and \ref{S2}, the process of generating $A$ and $y$ is repeated $100$ times. For each setting in \ref{S3}, the process of generating $y$ is repeated $100$ times.

\subsection{Simulation implementation}\label{subsec_SimulationImplementation}
For each simulation setting, we apply the following five regularization methods to estimate the matrix $B$: $\SP$, elastic net, nuclear-norm regression, lasso, and ridge. $\SP$ and elastic net both involve penalizing two types of norms, while the others use a single norm for the penalty term. The regularization parameters for these methods are chosen by five-fold cross-validation, where the fold membership of each observation is the same across methods.

For $\SP$, we consider a $15\times 15$ two-dimensional grid of paired parameter values $(\lambda_L, \lambda_N)$. The smallest value  for each coordinate is zero and the largest is the smallest value that produces $\hat{B}=0$ when the other coordinate is zero. The other 13 values for each coordinate are equally spaced on a logarithmic scale. The optimal $(\lambda_L^*, \lambda_N^*)$ is chosen as the pair that minimizes the average cross-validated squared prediction error. Elastic net regression requires the selection of two tuning parameters, $\alpha$ and $\lambda$.  Their selection is implemented using the \texttt{MATLAB} package \texttt{glmnet} \citep{glmnet}, in which we consider 15 equally-spaced (between 0 and 1) $\alpha$ values, where $\alpha = 0$ corresponds to ridge regression and $\alpha = 1$ corresponds to lasso regression. Then, for each $\alpha$, we let \texttt{glmnet} optimize over 15 automatically chosen $\lambda$ values, and pick the one that minimizes the average cross-validated squared prediction error.
We treat nuclear-norm  and lasso regression as special cases of $\SP$ where one of the regularization parameters is set at 0, and the other one takes on the same 15 values as specified for the $\SP$. The simplified versions of ADMM algorithm for these two special cases can be found in Appendix \ref{app:degenerate}}. Ridge regression is implemented using \texttt{glmnet} where the optimization is done over 15 automatically chosen $\lambda$ values.

The default form of $W$ is used for $\SP$ so that the diagonal elements of $B$ are not penalized. Since the connectivity matrices $A_i$ have zeros on their diagonals, the off-diagonal elements for the elastic net, lasso, and ridge regression are the same regardless of whether the diagonal is penalized and therefore we do not exclude diagonal elements from penalization (consequently, they are always estimated as zeros for these methods). As will be discussed in the following section, diagonal elements do not enter our evaluation criterion.

\subsection{Simulation Results}
We measure the performance of each method by the relative mean squared error between its estimator $\hat{B}$ and the true $B$, defined as $\text{MSEr} = \Vert \hat{B} - B\Vert_{F^{\star}}^2/\Vert B\Vert_{F^{\star}}^2$, where $\Vert\cdot\Vert_{F^{\star}}$ denotes the Frobenius norm of a matrix, excluding diagonal entries. In other words, $\Vert B\Vert_{F^{\star}}^2$ is the sum of squared off-diagonal entries of $B$.

\subsubsection{Scenario 1: Synthetic Connectivity Matrices with Varying Signal Strengths}\label{ssec:scenario1}
Figure~\ref{fig_Scenario1} shows the relative mean squared errors of $\hat B$ for the five regularization methods as $\log_2(s) \in \{-3,-2,\ldots,5\}$ under \ref{S1}, where the nonzero entries in $B$ consist of blocks $\mathbbm{1}_{8\times8}$, $-s\times\mathbbm{1}_{8\times8}$, and $s\times\mathbbm{1}_{8\times8}$. We can see that $\SP$ outperforms the other methods for all values of $s$ and produces a relative mean squared error much smaller than that of elastic net, lasso, and ridge. It is also observed that for $\SP$, nuclear-norm regression, elastic net, and lasso, their relative mean squared errors are at the highest when $s = 1$. This can be explained as follows. When $s\ll 1$, the block $\mathbbm{1}_{8\times8}$ dominates the blocks $-s\times\mathbbm{1}_{8\times8}$ and $s\times\mathbbm{1}_{8\times8}$, making them more like noise terms so that effectively the number of response-relevant variables closer to 64, which is smaller than the number of observations $n = 150$. Similarly, when $s\gg 1$, the blocks $-s\times\mathbbm{1}_{8\times8}$ and $s\times\mathbbm{1}_{8\times8}$ dominate the block $\mathbbm{1}_{8\times8}$, effectively making the number of variables closer to 128, which is still smaller than 150. However, when $s = 1$, the total number of response-relevant variables is 192, which is larger than the number of observations, making the estimation of $B$ in this case more difficult.

\begin{figure}[ht]
\centering
\begin{subfigure}{.325\textwidth}
  \centering
	\includegraphics[width=1\linewidth]{./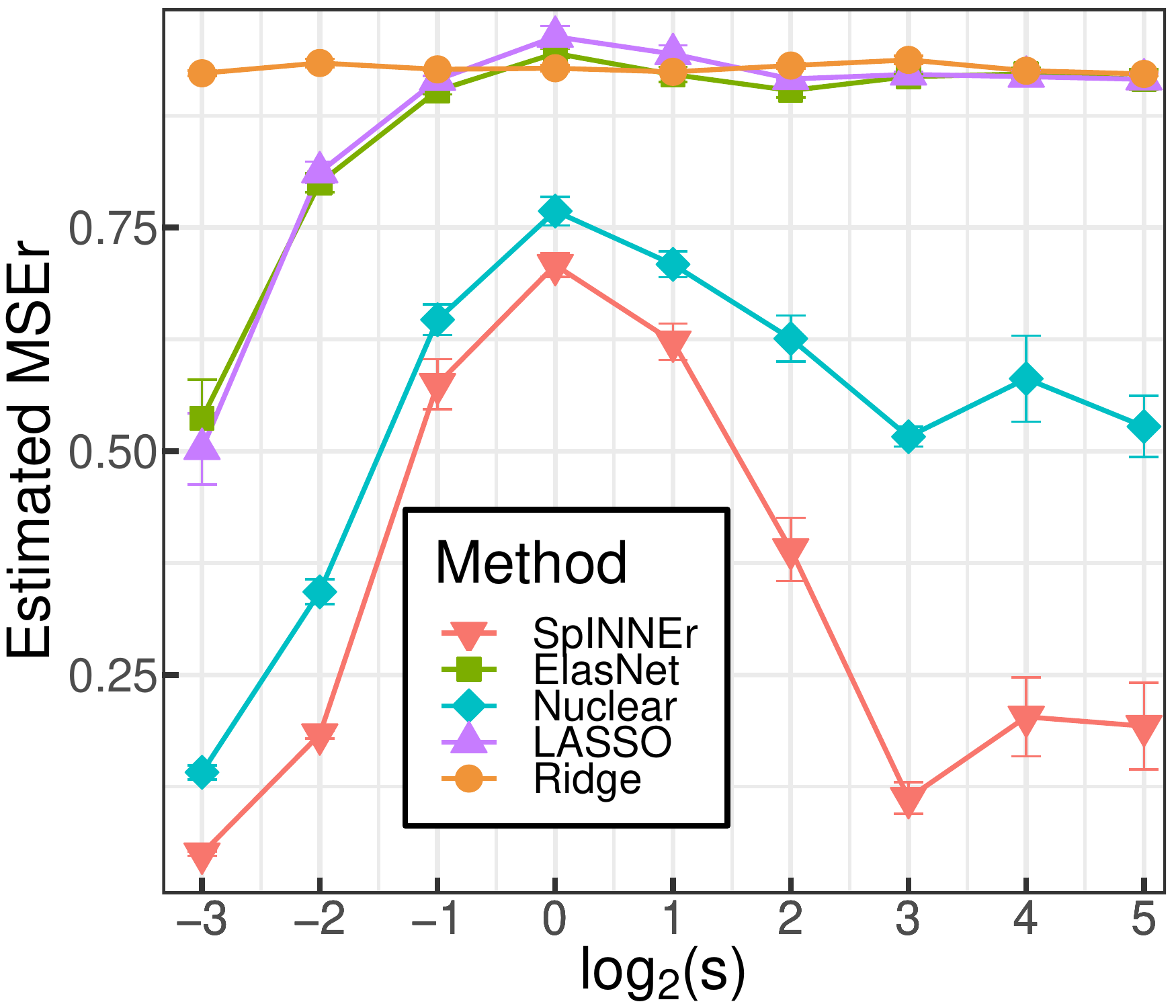}
  \caption{Scenario 1}
  \label{fig_Scenario1}
\end{subfigure}%
\
\begin{subfigure}{.325\textwidth}
  \centering
  \includegraphics[width=1\linewidth]{./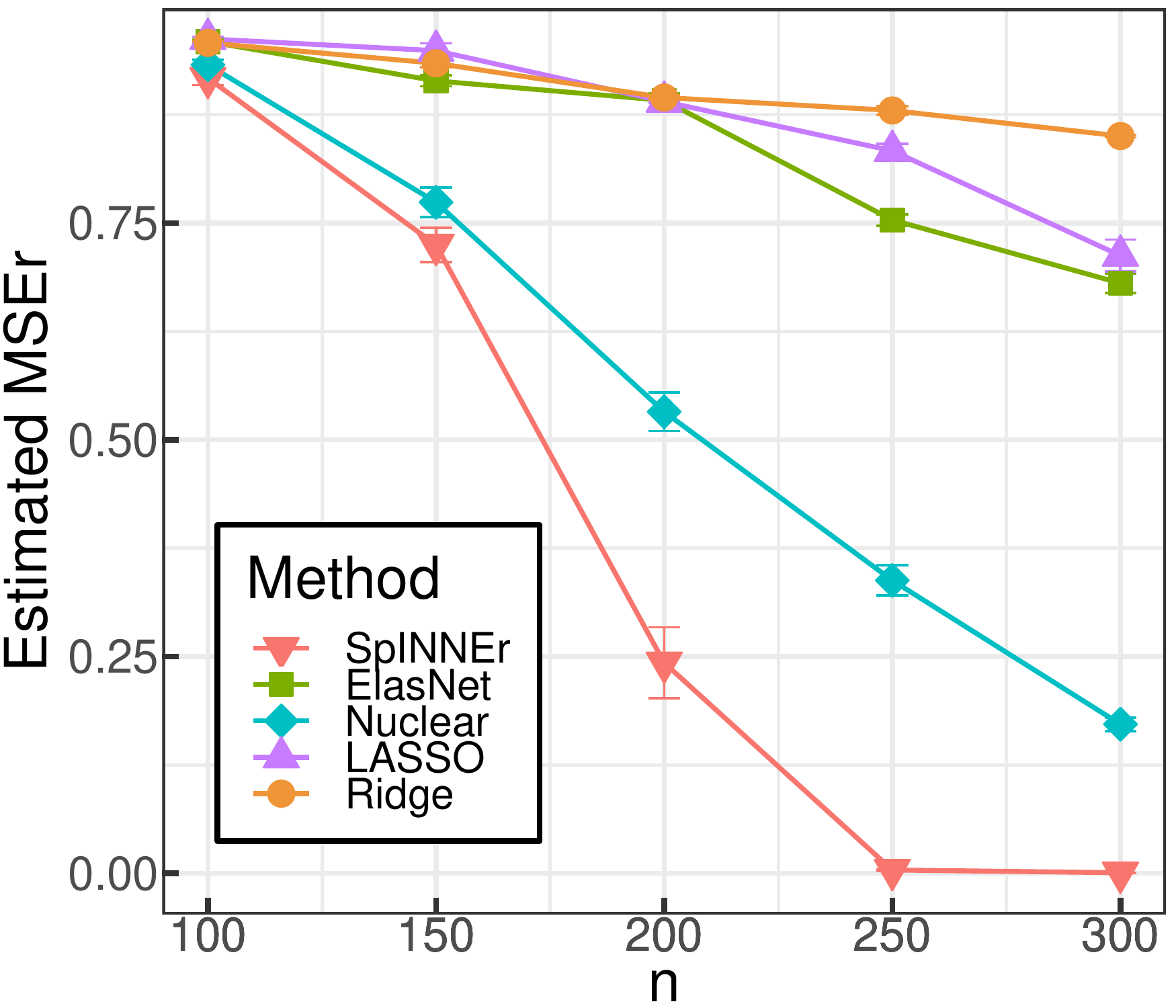}
  \caption{Scenario 2}
  \label{fig_Scenario2}
\end{subfigure}
\begin{subfigure}{.325\textwidth}
  \centering
  \includegraphics[width=1\linewidth]{./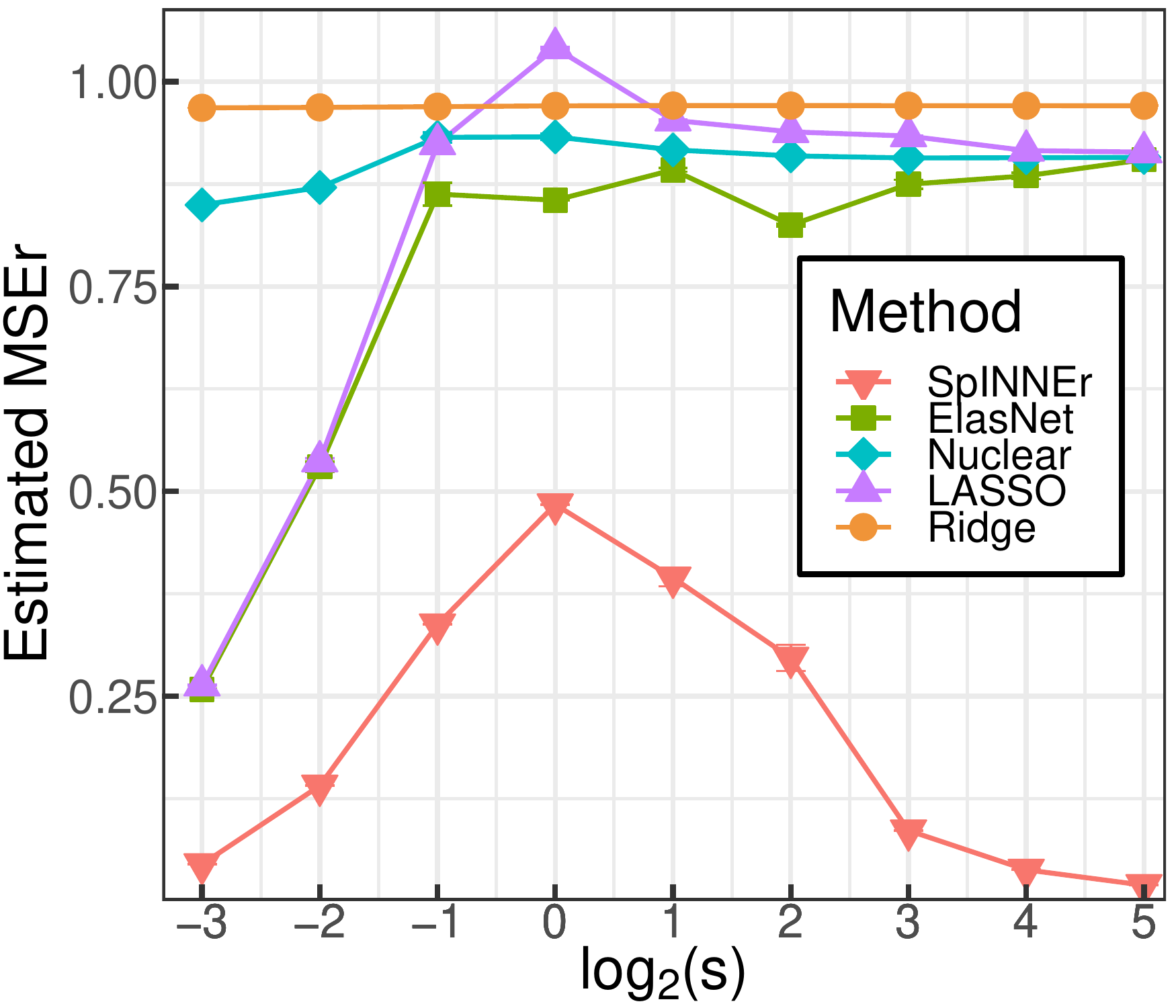}
  \caption{Scenario 3}
  \label{fig_Scenario3}
\end{subfigure}
\caption{Relative mean squared errors (MSEr) of estimators obtained from $\SP$, elastic net (ElasNet), nuclear-norm regression (Nuclear), lasso, and ridge under different simulation scenarios. Each point represents the average MSEr over 100 replicates and error bars indicates $95\%$ confidence intervals. (a) MSEr against $\log_2(s)$ under simulation \ref{S1}. (b) MSEr against sample size $n$ under simulation \ref{S2}. (c) MSEr against $\log_2(s)$ under simulation \ref{S3}.}
\label{fig_SimulationResults}
\end{figure}

As $s$ increases to values greater than 1, $\SP$ and nuclear-norm regression (the two methods that use a nuclear-norm penalty) exhibit substantial decrease in relative mean squared error. Elastic net and lasso, on the other hand, do not show a pronounced decrease. This demonstrates that when the true $B$ is both sparse and low-rank, and when the number of variables is comparable to the number of observations, encouraging sparsity alone is not sufficient for a regularized regression model to produce high estimation accuracy. Encouraging low-rank structure may be important. The behavior of ridge regression is different from the other four methods. As a shrinkage method that does not induce sparsity or low-rank structure, a ridge estimator's MSEr varies little with $s$.
\begin{figure}[ht]
\centering
\begin{subfigure}{.28\textwidth}
  \centering
	\includegraphics[width=1\linewidth]{./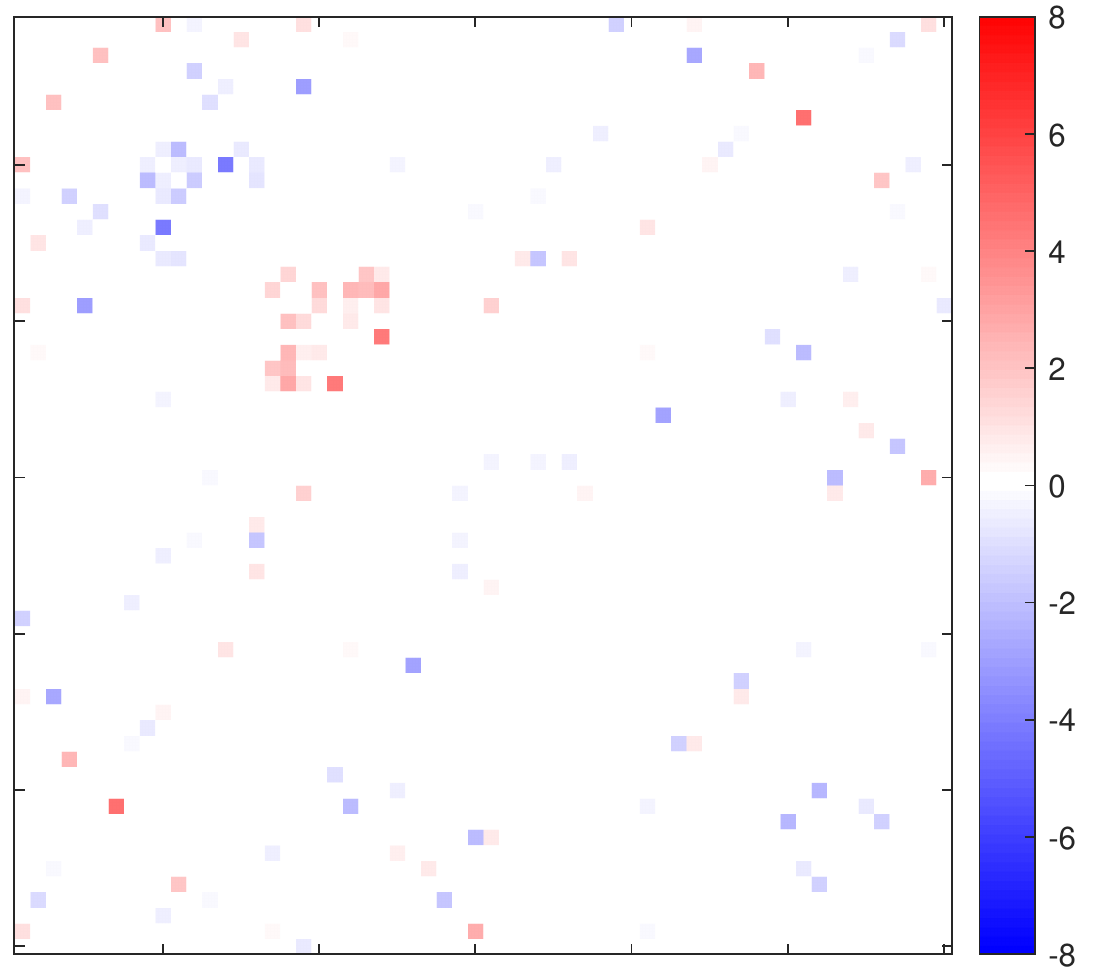}
  \caption{Elastic Net}
\end{subfigure}%
\ \ 
\begin{subfigure}{.28\textwidth}
  \centering
  \includegraphics[width=1\linewidth]{./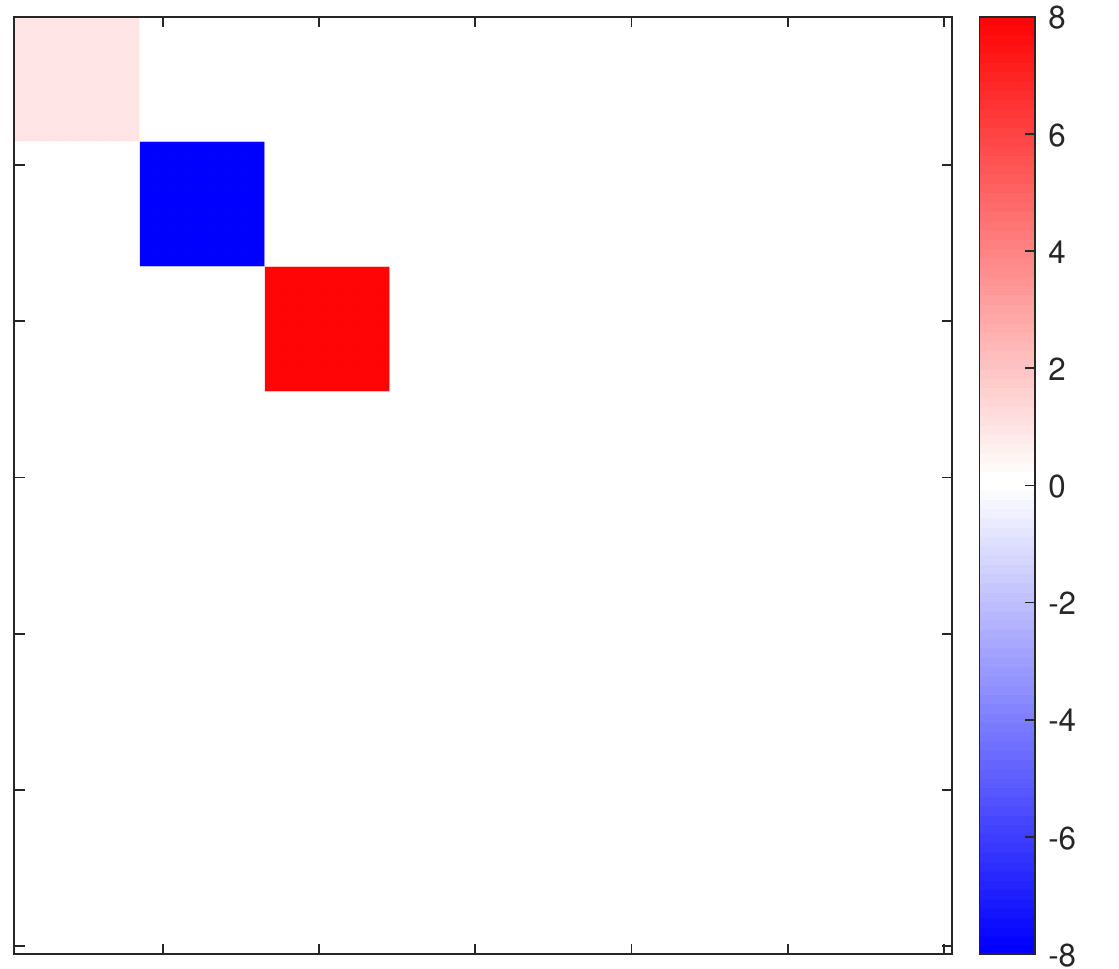}
  \caption{True $B$}
\end{subfigure}
\ \ 
\begin{subfigure}{.28\textwidth}
  \centering
  \includegraphics[width=1\linewidth]{./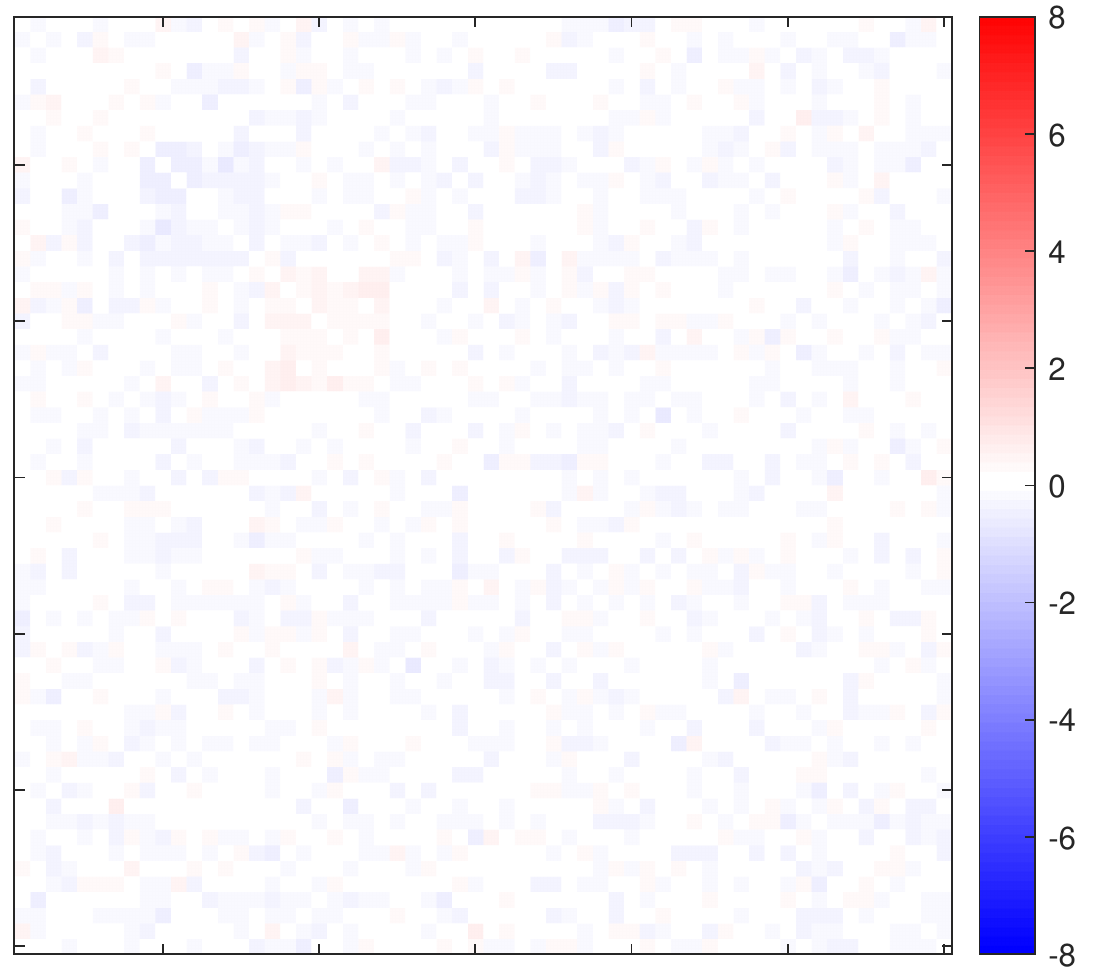}
  \caption{Ridge}
\end{subfigure}
\begin{subfigure}{.28\textwidth}
  \centering
	\includegraphics[width=1\linewidth]{./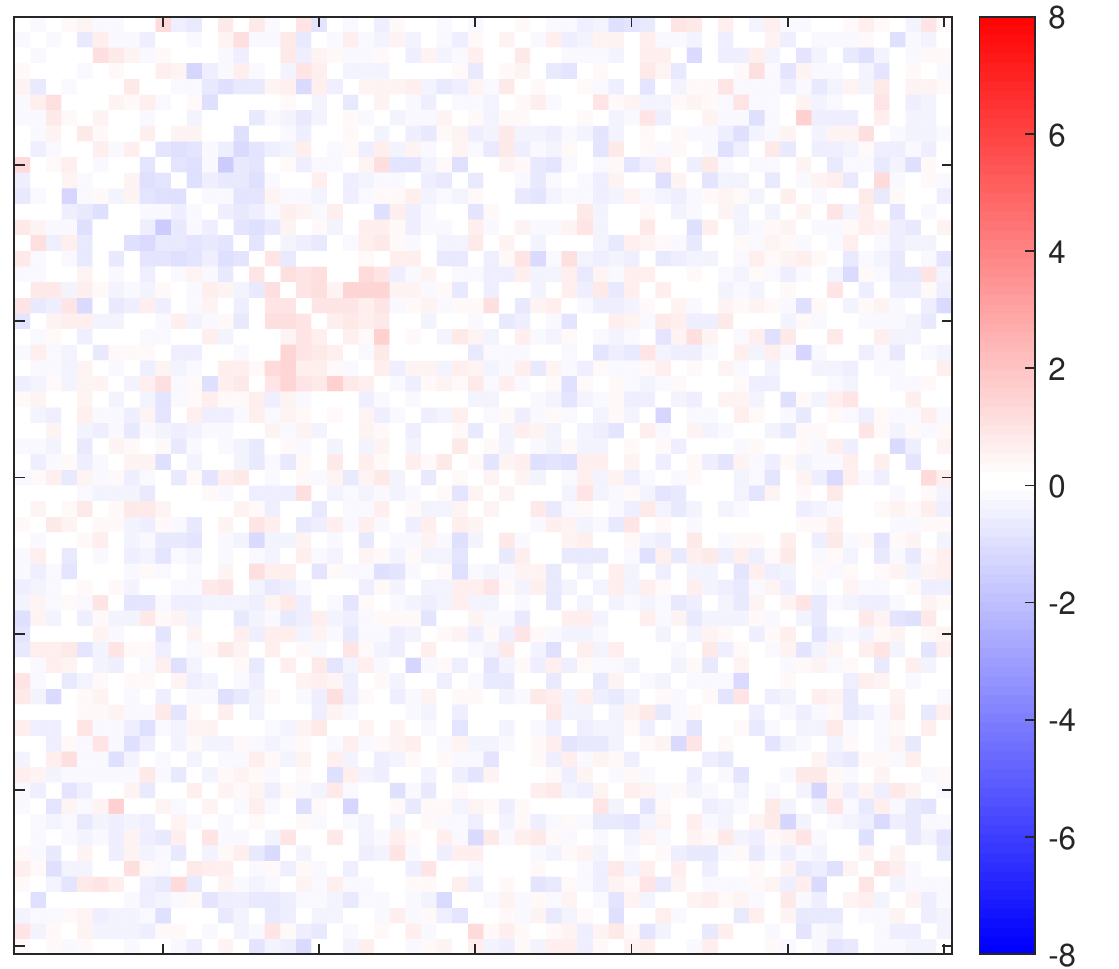}
  \caption{LASSO}
\end{subfigure}%
\ \ 
\begin{subfigure}{.28\textwidth}
  \centering
  \includegraphics[width=1\linewidth]{./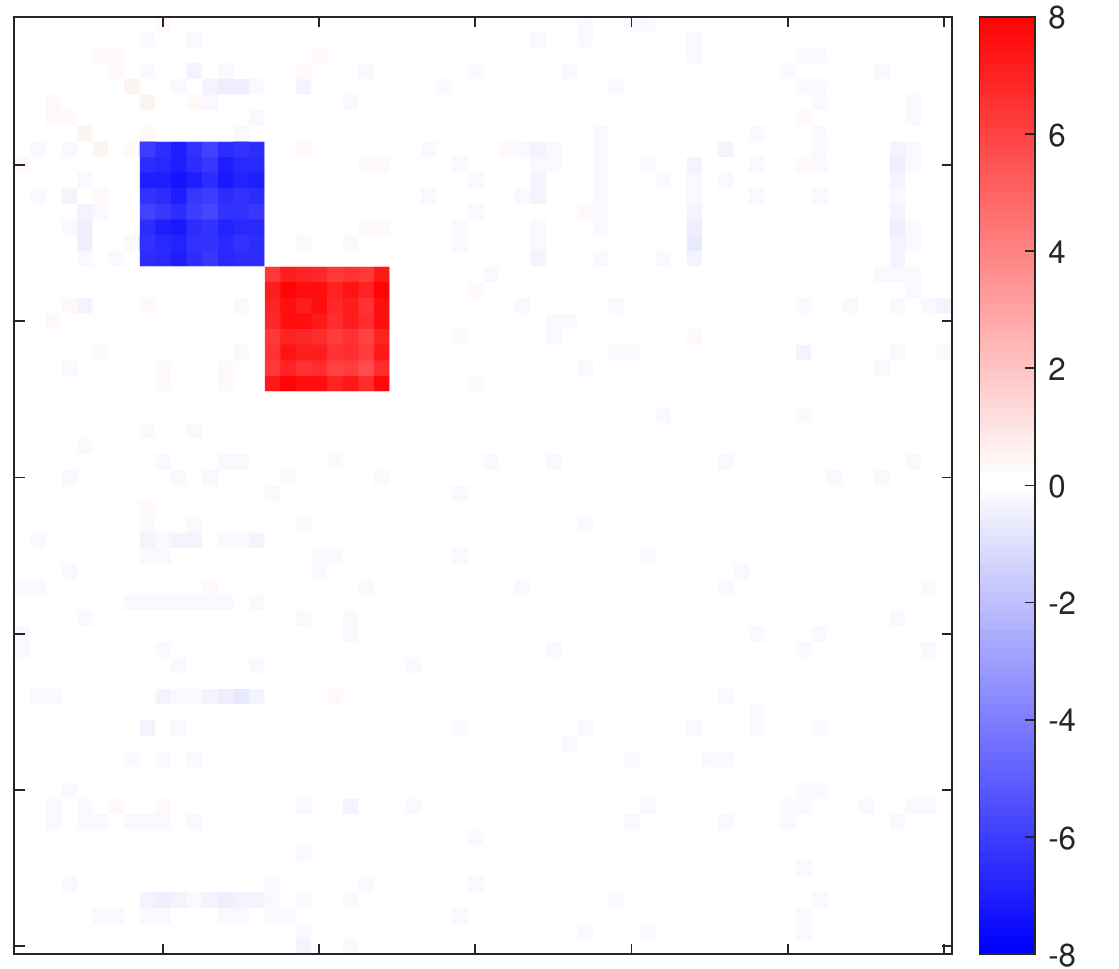}
  \caption{$\SP$}
\end{subfigure}
\ \ 
\begin{subfigure}{.28\textwidth}
  \centering
  \includegraphics[width=1\linewidth]{./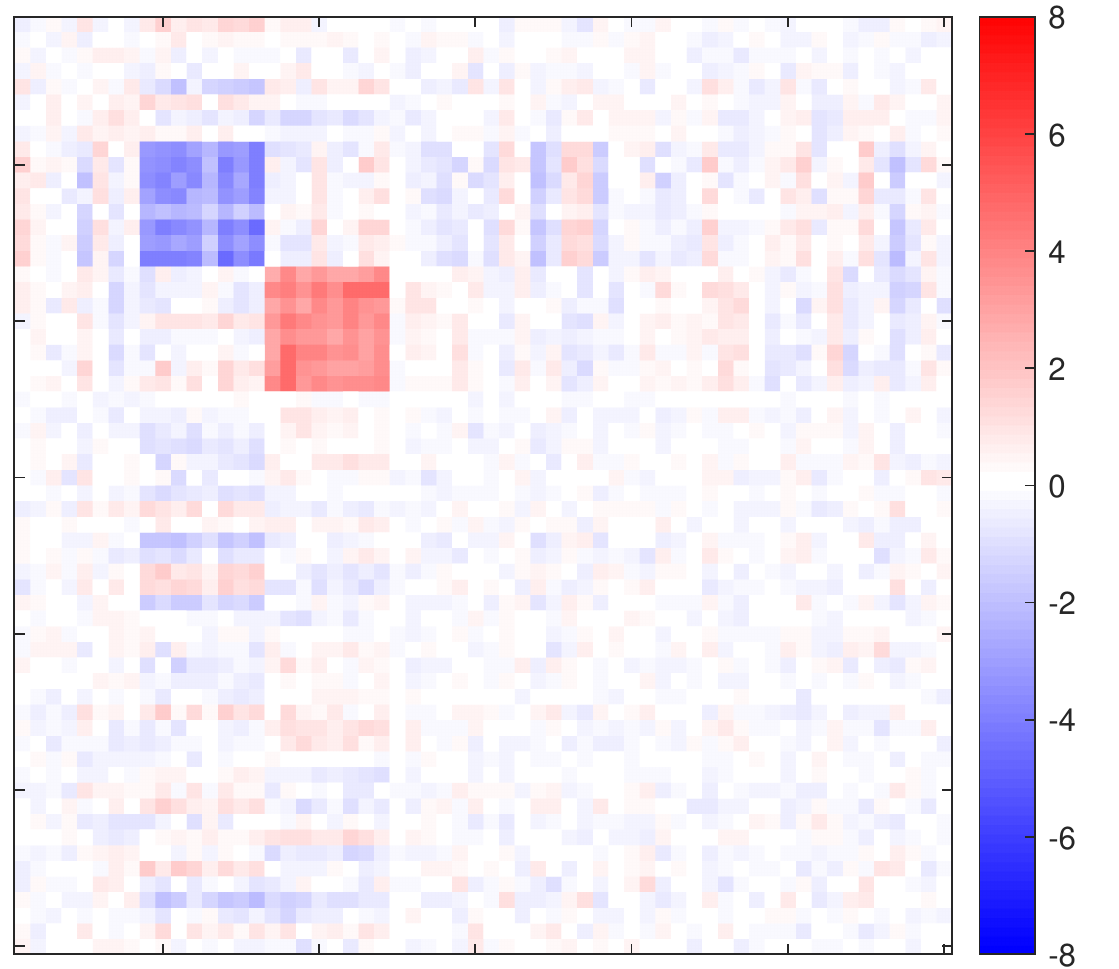}
  \caption{Nuclear}
\end{subfigure}
\caption{(b) True $B$ with $s = 8$ for simulation \ref{S1}. Each of (a), (c), (d), (e), and (f) shows the estimated $\hat{B}$ from one simulation run for each of the five regularization methods. The same color bar scale is shared across all subfigures.}
\label{fig_SimulationBhats}
\end{figure}

In a more focused look at these regularization methods, Figure~\ref{fig_SimulationBhats} displays the estimated $\hat{B}$ from a single simulation run---i.e., one representative set $\{y_i, A_i\}_{i=1}^n$---where the prescribed true $B$ consists of three signal-related blocks: $B_1 = \mathbbm{1}_{8\times8}$, \ $B_2 = -8\times\mathbbm{1}_{8\times8}$, and $B_3 = 8\times\mathbbm{1}_{8\times8}$, implying two positive and one negative RRC. Of these three blocks, $B_1$ is dominated by $B_2$ and $B_3$. We observe that while the elastic net estimate is sparse, and entries corresponding to these three blocks have the correct signs, the overall structure does not accurately recover the truth. For ridge regression, the estimate is neither sparse nor low-rank. For lasso, a relatively small tuning parameter was chosen by cross-validation and hence the estimate is not sparse, although the block structure of $\hat{B}_2$ and $\hat{B}_3$ is, to some extent, discernible. For nuclear-norm regression, while the blocks $\hat{B}_2$ and $\hat{B}_3$ are more pronounced, many entries outside these blocks are nonzero, especially along the rows and columns of $\hat{B}_2$ and $\hat{B}_3$. Conversely, $\SP$ recovers $B_2$ and $B_3$ effectively, and although a few nonzero entries outside the three blocks are estimated to be non-zero, their magnitudes are small, hence producing an estimate having the smallest MSEr among the five methods. This example demonstrates how the simultaneous combination of low rank and sparsity penalization can recover this structure accurately while applying each penalty separately fails to do so.

\subsubsection{Scenario 2: Synthetic Connectivity Matrices with Varying Sample Sizes}\label{ssec:scenario2}
The results of simulation \ref{S1} show that when the sample size is fixed at 150, the MSEr for $\SP$ and nuclear-norm regression decreases substantially for $s>1$, while MSEr for all other models changes minimally, even for $s=128$.

Figure~\ref{fig_Scenario2} display the MSEr for $\hat B$ from each of the five estimates for $n \in \{50, 100, \ldots, 300\}$ and $s=1$ (\ref{S2}). The nonzero entries in $B$ consist of blocks $\mathbbm{1}_{8\times8}$, $-\mathbbm{1}_{8\times8}$, and $\mathbbm{1}_{8\times8}$, resulting in 3 RRC (each having 8 nodes) and 192 individual response-relevant variables. We can see that for all five methods, MSEr decreases with sample size, suggesting that each of these methods benefits from more information. However, the MSEr with $\SP$ and nuclear-norm regression decreases much faster than for the other methods which do not involve a nuclear-norm penalty. More specifically, for elastic net and lasso, their  when there are 300 observations are about the same as those from $\SP$ and nuclear-norm regression when there are only 150 observations.

Further, as seen in \ref{S1}, it is the simultaneous combination of the nuclear norm and $\ell 1$ penalties that is most effective.
In particular, although the decrease of MSEr for nuclear-norm regression appears to be at a rate that does not change much with $n$, the decrease for $\SP$ from $n = 150$ to $n = 200$ is much more substantial than the decrease from $n = 100$ to $n = 150$, suggesting that once the sample size exceeds the number of variables (192 in this case) $\SP$ may exhibit a leap in estimation accuracy. Moreover, as the sample size increases beyond 250, the relative mean squared error from $\SP$ is nearly zero, while more than 300 observations for nuclear-norm regression to approach zero (when $n = 300$, MSEr is still approximately 0.2).

\subsubsection{Scenario 3: Real Connectivity Matrices with Varying Signal Strength}\label{ssec:scenario3}
Figure~\ref{fig_Scenario3} shows MSEr from the five regularization methods under \ref{S3}, where the matrices $\{A_i\}_{i=1}^n$ represent functional connectivity among brain regions estimated from $n=100$ humans. To simulate signal, we once again considered 3 RRC by assigning nonzero entries in diagonal blocks of $B$, $\mathbbm{1}_{6\times6}$, $-s\times\mathbbm{1}_{5\times5}$, and $s\times\mathbbm{1}_{8\times8}$.  In this scenario, $\SP$ has lower MSEr than all other methods across all values of $s$. As in \ref{S1}, the MSEr for $\SP$ and lasso are at their highest when $s = 1$, which gives $125$ response-relevant variables in a sample size of $n=100$.

When $s>1$, while the relative mean squared error for $\SP$ decreases with $s$, the error curves for the other methods are relatively flat. This is similar to the results in Figure~\ref{fig_Scenario1}, except for the nuclear-norm regularization. A closer examination of the $\hat{B}$ from nuclear-norm regularization under \ref{S1} (see Figure~\ref{fig_SimulationBhats}(f)) and \ref{S3} (not shown) reveals that although the solutions are not sparse, the estimated blocks for RRCs are more pronounced under \ref{S1} than under \ref{S3}, most likely because the response-relevant entries constitute a larger fraction of the true $B$ under \ref{S1}. Finally, we note that the error bars in Figure~\ref{fig_Scenario3} are narrower than in Figure~\ref{fig_Scenario1} because under \ref{S1}, different synthetic connectivity matrices are generated across replicates, while under \ref{S3}, where we use real functional connectivity matrices, only the response values in $y$ differ across replicates.

In summary, all simulation scenarios examined here demonstrate that $\SP$ significantly outperforms elastic net, nuclear-norm regression, lasso, and ridge in terms of MSEr.

\section{Application in Brain Imaging}
\label{sec_BrainDataResults}
Here we report on the results of $\SP$ as applied to a real brain imaging data set. The goal is to estimate the association of functional connectivity with neuropsychological (NP) language test scores in a cohort $n=116$ HIV-infected males. The clinical characteristics of this cohort are summarized in Table~\ref{rda_demographic_data_summary}.
\begin{table}[h]
  \centering
  \footnotesize
  \begin{tabular}{@{\extracolsep{1pt}}ccccccccccc}
  \\[-1.8ex]\hline
  \hline \\[-1.8ex]
  \textbf{Characteristic} && \multicolumn{1}{c}{\textbf{Min}} && \multicolumn{1}{c}{\textbf{Median}} && \multicolumn{1}{c}{\textbf{Max}} && \multicolumn{1}{c}{\textbf{Mean}} && \multicolumn{1}{c}{\textbf{StdDev}} \\
  \hline \\[-1.8ex]
	Age										&&	20	&&	51		&&	74				&&	46.5		&&	14.8	\\
	Recent VL							&&	20	&&	20		&&	288000	&&	9228		&&	38921	\\
	Nadir CD4							&&	0		&&	193		&&	690				&&	219.5		&&	171	\\
	Recent CD4						&&	20	&&	536		&&	1354			&&	559.1		&&	286.5	\\
  \\[-1.8ex] \hline
  \hline
  \end{tabular}
	\caption{Characteristics for 116 males included in the study. The term ``CD4'' refers to CD4 cells -- white blood cells fighting the virus. The number of these cells declines with the progress of HIV infection and the patient is diagnosed with AIDS when CD4 count drops below 200. The notation ``VL'' corresponds to the viral load -- the number of HIV particles in a milliliter of blood. HIV is labeled as undetectable for VR smaller than 200 copies/ml while a high VR is considered at the level of about 100\,000 copies/ml.}
	\label{rda_demographic_data_summary}
\end{table}

For each participant, their estimated resting state functional connectivity matrix, $A_i$, and age, $X_i$, are included in the regression model.  Each functional connectivity matrix, $A_i$ was constructed according to the Destrieux atlas (aparc.a2009s) \citep{Destrieux}, which defines $p=148$ cortical brain regions. The response variable, $y$, is defined as the mean of two word-fluency test scores: the  \textit{Controlled Oral Word Association Test-FAS} and the \textit{Animal Naming Test}.

We hypothesize that brain connectivity is associated with $y$ via a subset of the $148\times 148$ brain region connectivity values. As in \eqref{model} this is modeled as
\begin{equation}
\label{modelRDA}
y_i\  =\ \langle A_i, B \rangle \, +\, [1\, X_i]\left [\begin{BMAT}(c)[0.5pt,0pt,0.7cm]{c}{cc}\beta_1 \\ \beta_2 \end{BMAT} \right ]\, +\, \varepsilon_i, \quad i=1,\ldots, n,\qquad \textrm{for}\quad \varepsilon_i\sim \mathcal{N}\big(0, \sigma^2\big).
\end{equation}

The  $\SP$ estimate of $B$ comes from tuning parameters $\lambda_N = 12.1$ and $\lambda_L=2.4$. These were selected by the 5-fold cross-validation from $225$ grid points; i.e. all pairwise combinations of $15$ values of $\lambda_N$ and $15$ values $\lambda_L$ (see, subsection~\ref{subsec_SimulationImplementation}).
The connectivity matrices, the covariate of age and the response variable were all standardized across subjects before performing cross-validation. We attempted to fit both the lasso and nuclear-norm estimates by applying one-dimensional cross validation, but in each case the estimated $B$ matrices contained all zeros.  Therefore, for displaying these latter estimates, we used the marginal tuning parameter values from the two-dimensional $\SP$ cross validation.

Figure~\ref{Results} shows the three matrix regression estimates. The estimate from $\SP$ is in Figure~\ref{Results}(b) and is flanked by estimates from the lasso (with $\lambda_L=2.4$; Figure~\ref{Results}(a)) and the nuclear-norm penalty ($\lambda_N = 12.1$; Figure~\ref{Results}(c)). We also marked 7 main brain networks which were extracted and labeled in \cite{Yeo2011} (known as \textit{Yeo seven-network parcellation}).%
\begin{figure}[h]
\centering
\begin{subfigure}{.325\textwidth}
  \centering
	\includegraphics[width=1\linewidth]{./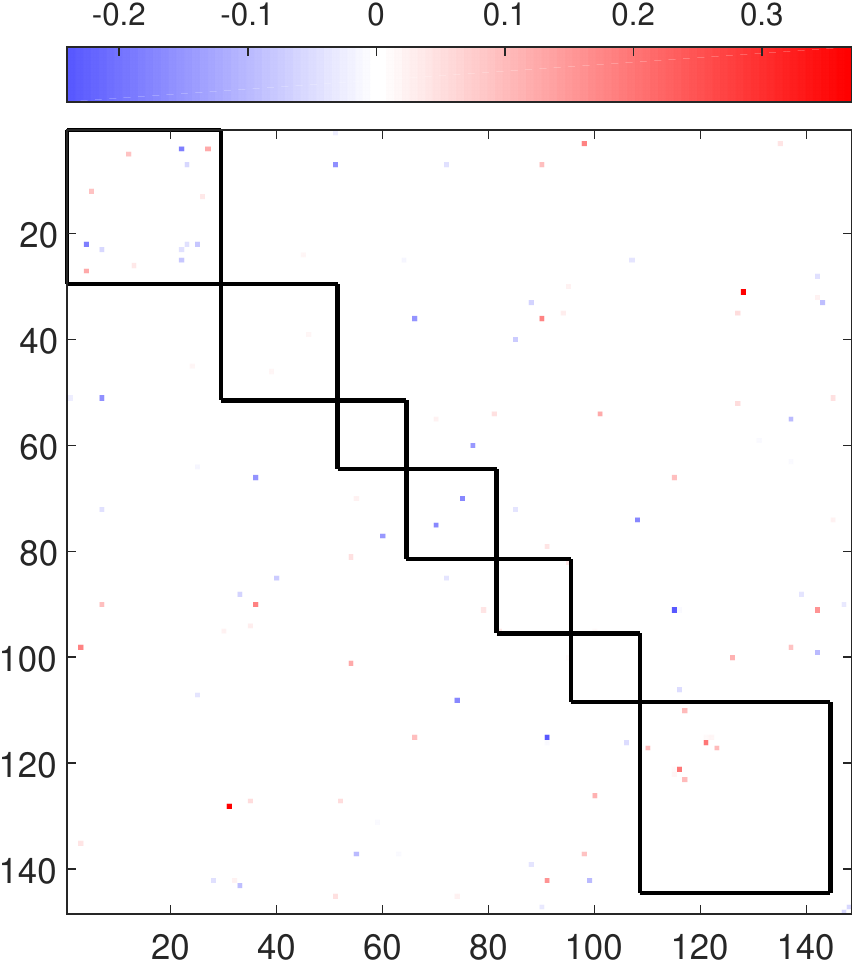}
  \caption{Lasso estimate}
\end{subfigure}%
\  
\begin{subfigure}{.325\textwidth}
  \centering
	\includegraphics[width=1\linewidth]{./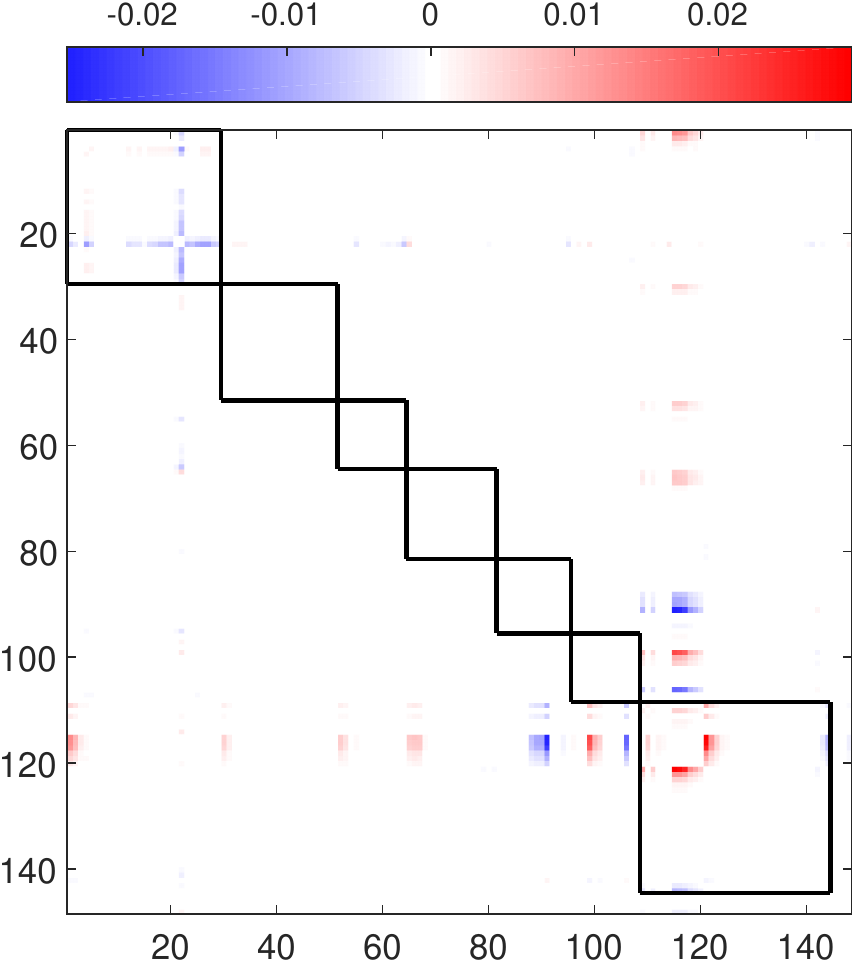}
  \caption{$\SP$ estimate}
\end{subfigure}
\ 
\begin{subfigure}{.325\textwidth}
  \centering
	\includegraphics[width=1\linewidth]{./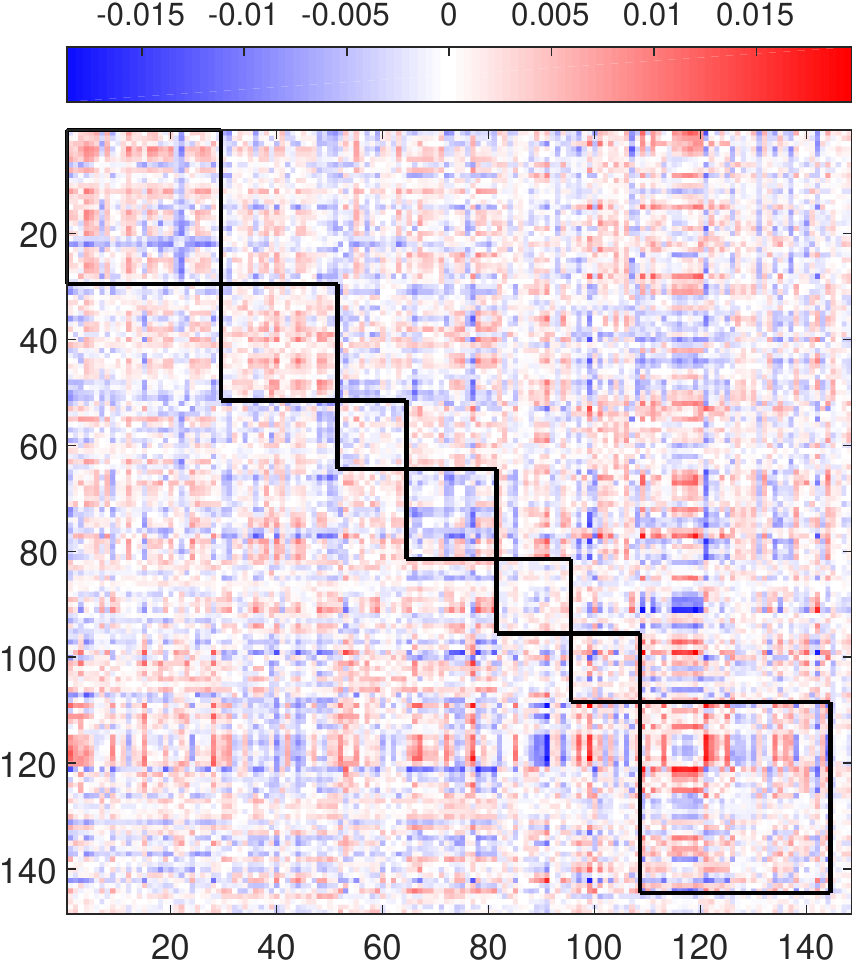}
  \caption{Nuclear-norm estimate}
\end{subfigure}
\caption{(b) presents $\SP$ estimate with $\lambda_N = 12.1$ and $\lambda_L = 2.4$ selected via 5-fold cross-validation, (a) shows the lasso solution with $\lambda_L = 2.4$ and (c) corresponds to nuclear-norm solution with $\lambda_N = 12.1$. The black boxes show the Yeo's parcellation into seven main brain networks. Nodes were permuted inside boxes in order to reveal the clusters.}
\label{Results}
\end{figure}
The graph in Figure~\ref{graphFigure}(b) reveals a very specific structure of the estimated associations. This structure is based on five brain regions which comprise the boundary between positive and negative groups of edges.  
\captionsetup[sub]{font=small,labelfont={bf,sf}}
\begin{figure}[h!]
\centering
\begin{subfigure}{.42\textwidth}
  \centering
  \includegraphics[width=1\linewidth]{./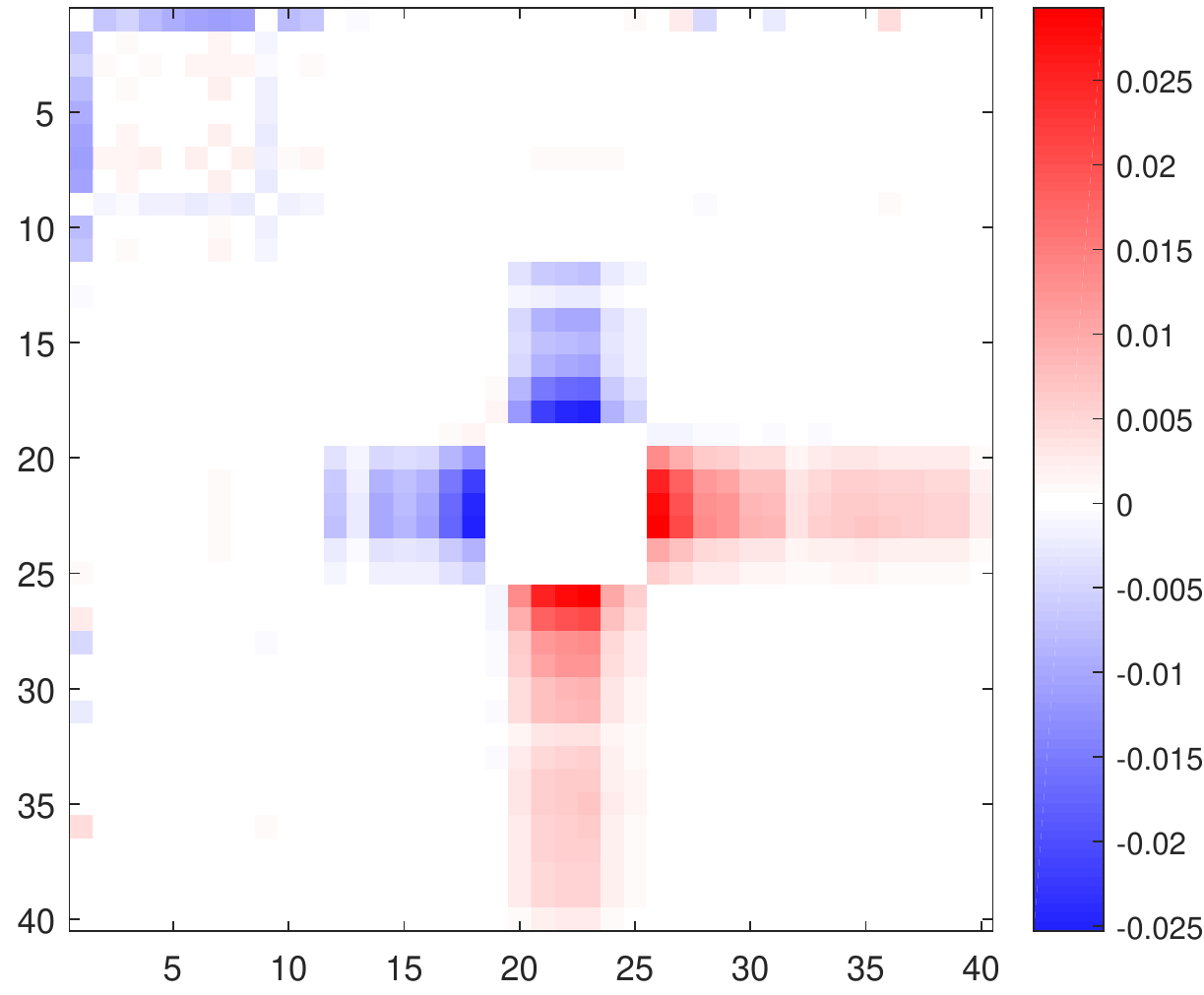}
  \caption{$\SP$ after nodes reordering}
\end{subfigure}
\ \
\begin{subfigure}{.45\textwidth}
  \centering
  \includegraphics[width=1\linewidth]{./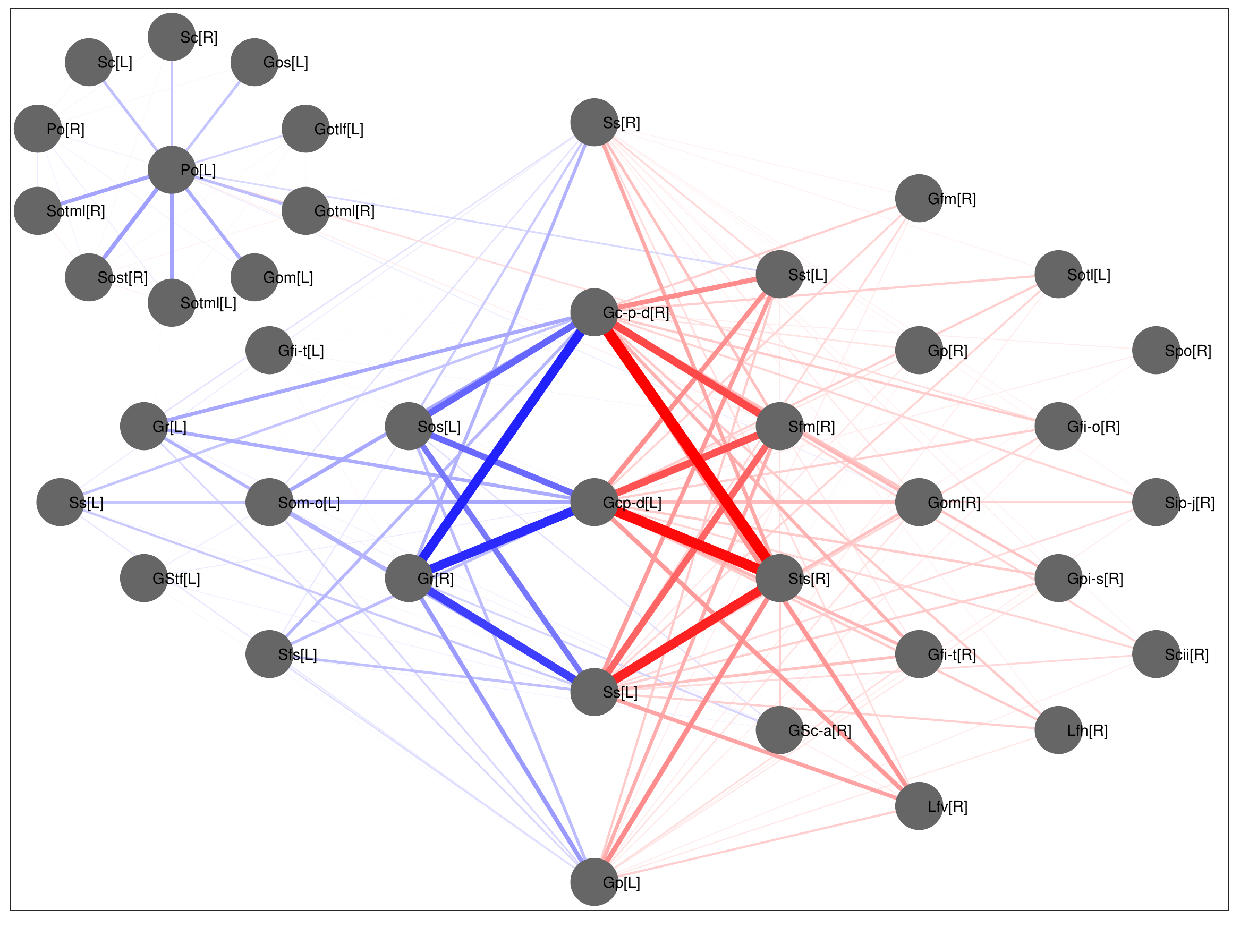}
  \caption{Graph representation}
\end{subfigure}
\caption{$\SP$ estimate restricted to 40 brain regions for which some response-related connectivities were found. (a) presents estimate after permuting nodes to achieve the cluster-by-cluster order. This was done based on the largest coefficients magnitudes of first a few left and right-singular vectors from singular value decomposition of $\hat{B}$, which indicate clusters indices. The corresponding graph representation is shown in (b).}
\label{graphFigure}
\end{figure}
These five regions are spread across the brain from the frontal lobe (left and right suborbital sulcus, Ss[L] and Ss[R]) to the area located by the corpus callosum (left and right posterior-dorsal part of the cingulate gyrus, Gcp-d[L], Gcp-d[R]), and up to the medial part of the parietal lobe (left precuneus gyrus, Gp[L]). They span two response-related groups of brain regions. First with conductivities having negative associations with the response is represented by left orbital H-shaped sulci (Sos[L]), left and right gyrus rectus (Gr[L], Gr[R]) and left medial orbital sulcus, Som-o[L]. The second, showing the positive associations, contains left superior occipital sulcus and transverse occipital sulcus, Sst[L], right middle frontal sulcus (Sfm[R]), superior temporal sulcus, Sts[R], right vertical ramus of the anterior segment of the lateral sulcus Lfv[R] and right middle occipital gyrus, Gom[R]. Interestingly, there is also the third response-relevant group of brain regions clearly visible in Figure~\ref{graphFigure}(b). It has a different structure than two aforementioned groups and forms star-shaped subgraph of negative effects, with the center in the occipital pole, Po[L]. Brain networks were visualized in Figure~\ref{brainView} by using BrainNet viewer (http://www.nitrc.org/projects/bnv/).
\captionsetup[sub]{font=small,labelfont={bf,sf}}
\begin{figure}[h!]
\centering
\begin{subfigure}{.3\textwidth}
  \centering
  \includegraphics[width=1\linewidth]{./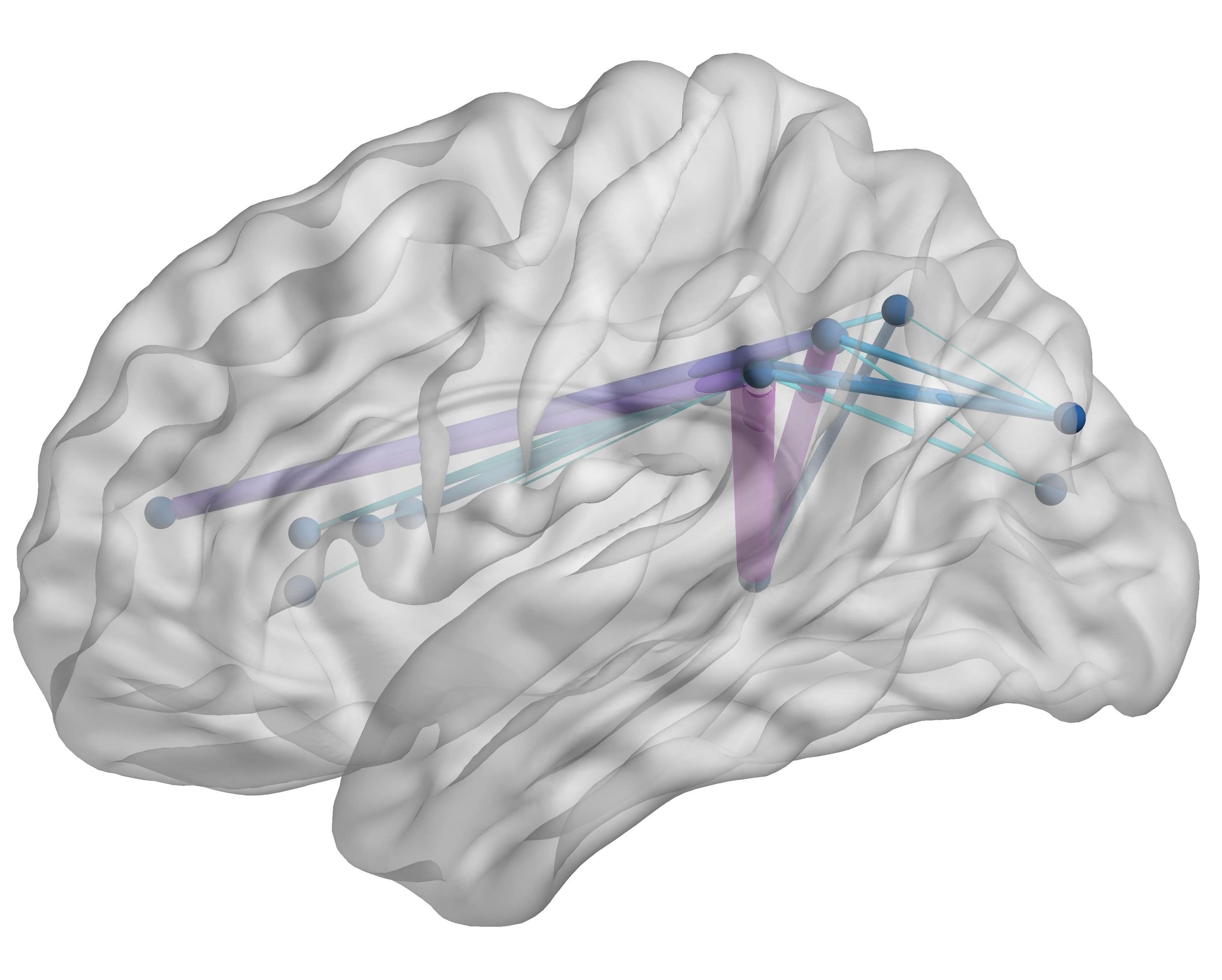}
  \caption{Sagittal view on PE}
\end{subfigure}
\
\begin{subfigure}{.2\textwidth}
  \centering
  \includegraphics[width=1\linewidth]{./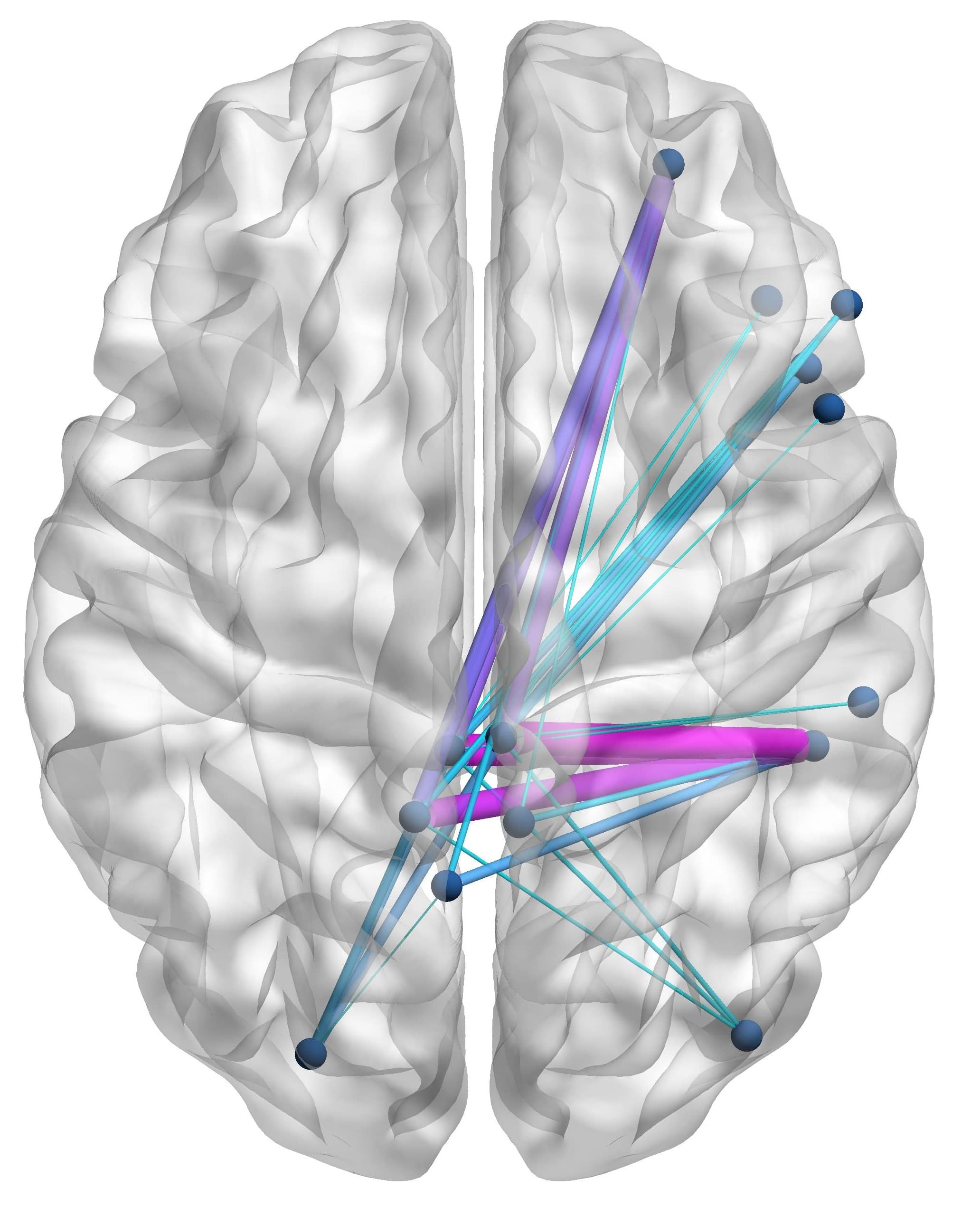}
  \caption{Axial view on PE}
\end{subfigure}
\
\begin{subfigure}{.31\textwidth}
  \centering
  \includegraphics[width=1\linewidth]{./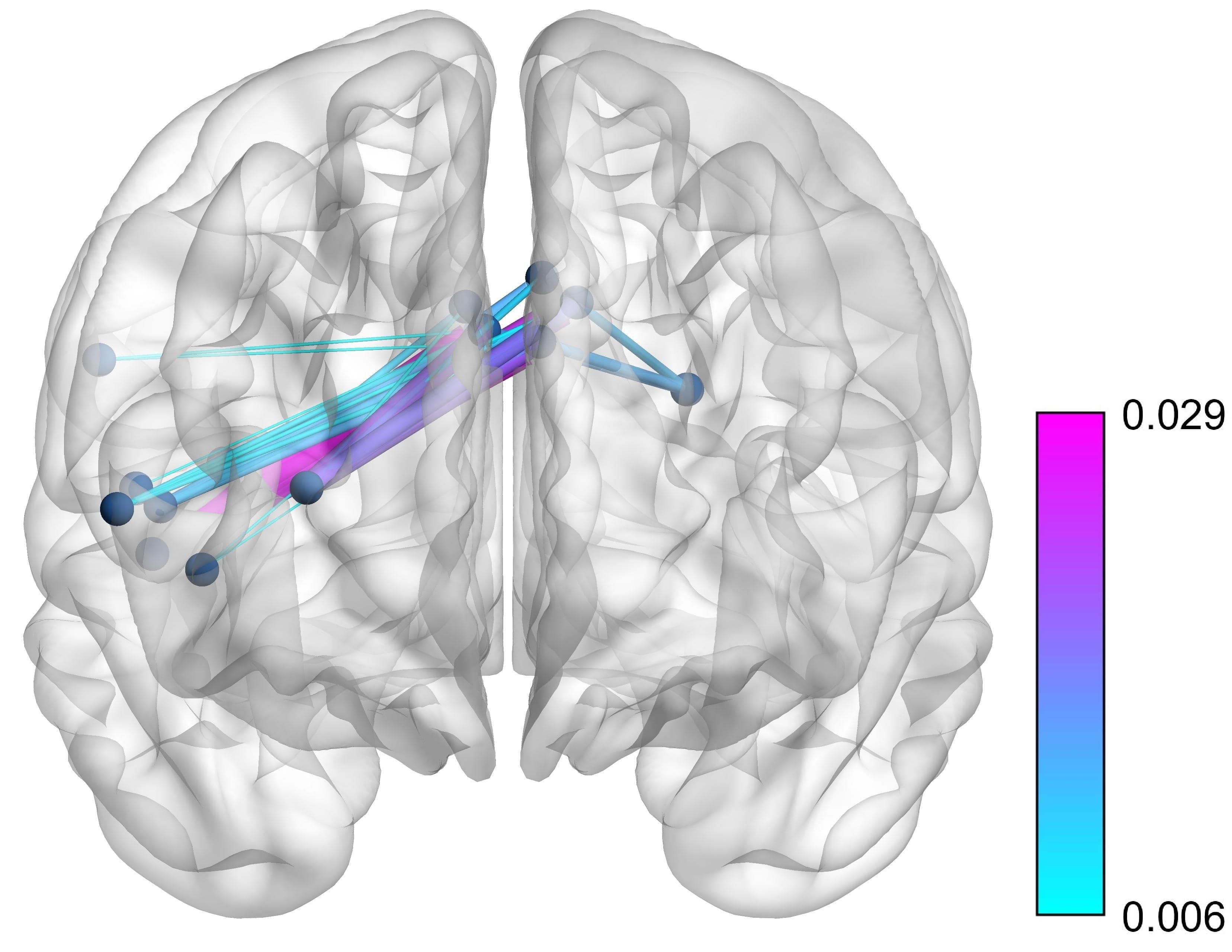}
  \caption{Coronal view on PE}
\end{subfigure}
\vspace{10pt}

\begin{subfigure}{.3\textwidth}
  \centering
  \includegraphics[width=1\linewidth]{./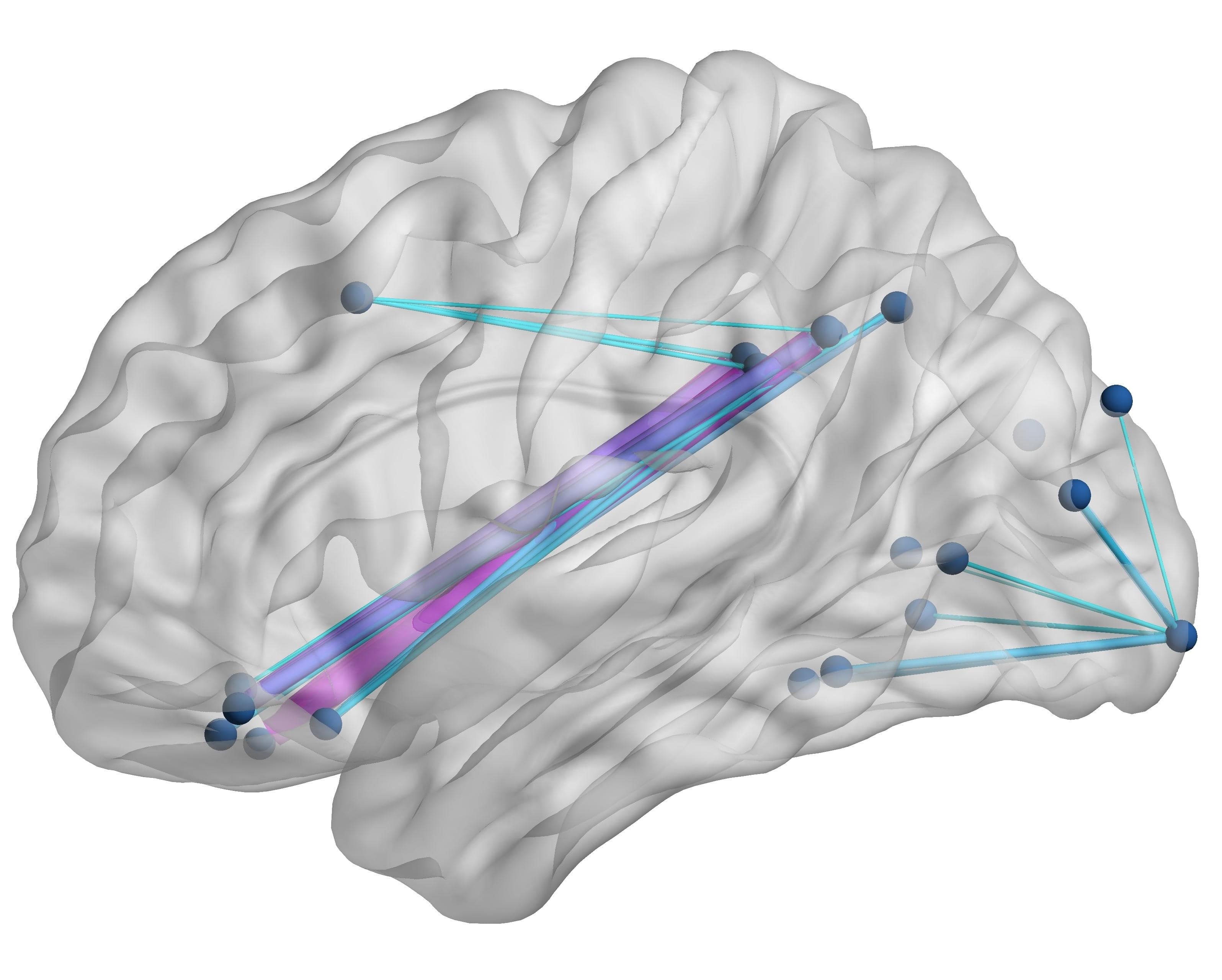}
  \caption{Sagittal view on NE}
\end{subfigure}
\
\begin{subfigure}{.2\textwidth}
  \centering
  \includegraphics[width=1\linewidth]{./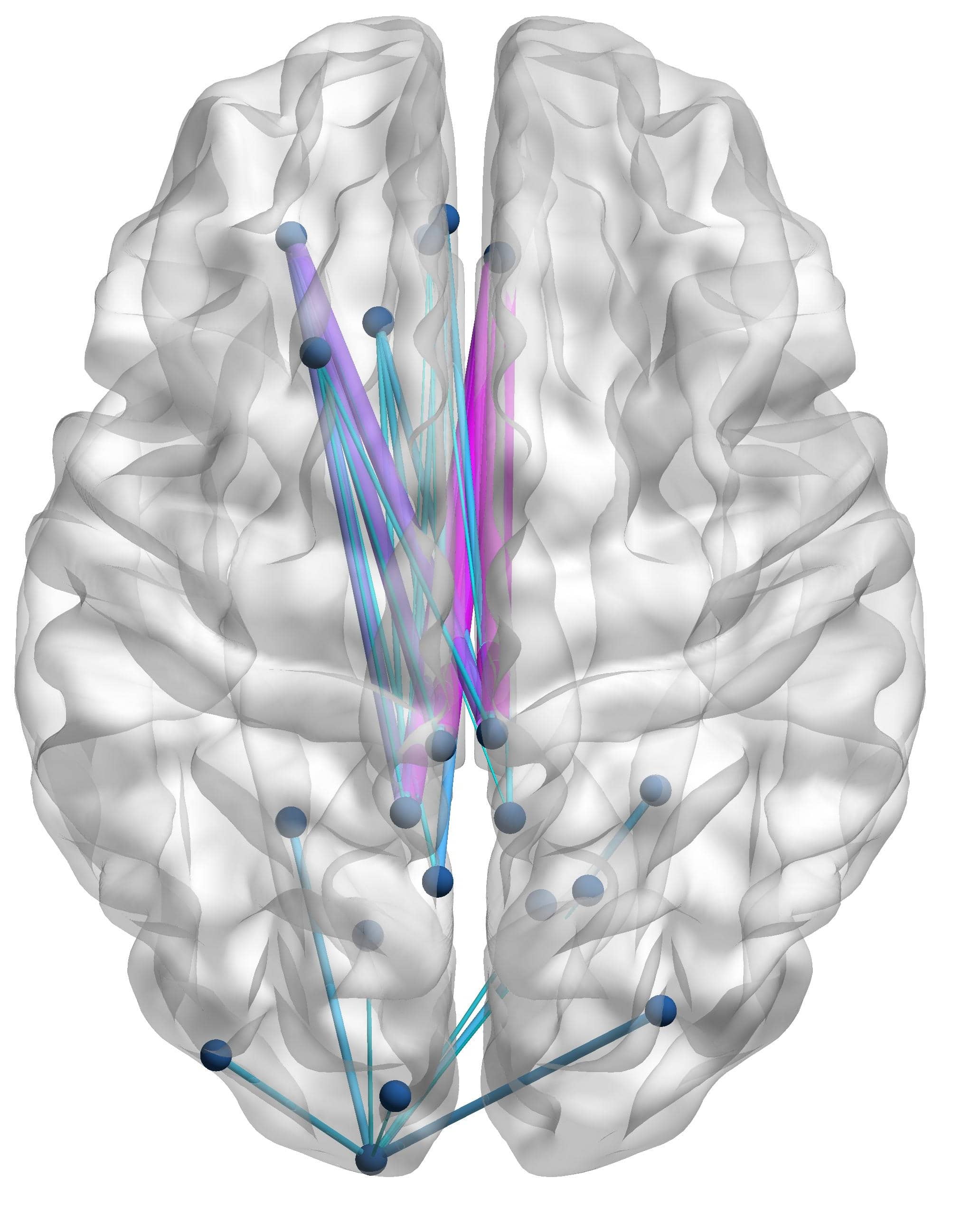}
  \caption{Axial view on NE}
\end{subfigure}
\
\begin{subfigure}{.31\textwidth}
  \centering
  \includegraphics[width=1\linewidth]{./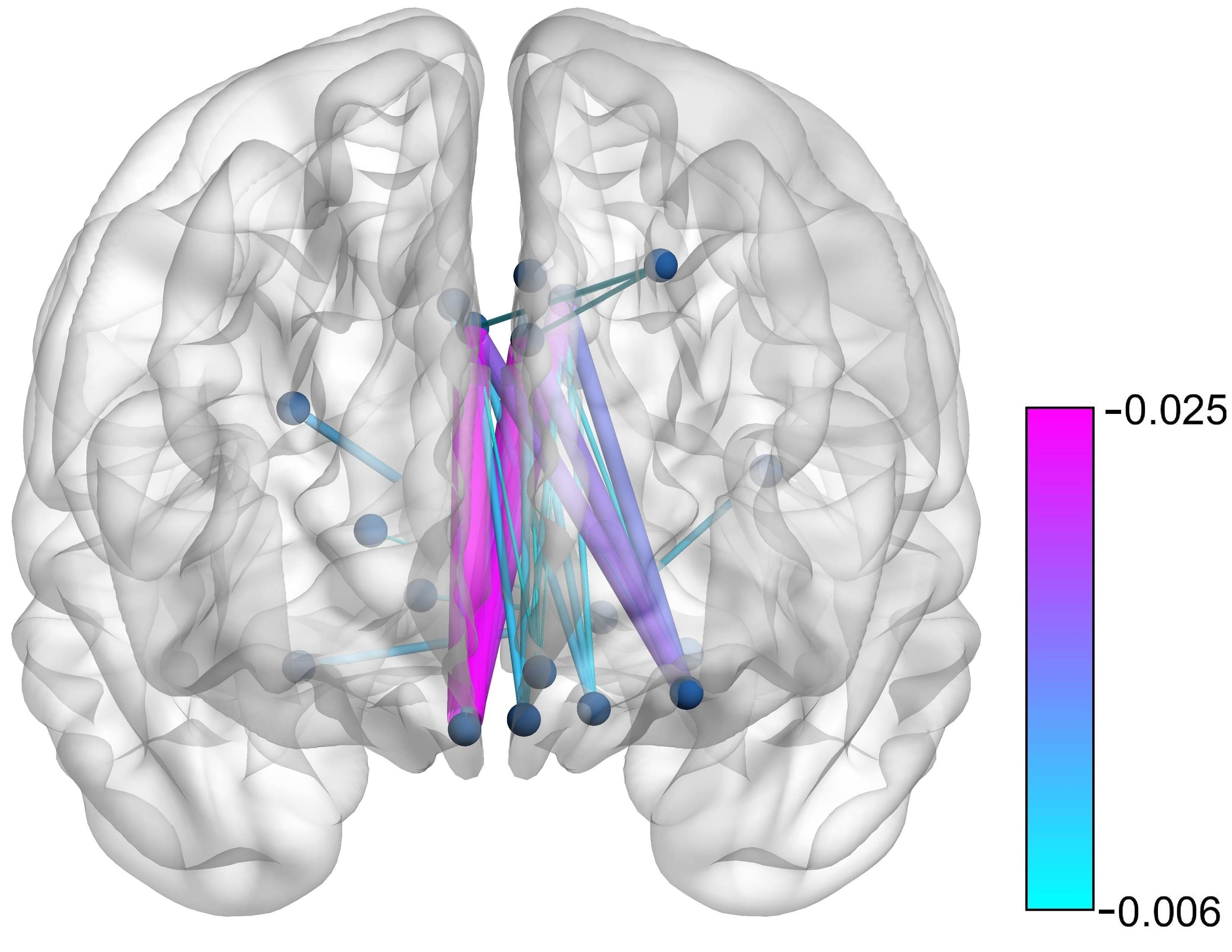}
  \caption{Coronal view on NE}
\end{subfigure}
\caption{The brain network visualization of response-relevant connectivities found by $\SP$. Different views on edges corresponding to positive entries of $\hat{B}^{\ES{S}}$ (PE) are presented in (a)--(c). Negative $\hat{B}^{\ES{S}}$ entries (NE) are shown in (d)--(f).}
\label{brainView}
\end{figure}

\section{Discussion}
\label{sec_Discussion}
We have proposed a novel way of estimating the regression coefficients for the scalar-on-matrix regression problem and derived theoretical properties of the estimator, which takes the form of a matrix. One of the primary contributions of this work is that it provides an accurate estimation of this matrix via a combination of two penalty terms: a nuclear norm and a $\ell_1$ norm. This approach may be viewed as an extension of both low-rank and sparse regression estimation approximations resulting in matrix estimates dominated by blocks structure. Advantages of our approach include: the estimation of meaningful, connected-graph regression coefficient structure; a computationally efficient algorithm via ADMM; and the ability to choose optimal tuning parameters.

In our simulation studies,  in Section~\ref{sec_SimulationExperiments} the first scenario illustrates the advantages of $\SP$ over several competing methods with respect to varying signal strengths. The second simulation scenario shows the performance of $\SP$ across a range of sample sizes. The third scenario shows how $\SP$ behaves when the data come from real structural connectivity matrices. In each case, $\SP$ outperforms all other methods considered.

Finally, we applied $\SP$ to an actual study of HIV-infected participants which aimed to understand the association of a language-domain outcome with functional connectivity. The estimated regression coefficients matrix revealed three response-related clusters of brain regions --- the smaller one dominated by a star-shaped structure of positive effects and two larger clumps (one negative and one positive) which shared 5 common brain regions. From the perspective of the RRCs recovery --- the notion which we introduce --- this can be treated as finding the overlapping clusters. However, $\SP$ can be also used to reveal much more complex structures than block diagonal matrices, which constitute a kind of the ``model signals'' for us. The class of sparse and low-rank signals includes also the matrices having symmetric, non-diagonal blocks, which may be important in some applications, like correlation matrices recovery.

Speed and stability are important considerations for the implementation of any complex estimation method. We have implemented the $\SP$ estimation process using the ADMM algorithm by dividing the original optimization problem (for given tuning parameters $\lambda_N$ and $\lambda_L$) into three subproblems, deriving their analytical solutions and computing them iteratively until the convergence. For each such iteration, our implementation precisely selects the step sizes based on the idea of residual balancing which turns out to work very fast and stable in practice. The final solution is obtained after cross-validation applied for the optimal selection of tuning parameters.

We note that the weights matrix $W$ must be prespecified at the beginning of the $\SP$ algorithm. The default setting (which we always used in this article) is a matrix of zeros on its diagonal and ones on off-diagonal entries. However, our implementation allows for an arbitrary choice of nonnegative weights. One may consider the selection based on the external information, if such is available, imposing weaker penalties for the entries being already reported as response-relevant in the particular application. The other possible strategy is an adaptive construction of $W$. It may rely on using the default $W$ first and update it, based on $\SP$ estimate, so as the large magnitude of $\hat{B}^{\ES{S}}_{j,l}$ generates small value of $W_{j,l}$. This procedure emphasizes the findings and can potentially improve variable selection accuracy.

In the future we want to perform a valid inference on the estimated clusters. Additionally, as developed here, $\SP$ addresses scalar-on-matrix regression models involving a continuous response. However, binary and count responses are often of interest. Indeed, an important problem that arises in studies of HIV-infected individuals is that of understanding the association of (binary) impairment status and neuro-connectivity. These more general settings will motivate future work in the estimation problem for scalar-on-matrix regression.

\section*{Declaration of interest}
The authors confirm that there are no known conflicts of interest associated with this publication and there has been no significant financial support for this work that could have influenced its outcome.
\section*{Acknowledgements}
DB, JH, TWR and JG were partially supported by the NIMH grant R01MH108467. D.B. was also funded by Wroclaw University of Science and Technology resources (8201003902, MPK: 9130730000). Data were provided [in part] by the Human Connectome Project, WU-Minn Consortium (Principal Investigators: David Van Essen and Kamil Ugurbil; 1U54MH091657) funded by the 16 NIH Institutes and Centers that support the NIH Blueprint for Neuroscience Research; and by the McDonnell Center for Systems Neuroscience at Washington University.
\vspace{20 pt}

\bibliographystyle{apalike}
\bibliography{references}

\begin{thebibliography}{}

\bibitem[Basu et~al., 2018]{Basu-low-rank-2018}
Basu, S., Li, X., and Michailidis, G. (2018).
\newblock Low rank and structured modeling of high-dimensional vector
  autoregressions.
\newblock arXiv:1812.03568.

\bibitem[Cand{\`e}s and Recht, 2009]{Candes2009}
Cand{\`e}s, E.~J. and Recht, B. (2009).
\newblock Exact matrix completion via convex optimization.
\newblock {\em Foundations of Computational Mathematics}, 9(6):717.

\bibitem[Chandrasekaran et~al., 2012]{Chandrasekaran-latent-2012}
Chandrasekaran, V., Parrilo, P.~A., and Willsky, A.~S. (2012).
\newblock Latent variable graphical model selection via convex optimization.
\newblock {\em The Annals of Statistics}, 40(4):1935--1967.

\bibitem[Ciccone et~al., 2019]{Ciccone-robust-2019}
Ciccone, V., Ferrante, A., and Zorzi, M. (2019).
\newblock Robust identification of ``sparse plus low-rank'' graphical models:
  An optimization approach.
\newblock arXiv:1901.10613.

\bibitem[Destrieux et~al., 2010]{Destrieux}
Destrieux, C., Fischl, B., Dale, A., and Halgren, E. (2010).
\newblock Automatic parcellation of human cortical gyri and sulci using
  standard anatomical nomenclature.
\newblock {\em NeuroImage}, 53(1):1--15.

\bibitem[Fischl, 2012]{FreeS}
Fischl, B. (2012).
\newblock Freesurfer.
\newblock {\em NeuroImage}, 62:774--781.

\bibitem[Foti et~al., 2016]{Foti-sparse-2016}
Foti, N., Nadkarni, R., Lee, A. K.~C., and Fox, E.~B. (2016).
\newblock Sparse plus low-rank graphical models of time series for functional
  connectivity in meg.
\newblock SIGKDD Workshop on Mining and Learning from Time Series.

\bibitem[Friedman et~al., 2010]{glmnet}
Friedman, J., Hastie, T., and Tibshirani, R. (2010).
\newblock Regularization paths for generalized linear models via coordinate
  descent.
\newblock {\em Journal of Statistical Software}, 33(1):1--22.

\bibitem[Frobenius, 1912]{FrobTheorem}
Frobenius, G. (1912).
\newblock Ueber matrizen aus nicht negativen elementen.
\newblock {\em S.-B. Preuss Acad. Wiss.}, pages 456--477.

\bibitem[Gabay and Mercier, 1976]{Gabay1976ADA}
Gabay, D. and Mercier, B. (1976).
\newblock A dual algorithm for the solution of nonlinear variational problems
  via finite element approximation.
\newblock volume~2, pages 17--40.

\bibitem[Goldsmith et~al., 2014]{Goldsmith-smooth-2014}
Goldsmith, J., Huang, L., and Crainiceanu, C.~M. (2014).
\newblock Smooth scalar-on-image regression via spatial {B}ayesian variable
  selection.
\newblock {\em Journal of Computational and Graphical Statistics},
  23(1):46--64.
\newblock PMID: 24729670.

\bibitem[Hastie et~al., 2015]{Hastie-matrix-2015}
Hastie, T., Mazumder, R., Lee, J.~D., and Zadeh, R. (2015).
\newblock Matrix completion and low-rank {SVD} via fast alternating least
  squares.
\newblock {\em Journal of Machine Learning Research}, 16(1):3367--3402.

\bibitem[Hoerl and Kennard, 1970]{ridge}
Hoerl, A.~E. and Kennard, R.~W. (1970).
\newblock Ridge regression: biased estimation for nonorthogonal problems.
\newblock {\em Technometrics}, 12(1):55--67.

\bibitem[Kolda and Bader, 2009]{Kolda-tensor-2009}
Kolda, T. and Bader, B. (2009).
\newblock Tensor decompositions and applications.
\newblock {\em SIAM Review}, 51(3):455--500.

\bibitem[Li et~al., 2010]{Li-dimension-2010}
Li, B., Kim, M.~K., and Altman, N. (2010).
\newblock On dimension folding of matrix- or array-valued statistical objects.
\newblock {\em The Annals of Statistics}, 38(2):1094--1121.

\bibitem[Mazumder et~al., 2010]{Mazumder-spectral-2010}
Mazumder, R., Hastie, T., and Tibshirani, R. (2010).
\newblock Spectral regularization algorithms for learning large incomplete
  matrices.
\newblock {\em Journal of Machine Learning Research}, 11:2287--2322.

\bibitem[Recht et~al., 2010]{Rech2010}
Recht, B., Fazel, M., and Parrilo, P.~A. (2010).
\newblock Guaranteed minimum-rank solutions of linear matrix equations via
  nuclear norm minimization.
\newblock {\em SIAM Rev.}, 52(3):471--501.

\bibitem[Reiss and Ogden, 2007]{Reiss-functional-2007}
Reiss, P.~T. and Ogden, R.~T. (2007).
\newblock Functional principal component regression and functional partial
  least squares.
\newblock {\em Journal of the American Statistical Association},
  102(479):984--996.

\bibitem[Reiss and Ogden, 2010]{Reiss-functional-2010}
Reiss, P.~T. and Ogden, R.~T. (2010).
\newblock Functional generalized linear models with images as predictors.
\newblock {\em Biometrics}, 66(1):61--69.

\bibitem[Rockafellar, 1970]{Rockafellar1970}
Rockafellar, R.~T. (1970).
\newblock {\em Convex analysis}.
\newblock Princeton, N.J. : Princeton University Press.

\bibitem[Tibshirani, 1996]{LassoF}
Tibshirani, R. (1996).
\newblock Regression shrinkage and selection via the lasso.
\newblock {\em Journal of the Royal Statistical Society: Series B},
  58(1):267--288.

\bibitem[Tibshirani, 2013]{tibshirani2012lasso}
Tibshirani, R. (2013).
\newblock The lasso problem and uniqueness.
\newblock {\em Electronic Journal of Statistics}, 7:1456--1490.

\bibitem[van~den Heuvel and Sporns, 2011]{Richclub}
van~den Heuvel, M.~P. and Sporns, O. (2011).
\newblock Rich-club organization of the human connectome.
\newblock {\em Journal of Neuroscience}, 31(44):15775--15786.

\bibitem[Van~Essen et~al., 2013]{conn}
Van~Essen, D.~C., Smith, S.~M., Barch, D.~M., Behrens, T. E.~J., Yacoub, E.,
  and Ugurbil, K. (2013).
\newblock The wu-minn human connectome project: An overview.
\newblock {\em NeuroImage}, 80:62--79.

\bibitem[Wang et~al., 2014]{Wang-regularized-2014}
Wang, X., Nan, B., Zhu, J., and Koeppe, R. (2014).
\newblock Regularized {3D} functional regression for brain image data via haar
  wavelets.
\newblock {\em The Annals of Applied Statistics}, 8(2):1045--1064.

\bibitem[Wohlberg, 2017]{Wohlberg2017ADMMPP}
Wohlberg, B. (2017).
\newblock Admm penalty parameter selection by residual balancing.
\newblock {\em ArXiv}, abs/1704.06209.

\bibitem[Xu et~al., 2017a]{Xu2017AdaptiveRA}
Xu, Z., Figueiredo, M. A.~T., Yuan, X., Studer, C., and Goldstein, T. (2017a).
\newblock Adaptive relaxed admm: Convergence theory and practical
  implementation.
\newblock {\em 2017 IEEE Conference on Computer Vision and Pattern Recognition
  (CVPR)}, pages 7234--7243.

\bibitem[Xu et~al., 2017b]{Xu:2017}
Xu, Z., Taylor, G., Li, H., Figueiredo, M. A.~T., Yuan, X., and Goldstein, T.
  (2017b).
\newblock Adaptive consensus admm for distributed optimization.
\newblock In {\em Proceedings of the 34th International Conference on Machine
  Learning - Volume 70}, pages 3841--3850.

\bibitem[Yeo et~al., 2011]{Yeo2011}
Yeo, B.~T., Krienen, F.~M., Sepulcre, J., Sabuncu, M.~R., Lashkari, D.,
  Hollinshead, M., Roffman, J.~L., Smoller, J.~W., Zollei, L., Polimeni, J.~R.,
  Fischl, B., Liu, H., and Buckner, R.~L. (2011).
\newblock The organization of the human cerebral cortex estimated by intrinsic
  functional connectivity.
\newblock {\em Neurophysiol}, 106(3):1125--65.

\bibitem[Zhao et~al., 2019]{HubDet}
Zhao, X., Wu, Q., Chen, Y., Song, X., Ni, H., and Ming, D. (2019).
\newblock Hub patterns-based detection of dynamic functional network metastates
  in resting state: A test-retest analysis.
\newblock {\em Frontiers in neuroscience}, 13(856).

\bibitem[Zhou and Li, 2014]{Zhou-regularized-2014}
Zhou, H. and Li, L. (2014).
\newblock Regularized matrix regression.
\newblock {\em Journal of the Royal Statistical Society: Series B (Statistical
  Methodology)}, 76(2):463--483.

\bibitem[Zhou et~al., 2013]{Zhou-tensor-2013}
Zhou, H., Li, L., and Zhu, H. (2013).
\newblock Tensor regression with applications in neuroimaging data analysis.
\newblock {\em Journal of the American Statistical Association},
  108(502):540--552.
\newblock PMID: 24791032.

\bibitem[Zou and Hastie, 2005]{elasticNet}
Zou, H. and Hastie, T. (2005).
\newblock Regularization and variable selection via the elastic net.
\newblock {\em Journal of the Royal Statistical Society: Series B},
  67(2):301--320.

\end{thebibliography}

\newpage
\appendix

\section*{Appendix}
\appendix
\section{Subproblems in ADMM algorithm}\label{subs:subproblems}
We use the letter $\mathcal{A}$ to denote the $n\times p^2$ matrix that collects the vectorized matrices, $A_1,\ldots, A_n$, in rows:  $\mathcal{A}:= \big [\vvec(A_1)|\ldots| \vvec(A_n)\big ]\T{T}$. The submatrix of $\mathcal{A}$ built from columns that correspond to the upper-diagonal entries of matrices $A_i$s (without diagonal entries) is denoted as $\mathcal{A}_U$. Accordingly, the columns of $\mathcal{A}$ that correspond to the symmetric entries from the lower-diagonal part is denoted by $\mathcal{A}_L$. Our implementation is derived under the assumption that all the matrices $A_i$s are symmetric and have zeros on their diagonals. Therefore, $\mathcal{A}_U = \mathcal{A}_L$.
\subsection{Analytical solution to \eqref{Update1}}
The considered update is
\begin{equation*}
\label{appendix1}
B^{[k+1]}:=\ \argmin {B}\bigg\{\,\sum_{i=1}^n\Big(y_i -\langle A_i, B \rangle\Big)^2 \ +\ \delta^{[k]}_1\Big\|\, B - D^{[k]}  - \frac{\dual_1^{[k]}}{\delta^{[k]}_1}\,\Big\|_F^2\,\bigg\}.
\end{equation*}
We introduce a variable $\widetilde{B}:= B - D^{[k]}  - \frac{\dual_1^{[k]}}{\delta^{[k]}_1}$, which gives $B = \widetilde{B} + D^{[k]}  + \frac{\dual_1^{[k]}}{\delta^{[k]}_1}$. This yields the equivalent problem
\begin{equation}
\label{eqRidge}
\widetilde{B}^*=\ \argmin {\widetilde{B}}\bigg\{\,\sum_{i=1}^n\Big(\underbrace{y_i -\langle A_i, D^{[k]}  + \frac{\dual_1^{[k]}}{\delta^{[k]}_1}\rangle}_{\widetilde{y}_i} - \langle A_i, \widetilde{B} \rangle\Big)^2 \ +\ \delta^{[k]}_1\big\|\,\widetilde{B}\,\big\|_F^2\,\bigg\}
\end{equation}
In terms of the stacked vectorized matrices in $\mathcal{A}$, \eqref{eqRidge} takes the form of ridge regression with the objective
\begin{equation*}
\begin{split}
\big\|\widetilde{y} -& \mathcal{A}\vvec(\widetilde{B})\big\|_2^2\ +\ \delta^{[k]}_1\big\|\vvec(\widetilde{B})\big\|_2^2 =\\
 &\big\|\widetilde{y} - \mathcal{A}_L\vvec_L(\widetilde{B}) - \mathcal{A}_U\vvec_U(\widetilde{B})\big\|_2^2\ +\ \delta^{[k]}_1\Big(\,\big\|\vvec_L(\widetilde{B})\big\|_2^2\ +\ \big\|\vvec_U(\widetilde{B})\big\|_2^2\ +\ \sum_{i=1}^p\widetilde{B}_{i,i}^{\,2}\,\Big),
\end{split}
\end{equation*}
where $\vvec_U(\widetilde{B})$ and $\vvec_L(\widetilde{B})$ are the vectors obtained from the upper and lower diagonal elements of $\widetilde{B}$, respectively.
\begin{proposition}
\label{prop2}
Suppose that all the matrices $A_i$'s have zeros on the diagonals and consider the minimization problem \eqref{eqRidge} with $\delta^{[k]}_1>0$. Then, $\widetilde{B}^*$ has zeros on the diagonal.
\end{proposition}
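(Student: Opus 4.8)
The plan is to exploit the fact that the diagonal entries of $\widetilde{B}$ are decoupled from the rest of the objective in \eqref{eqRidge}. First I would observe that, since every $A_i$ has zeros on its diagonal, the Frobenius inner product $\langle A_i, \widetilde{B}\rangle = \sum_{j\ne l}(A_i)_{j,l}\widetilde{B}_{j,l}$ involves only the off-diagonal entries of $\widetilde{B}$. Equivalently, the columns of $\mathcal{A}$ indexed by diagonal positions vanish, so $\mathcal{A}\vvec(\widetilde{B})$ is unchanged when the diagonal of $\widetilde{B}$ is modified, and hence the data-fidelity term $\|\widetilde{y}-\mathcal{A}\vvec(\widetilde{B})\|_2^2 = \|\widetilde{y} - \mathcal{A}_L\vvec_L(\widetilde{B}) - \mathcal{A}_U\vvec_U(\widetilde{B})\|_2^2$ depends only on $\vvec_U(\widetilde{B})$ and $\vvec_L(\widetilde{B})$.

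Next I would use the decomposition of the ridge penalty displayed just above the proposition, $\delta^{[k]}_1\|\vvec(\widetilde{B})\|_2^2 = \delta^{[k]}_1\big(\|\vvec_L(\widetilde{B})\|_2^2 + \|\vvec_U(\widetilde{B})\|_2^2 + \sum_{i=1}^p \widetilde{B}_{i,i}^{\,2}\big)$. Writing the objective of \eqref{eqRidge} as $\Phi(\widetilde{B})$, the two observations together show that $\Phi$ splits as a function of the off-diagonal entries alone plus the additive term $\delta^{[k]}_1\sum_{i=1}^p \widetilde{B}_{i,i}^{\,2}$; because $\delta^{[k]}_1>0$, this diagonal term is strictly minimized, and uniquely so, by setting $\widetilde{B}_{i,i}=0$ for all $i$, regardless of the off-diagonal part.

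To conclude, I would argue by contradiction: if a minimizer $\widetilde{B}^*$ had some $\widetilde{B}^*_{i,i}\ne 0$, let $\widetilde{B}^{\circ}$ be $\widetilde{B}^*$ with its diagonal zeroed. The fit term is unchanged while the penalty strictly decreases, so $\Phi(\widetilde{B}^{\circ}) < \Phi(\widetilde{B}^*)$, contradicting optimality. Hence every minimizer --- and in particular $\widetilde{B}^*$ --- has zeros on the diagonal. I would also note in passing that the strictly convex term $\delta^{[k]}_1\|\vvec(\widetilde{B})\|_2^2$ makes $\Phi$ strictly convex and coercive, so \eqref{eqRidge} has a unique solution, which is therefore well defined and has zero diagonal.

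The step needing the most care --- though it is more a matter of bookkeeping than a genuine obstacle --- is making explicit that the vanishing diagonals of the $A_i$ are exactly what removes the diagonal of $\widetilde{B}$ from the fit term, and that the hypothesis $\delta^{[k]}_1>0$ is essential: if $\delta^{[k]}_1$ were zero the diagonal entries would be entirely unconstrained and the conclusion would fail.
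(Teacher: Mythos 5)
Your proposal is correct and follows essentially the same route as the paper: both arguments note that the zero diagonals of the $A_i$ make the fit term insensitive to the diagonal of $\widetilde{B}$, and then zero out any nonzero diagonal entry to strictly decrease the ridge penalty, contradicting optimality (the paper zeros one offending entry, you zero the whole diagonal --- an immaterial difference). Your added remark on strict convexity and uniqueness is a correct bonus but not needed for the statement.
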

\begin{proof}
Suppose that $\widetilde{B}^*_{k,k}\neq 0$ for some $k\in\{1,\ldots,p\}$ and construct matrix $\overline{B}$ by setting $\overline{B}_{k,k}:=0$ and $\overline{B}_{i,j}:=\widetilde{B}^*_{i,j}$ for $(i,j)\neq (k,k)$. Obviously, we have that $\langle A_i, \widetilde{B}^* \rangle = \langle A_i, \overline{B} \rangle$ for each $i$. Denoting the objective function in \eqref{eqRidge} by $F$, we therefore get
\begin{equation*}
\begin{split}
F(\widetilde{B}^*) - F(&\overline{B})\ = \ \delta^{[k]}_1\big\|\,\widetilde{B}^*\,\big\|_F^2 - \delta^{[k]}_1\big\|\,\overline{B}\,\big\|_F^2 =\\
&\underbrace{\delta^{[k]}_1\sum_{(i,j)\neq(k,k)}(\widetilde{B}^*_{i,j})^2\, -\, \delta^{[k]}_1\sum_{(i,j)\neq(k,k)}\big(\overline{B}_{i,j}\big)^2}_{=\,0}\ +\ \delta^{[k]}_1\big(\widetilde{B}^*_{k,k}\big)^2\ >\ 0.
\end{split}
\end{equation*}
Consequently $F(\overline{B}) < F(\widetilde{B}^*)$, which contradicts the optimality of $\widetilde{B}^*$.
\end{proof}

Proposition~\ref{prop2} together with Proposition~\ref{prop1} imply that $\widetilde{B}^*$ is a symmetric matrix with zeros on the diagonal, which allows us to confine the minimization problem by the conditions $\vvec_L(\widetilde{B}) = \vvec_U(\widetilde{B}): = c$ and $\widetilde{B}_{i,i}=0$. This yields
\begin{equation}
\label{eqRidge2}
\vvec_U(\widetilde{B}^*)=\ \argmin {c\in\mathbb{R}^{(p^2-p)/2}}\bigg\{\,\big\|\widetilde{y} - 2\mathcal{A}_Uc\big\|_2^2\ +\ 2\delta^{[k]}_1\|c\|_2^2\bigg\}.
\end{equation}
In summary, it suffices to solve the ridge regression problem \eqref{eqRidge2} to obtain $\widetilde{B}^*$ and then recover $B^{[k+1]}$ by setting $B^{[k+1]} = \widetilde{B}^* + D^{[k]}  + \frac{\dual_1^{[k]}}{\delta^{[k]}_1}$. 

Now assume the (reduced) singular value decomposition (SVD) of $2\mathcal{A}_U$ is given; i.e., write $2\mathcal{A}_U = U\operatorname{diag}(d_1,\ldots, d_n)V\T{T}$, where $U\in \mathbb{R}^{n\times n}$ is an orthogonal matrix, $V\in \mathbb{R}^{(p^2-p)/2\times n}$ has orthogonal columns, and $d_1,\ldots, d_n$ are the $n$ singular values. The solution to \eqref{eqRidge2} can then be obtained as
\begin{equation*}
\label{ridgeSol}
\vvec_U(\widetilde{B}^*)\ =\ 2\Big(4\mathcal{A}_U\T{T}\mathcal{A}_U + 2\delta^{[k]}_1\mathbf{I}\Big)^{-1}\mathcal{A}_U\T{T}\widetilde{y}\ =\  V\left(\left [\begin{BMAT}(c)[0.5pt,0pt,0.7cm]{c}{ccc}d_1/\big(d_1^2+2\delta^{[k]}_1\big) \\\vdots \\d_n/\big(d_n^2+2\delta^{[k]}_1\big) \end{BMAT} \right ]\circ \big[U\T{T}\widetilde{y}\big]\right),
\end{equation*}
where ``$\circ$'' denotes a \textit{Hadamard product} (i.e., an entry-wise product of matrices). It is worth noting that the SVD of $\mathcal{A}_U$ need only be computed once, at the beginning of the numerical solver, since the left and right singular vectors, as well as the singular values, do not depend on the current iteration, nor do they depend on the regularization parameter in \eqref{eqRidge2} Therefore they can be used for the entire grid of regularization parameters in the $\SP$ process. This significantly speeds up the computation.

\subsection{Analytical solution to \eqref{Update2}}
We start with
\begin{equation*}
C^{[k+1]}:=\ \argmin {C}\bigg\{\,\frac12\Big\|\,\underbrace{D^{[k]}  + \frac{\dual_2^{[k]}}{\delta^{[k]}_2}}_{M^{[k]}}\,-\, C\,\Big\|_F^2\ +\ \frac{\lambda_N}{\delta^{[k]}_2}\|C\|_*\,\bigg\}.
\end{equation*}
To construct a fast algorithm for finding the solution, we use the following well known result.
\begin{proposition}
\label{propOrt}
For any matrix $M$ with singular value decomposition $M = U\operatorname{diag}(s)V\T{T}$, the optimal solution to
\begin{equation*}
\argmin{C} \Big\{\,\frac12\|C - M\|_F^2\ +\ \lambda \|C\|_*\,\Big\}
\end{equation*}
shares the same singular vectors as $M$ and its singular values are $s_i^*= (s_i - \lambda)_+:=\max\{s_i - \lambda, 0\}$.
\end{proposition}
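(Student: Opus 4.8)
The plan is to recognize this as the singular value (matrix soft‑)thresholding identity and verify it through the subgradient optimality condition. First I would observe that $\Phi(C):=\tfrac12\|C-M\|_F^2+\lambda\|C\|_*$ is strictly convex, being the sum of the strictly convex quadratic $\tfrac12\|C-M\|_F^2$ and the convex function $\lambda\|C\|_*$; hence a minimizer exists and is unique (this also justifies the article ``the'' optimal solution), and $\hat C$ minimizes $\Phi$ if and only if $0\in\hat C-M+\lambda\,\partial\|\hat C\|_*$, equivalently $M-\hat C\in\lambda\,\partial\|\hat C\|_*$.

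The key analytic input is the description of the subdifferential of the nuclear norm: if $C$ has reduced SVD $C=U_r\operatorname{diag}(\sigma)V_r\T{T}$ with all $\sigma_i>0$, then $\partial\|C\|_*=\{U_rV_r\T{T}+W:\ U_r\T{T}W=0,\ WV_r=0,\ \|W\|_{\mathrm{op}}\le 1\}$, where $\|\cdot\|_{\mathrm{op}}$ denotes the spectral norm. I would either cite this standard fact or recall its short proof from the duality $\|C\|_*=\max_{\|Z\|_{\mathrm{op}}\le 1}\langle Z,C\rangle$ combined with von Neumann's trace inequality. This is the step I expect to be the main (essentially the only) obstacle; the remainder is bookkeeping.

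Next I would exhibit the candidate. Order $s_1\ge\cdots\ge s_n\ge 0$, split the indices into $L=\{i:s_i>\lambda\}$ and its complement, and write $U=[\,U_1\ \ U_0\,]$, $V=[\,V_1\ \ V_0\,]$ with columns grouped accordingly. Set $\hat C=U\operatorname{diag}\big((s_i-\lambda)_+\big)_iV\T{T}=U_1\operatorname{diag}(s_i-\lambda)_{i\in L}V_1\T{T}$; this is a legitimate SVD because $x\mapsto(x-\lambda)_+$ is nondecreasing, so the thresholded values remain nonincreasing, and in particular $\hat C$ has the singular vectors and singular values claimed in the statement. A direct computation then gives $M-\hat C=U\operatorname{diag}\big(\min(s_i,\lambda)\big)_iV\T{T}=\lambda\,U_1V_1\T{T}+U_0\operatorname{diag}(s_i)_{i\notin L}V_0\T{T}$.

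Finally I would check optimality. Put $W:=\lambda^{-1}U_0\operatorname{diag}(s_i)_{i\notin L}V_0\T{T}$, so $M-\hat C=\lambda\big(U_1V_1\T{T}+W\big)$. Orthogonality of the column blocks of $U$ and $V$ yields $U_1\T{T}W=0$ and $WV_1=0$, while $\|W\|_{\mathrm{op}}=\lambda^{-1}\max\{s_i:s_i\le\lambda\}\le 1$ (with the empty max read as $0$, which also covers the degenerate cases $\lambda=0$ and $\hat C=0$). Hence $U_1V_1\T{T}+W\in\partial\|\hat C\|_*$ and $M-\hat C\in\lambda\,\partial\|\hat C\|_*$, so $\hat C$ is the unique minimizer, which is exactly the asserted form. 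As an alternative to the subgradient route, one can instead lower‑bound $\Phi(C)$ directly via von Neumann's inequality $\langle C,M\rangle\le\sum_i\sigma_i(C)\sigma_i(M)$, reducing the problem to the scalar minimizations $\min_{t\ge 0}\big(\tfrac12t^2-(s_i-\lambda)t\big)$ with minimizers $(s_i-\lambda)_+$, and noting the bound is attained precisely when $C$ shares singular vectors with $M$.
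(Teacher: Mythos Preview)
Your argument is correct: the subgradient characterization of the nuclear norm together with the explicit check that $M-\hat C=\lambda\big(U_1V_1\T{T}+W\big)$ lies in $\lambda\,\partial\|\hat C\|_*$ is exactly the standard verification of the singular value thresholding operator, and your handling of uniqueness via strict convexity is clean. The only cosmetic point is the $\lambda=0$ edge case, where defining $W$ via $\lambda^{-1}$ is formally awkward; there the claim is trivial ($\hat C=M$), so it is simplest to dispose of it separately rather than fold it into the general formula.

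As for comparison with the paper: there is nothing to compare. The paper does not prove Proposition~\ref{propOrt} at all; it introduces it as a ``well known result'' and immediately applies it to obtain the $C$-update. Your write-up therefore supplies a full proof where the paper merely cites the fact. Either of the two routes you sketch (the subgradient optimality check, or the von Neumann trace-inequality lower bound reducing to scalar soft-thresholding) is a standard way to establish this identity; the first is closer to how the result is usually presented in the matrix-completion literature (e.g., Cai--Cand\`es--Shen), while the second is slightly more self-contained since it avoids quoting the form of $\partial\|\cdot\|_*$.
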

Now, let $M^{[k]} = U^{[k]}\operatorname{diag}(s^{[k]}){V^{[k]}}\T{T}$ be the SVD of $M^{[k]}$. Thanks to Proposition \ref{propOrt}, we can recover $C^{[k+1]}$ in two steps
\begin{equation*}
\left\{
\begin{array}{l}
S^*:= \operatorname{diag}\Big(\,\big[\ (s^{[k]}_1 - \frac{\lambda_N}{\delta^{[k]}_2})_+\ ,\ldots,\ (s^{[k]}_p - \frac{\lambda_N}{\delta^{[k]}_2})_+\,\big]\T{T}\,\Big)\\
C^{[k+1]} = U^{[k]}S^*{V^{[k]}}\T{T}
\end{array}
\right..
\end{equation*}

\subsection{Analytical solution to \eqref{Update3}}
We use the following result.
\begin{proposition}
Let $D$, $K$ and $L$ be matrices with matching dimensions. Then, 
\begin{equation*}
\delta_1\big\|D -K\big\|_F^2\ +\ \delta_2\big\|D -L\big\|_F^2=(\delta_1+\delta_2)\bigg\|\,D -\frac{\delta_1K+\delta_2L}{\delta_1 + \delta_2}\,\bigg\|_F^2\ +\ \varphi(K, L, \delta_1, \delta_2),
\end{equation*}
where $\varphi(K, L, \delta_1, \delta_2)$ does not depend on $D$.
\end{proposition}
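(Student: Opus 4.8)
The plan is to prove this by a straightforward ``completing the square'' computation: expand every squared Frobenius norm via the Frobenius inner product, using $\|X\|_F^2 = \langle X, X\rangle$, and then match the terms on the two sides that depend on $D$.

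First I would expand the left-hand side. Writing $\|D - K\|_F^2 = \|D\|_F^2 - 2\langle D, K\rangle + \|K\|_F^2$, and similarly with $L$ in place of $K$, the left-hand side becomes
\[
(\delta_1 + \delta_2)\|D\|_F^2 \ -\ 2\langle D,\ \delta_1 K + \delta_2 L\rangle \ +\ \delta_1\|K\|_F^2 + \delta_2\|L\|_F^2 .
\]
Next I would expand the first term on the right-hand side in the same way. Setting $N := (\delta_1 K + \delta_2 L)/(\delta_1 + \delta_2)$ (legitimate since $\delta_1 + \delta_2 > 0$ in the ADMM setup), one gets
\[
(\delta_1 + \delta_2)\|D - N\|_F^2 = (\delta_1 + \delta_2)\|D\|_F^2 \ -\ 2(\delta_1 + \delta_2)\langle D, N\rangle \ +\ (\delta_1 + \delta_2)\|N\|_F^2 .
\]
Since $(\delta_1 + \delta_2)N = \delta_1 K + \delta_2 L$, the quadratic-in-$D$ term $(\delta_1 + \delta_2)\|D\|_F^2$ and the linear-in-$D$ term $-2\langle D,\, \delta_1 K + \delta_2 L\rangle$ on the two sides coincide exactly. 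Hence the identity holds with
\[
\varphi(K, L, \delta_1, \delta_2) \ =\ \delta_1\|K\|_F^2 + \delta_2\|L\|_F^2 \ -\ \frac{\|\delta_1 K + \delta_2 L\|_F^2}{\delta_1 + \delta_2},
\]
which visibly does not involve $D$, as required.

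There is no real obstacle here; the only things to be careful about are the bookkeeping of the cross terms and the use of $\delta_1 + \delta_2 > 0$. An alternative, essentially equivalent route is to view the whole expression as a strictly convex quadratic function of $D$ in the inner-product space of $p\times p$ matrices, observe that its gradient vanishes precisely at $D = N$ and its Hessian is $2(\delta_1+\delta_2)\mathbf{I}$, and then write the quadratic in vertex form about $D = N$; this defers the explicit form of the constant $\varphi$ until the very end and can be read off by evaluating both sides at $D = N$.
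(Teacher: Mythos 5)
Your proof is correct. It differs from the paper's only in mechanics: the paper proves the identity by computing the gradient with respect to $D$ of the difference
$\delta_1\|D-K\|_F^2+\delta_2\|D-L\|_F^2-(\delta_1+\delta_2)\big\|D-(\delta_1K+\delta_2L)/(\delta_1+\delta_2)\big\|_F^2$
and observing that it vanishes identically, so the difference is a function of $K,L,\delta_1,\delta_2$ alone; the constant $\varphi$ is never written out. You instead expand both sides in the Frobenius inner product and complete the square directly, which establishes the same cancellation of the quadratic and linear terms in $D$ and, as a bonus, produces the explicit value
$\varphi=\delta_1\|K\|_F^2+\delta_2\|L\|_F^2-\|\delta_1K+\delta_2L\|_F^2/(\delta_1+\delta_2)$
(which one can check is nonnegative, being $\tfrac{\delta_1\delta_2}{\delta_1+\delta_2}\|K-L\|_F^2$). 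The ``alternative route'' you sketch at the end---vanishing gradient at $D=N$ plus constant Hessian---is in substance exactly the paper's argument. Both versions rely on $\delta_1+\delta_2>0$, which you correctly note holds in the ADMM setting; there is no gap.
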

\begin{proof}
Simply observe that
\begin{equation*}
\begin{split}
&\nabla_D \bigg\{\delta_1\big\|D -K\big\|_F^2\ +\ \delta_2\big\|D -L\big\|_F^2 - (\delta_1+\delta_2)\Big\|D -\big(\delta_1K+\delta_2L\big)/\big(\delta_1+\delta_2\big)\Big\|_F^2\bigg\} =\\
&2\delta_1(D-K) + 2\delta_2(D-L) - 2(\delta_1+\delta_2)\Big[D - \big(\delta_1K+\delta_2L\big)/\big(\delta_1+\delta_2\big)\Big]\ =\\
&2\delta_1(D-K) + 2\delta_2(D-L) - 2(\delta_1+\delta_2)D + 2\big(\delta_1K+\delta_2L\big) =\ 0.
\end{split}
\end{equation*}
This proves the claim.
\end{proof}
Denote $\Delta^{[k]}: = \delta_1^{[k]} + \delta_2^{[k]}$. The above proposition reduces problem \eqref{Update3} to lasso regression under an orthogonal design matrix
\begin{equation*}
D^{[k+1]}=\ \argmin {D}\bigg\{\,\frac12\bigg\|\,\underbrace{\big(\delta_1^{[k]}B^{[k+1]}  + \delta_2^{[k]}C^{[k+1]}-\dual_1^{[k]} - \dual_2^{[k]}\big)/\Delta^{[k]}}_{Q^{[k+1]}}\,-\,D\,\bigg\|_F^2\ +\ \frac{\lambda_L}{\Delta^{[k]}}\Big\|\vvec(W \circ D)\Big\|_1\,\bigg\}.
\end{equation*}
The closed-form solution in this situation is well known and can be formulated simply as
\begin{equation*}
D^{[k+1]}_{ij}= \operatorname{sgn}\Big(Q^{[k+1]}_{ij}\Big)\cdot\bigg(\big|Q^{[k+1]}_{ij}\big|\, -\, \frac{\lambda_LW_{ij}}{\Delta^{[k]}}\bigg)_+,\quad\textrm{for}\quad i,j\in\{1,\ldots,p\}.
\end{equation*}
\section{Degenerate situations}\label{app:degenerate}
\vspace{-10pt}

\subsection{The case with $\lambda_L=0$}
We consider the problem \eqref{objective} with $\lambda_L=0$. We introduce a new variable, i.e., a $p\times p$ matrix $C$, to create the (equivalent) constrained  version of the problem with separable objective function:
\begin{equation*}
\argmin{B, C}\  \big\{f(B)\,+\,g(C)\big\} \qquad \textrm{s. t.}\ \ \ C-B=0.
\end{equation*}
The augmented Lagrangian with scalar $\delta>0$ and dual variable, $\dual\in \mathbb{R}^{p\times p}$, for this problem is
\begin{equation*}
L_{\delta}(B,C; \dual) =f(B)+g(C)+\langle\dual,\, C-B\rangle + \frac{\delta}2\big\|C-B\big\|_F^2,
\end{equation*}
and the ADMM updates for this case take the form
\begin{align}
&B^{[k+1]}:=\ \argmin {B}\bigg\{\,\sum_{i=1}^n\Big(y_i -\langle A_i, B \rangle\Big)^2\ +\ \delta^{[k]}_1\Big\|\,C^{[k]}  + \frac{\dual^{[k]}}{\delta^{[k]}_1}- B\,\Big\|_F^2\,\bigg\},\label{Update1nuc}\\
&C^{[k+1]}:=\ \argmin {C}\bigg\{\,\frac12\Big\|\,B^{[k+1]}- \frac{\dual^{[k]}}{\delta^{[k]}_1}-C\,\Big\|_F^2\ +\ \frac{\lambda_N}{\delta^{[k]}_1}\big\|C\big\|_*\,\bigg\},\label{Update2nuc}\\
&\dual^{[k+1]}: =\ \dual^{[k]} + \delta^{[k]}_1\big(C^{[k+1]}-B^{[k+1]}\big).
\end{align}
For $\lambda_L=0$ the criterion reduces to the method described by Zhou and Li in \cite{Zhou-regularized-2014}.
\vspace{-10pt}

\subsection{The case with $\lambda_N=0$}
We consider the problem \eqref{objective} with $\lambda_N=0$. We again introduce a new variable, i.e., a $p\times p$ matrix $D$, to create the (equivalent) constrained version of the problem with separable objective function:
\begin{equation*}
\argmin{B, D}\  \big\{f(B)\,+\,h(D)\big\} \qquad \textrm{s. t.}\ \ \ D-B=0.
\end{equation*}
The augmented Lagrangian with scalar $\delta>0$ and dual variable, $\dual\in \mathbb{R}^{p\times p}$, for this problem is
\begin{equation*}
L_{\delta}(B,C; \dual) =f(B)+h(D)+\langle\dual, D-B\rangle + \frac{\delta}2\big\|D-B\big\|_F^2,
\end{equation*}
and the ADMM updates for this case take the form
\begin{align}
&B^{[k+1]}:=\ \argmin {B}\bigg\{\,\sum_{i=1}^n\Big(y_i -\langle A_i, B \rangle\Big)^2\ +\ \delta^{[k]}_2\Big\|\,D^{[k]}  + \frac{\dual^{[k]}}{\delta^{[k]}_2}- B\,\Big\|_F^2\,\bigg\},\label{Update1nuc}\\
&D^{[k+1]}:=\ \argmin {D}\bigg\{\,\frac12\Big\|\,B^{[k+1]}- \frac{\dual^{[k]}}{\delta^{[k]}_2}-D\,\Big\|_F^2\ +\ \frac{\lambda_L}{\delta^{[k]}_2}\big\|\vvec(W\circ D)\big\|_1\,\bigg\},\label{Update2nuc}\\
&\dual^{[k+1]}: =\ \dual^{[k]} + \delta^{[k]}_2\big(D^{[k+1]}-B^{[k+1]}\big).
\end{align}
For $\lambda_N=0$ the criterion reduces to the lasso.

\end{document}